\let\csname equation*\endcsname\relax
\let\csname endequation*\endcsname\relax
\newtheorem{theorem}{Theorem}[section]
\newtheorem{lemma}[theorem]{Lemma}
\newtheorem{proposition}[theorem]{Proposition}
\newenvironment{definition}[1][Definition]{\begin{trivlist}
\item[\hskip \labelsep {\bfseries #1}]}{\end{trivlist}}
\newenvironment{example}[1][Example]{\begin{trivlist}
\item[\hskip \labelsep {\bfseries #1}]}{\end{trivlist}}
\newenvironment{remark}[1][Remark]{\begin{trivlist}
\item[\hskip \labelsep {\bfseries #1}]}{\end{trivlist}}
\newcommand{\derv}[1]{\frac{\partial}{\partial #1}}
\newcommand{\deriv}[2]{\frac{\partial #1}{\partial #2}}
\newcommand{\beqn}{\begin{equation}}
\newcommand{\eeqn}{\end{equation}}
\newcommand{\beqnar}{\begin{eqnarray}}
\newcommand{\eeqnar}{\end{eqnarray}}
\newcommand{\contr}{\,\lrcorner\,}
\begin{document}

\title[Advected Invariants in MHD and Gas Dynamics]
{Local and Nonlocal Advected Invariants and Helicities in Magnetohydrodynamics 
and Gas Dynamics I: Lie Dragging Approach}
\author{G.M. Webb${}^1$, B. Dasgupta${}^1$, J. F. McKenzie${}^{1,3}$, 
Q. Hu${}^{1,2}$  and
G. P. Zank${}^{1,2}$}
\address{${}^1$ CSPAR, The University of Alabama Huntsville in
 Huntsville AL 35805, USA}

\address{${}^2$ Dept. of Physics, The University of Alabama in Huntsville, Huntsville 
AL 35899, USA}

\address{${}^3$Department of Mathematics and Statistics,\\
Durban University of Technology,\\ Steve Biko Campus, Durban South Africa\\
and School of Mathematical Sciences,\\
University of Kwa-Zulu, Natal, Durban, South Africa}
 
\ead{gmw0002@uah.edu}


\begin{abstract}
In this paper advected invariants and conservation laws in ideal 
magnetohydrodynamics (MHD) and gas dynamics are obtained using 
Lie dragging techniques. 
There are different classes of invariants that are advected or 
Lie dragged with the flow. 
  Simple examples are the advection of the entropy 
S (a 0-form), and the conservation of magnetic flux (an invariant 2-form advected with the flow). 
The magnetic flux conservation law is equivalent to Faraday's equation.   
The gauge condition for the magnetic helicity to be advected with the flow 
is determined.   
Different variants of the helicity in ideal fluid dynamics and MHD 
including:  fluid helicity, cross helicity and magnetic helicity 
are investigated.  
The fluid helicity conservation law and the cross helcity conservation 
law in MHD are derived for the case of a barotropic gas. If 
the magnetic field lies in the constant entropy surface, then the gas 
pressure can depend on both the entropy and the density. In 
these cases the conservation laws are local conservation laws. 
For non-barotropic gases, we obtain nonlocal conservation laws 
for fluid helicity and cross 
helicity by using  Clebsch variables. These nonlocal conservation laws 
are the main new results of the paper.   
Ertel's theorem and potential vorticity, the Hollman invariant,  
and the Godbillon Vey invariant for special  flows for which 
the magnetic helicity is zero are also discussed. 
\end{abstract}

\pacs{95.30.Qd,47.35.Tv,52.30.Cv,45.20Jj,96.60.j,96.60.Vg}
\submitto{{\it J. Phys. A., Math. and Theor.},19th June 2013,
 Revised \today}
\noindent{\it Keywords\/}: magnetohydrodynamics, advection, invariants, 
Hamiltonian

\maketitle


\section{Introduction}

Advected invariants and conservation laws in ideal magnetohydrodynamics and gas dynamics have wide applications in space plasma physics, fusion and laboratory 
plasmas, and fluid dynamics. In space plasma physics and solar physics, 
magnetic helicity is of major interest in describing the topology and 
linkage of magnetic fields (e.g. Berger and Field (1984), 
Moffatt (1969), Moffatt and Ricca (1992)
,Woltjer (1958),
Berger (1999), Berger and Ruzmaikin (2000),
 Bieber et al. (1987), Yahalom and Lynden Bell (2008), Yahalom (2013),
Webb et al. (2010a,b)). Ruzmaikin and Akhmetiev (1994), Akhmetiev and 
Ruzmaikin (1995) and Berger (1990) investigated 
higher order knot invariants in MHD, 
known as Sato-Levine invariants. These invariants are 
associated with 2-sided surfaces known as Seifert surfaces 
in which the knot is imbedded. They 
illustrated the theory using the Whitehead link and  Borromean rings.  
Kuznetsov and Ruban (1998,2000) and Kuznetsov (2006) have developed 
the Hamiltonian dynamics of vortex and magnetic field lines 
in hydrodynamic type systems. They use a mixed Eulerian and Lagrangian 
description (the so-called vortex line representation (VLR)). The VLR mapping 
describes the compressiblity of the vortex lines even in incompressible 
flows, and can be used to describe the merging and collapse of the vortex 
lines. This work is clearly important in the development of 
topological fluid dynamics and Hamiltonian fluid dynamics, 
 but is not explicitly addressed 
in the present paper.

The present paper gives a synposis of advected invariants in 
magnetohydrodynamics (MHD) and gas dynamics.  A short 
account of this work is given by Webb et al. (2013) in a preliminary 
conference paper. 
The discussion is based in part on the paper 
of Tur and Yanovsky (1993), who use the ideas of Lie dragging of 
vectors, $n$-forms ($n=1,2,3$), scalars and tensors  
(i.e. conserved physical quantities), and the algebra of exterior differential 
forms to determine the advected invariants.

The concept of Lie dragging of advected invariants in MHD and gas dynamics 
was investigated by Tur and Janovsky (1993) who extended previous work 
by Moiseev et al. (1982). These ideas also appear in recent work on advected 
invariants and conservation laws in MHD and hydrodynamical models 
by Cotter and Holm (2012). Cotter and Holm (2012) use an approach based on the 
Eulerian, Euler Poincar\'e formulation of ideal hydrodynamical models. 
Cotter and Holm (2012) derive conservation laws associated with 
Noether's second theorem, due to fluid relabeling symmetries. Their analysis 
does not require the introduction of Lagrangian variables. However, 
to obtain some of the conservation laws  
 it is necessary to include Lagrange multipliers to take into 
account constraints in their variational principle.  
Hydon and Mansfield (2011)  
 discuss and extend Noether's second theorem by using Lagrange multipliers 
which explicitly shows that for variational problems  
involving free functions and infinite dimensional Lie algebraic structures, 
that there are differential relations between the different Euler operators 
occuring in Noether's second theorem that must be taken into account 
(in Padhye and Morrison (1996a,b) these relations are referred to as 
generalized Bianchi identities). The work of Cotter and Holm (2012) extends
previous techniques used to derive conservation laws due to fluid 
relabelling symmetries. 
 Earlier work by Salmon (1982,1988) and 
Padhye and Morrison (1996a,b) used a Lagrangian fluid dynamics approach. 

Section 2 outlines the model equations. 

Section 3 gives an overview of helicity in ideal fluid mechanics and MHD. The local helicity conservation law in 
ideal fluid mechanics is given for the case of an isobaric equation of 
state for the gas (i.e. the pressure $p=p(\rho)$ is the equation of state for 
the gas). Integral forms of the helicity conservation equation and Ertel's 
theorem  in ideal fluid mechanics are discussed. Conservation laws for 
magnetic helicity and cross helicity in MHD are described. The concept of 
relative helicity in MHD is also described. 

Section 4 outlines a theory for 
advected invariants in ideal fluid mechanics and MHD, based on the Lie 
dragging of invariant geometrical quantities with the fluid (e.g. 
vector fields and $p$-forms where $p=0,1,2,3$ in 3D MHD). The discussion 
is based on the work of Tur and Janovsky (1993) on advected invariants 
in fluid and MHD systems of equations.  We discuss the 
concept of topological charge for invariant advected differential forms 
in fluid dynamics and MHD. We also derive and discuss the Godbillon-Vey 
topological invariant. The Godbillon Vey invariant 
in an MHD flow, arises for example, if 
${\bf A}{\bf\cdot}\nabla\times {\bf A}=0$
where ${\bf A}$ is the magnetic vector potential 
and ${\bf B}=\nabla\times{\bf A}$ 
is the magnetic induction. In such a flow, the magnetic helicity is zero. 
However, there is a higher order topological invariant (the Godbillon Vey 
invariant), which in general is non-zero. Thus a zero magnetic helicity 
field, can still have a non-trivial topology.  The Godbillon Vey invariant 
also occurs in ideal fluid dynamics for flows in which the fluid helicity 
${\bf u}{\bf\cdot}\nabla\times{\bf u}=0$.  

Section 5 gives an overview 
of the use of Clebsch variables in Lagrangian and Hamiltonian fluid mechanics. 
We also discuss the canonical and non-canonical Poisson bracket for MHD 
(e.g. Morrison and Greene (1980,1982), Holm and Kupershmidt (1983a,b)) 
and Weber transformations. 

Section 6 uses a Clebsch variable 
formulation of ideal fluid mechanics to derive a nonlocal helicity 
conservation law (\ref{eq:nl1}) for a fluid with a non-barotropic 
equation of state (i.e. $p=p(\rho,S)$). Clebsch variables are also used 
to derive the nonlocal cross helicity conservation law (\ref{eq:gch1}) 
in MHD. These nonlocal conservation laws for helicity and cross helicity 
are two new results obtained in the present paper. 

Section 7 concludes with a  summary and discussion.

\section{The Model}

The magnetohydrodynamic equations can be written in the form:

\begin{eqnarray}
&&\deriv{\rho}{t}+\nabla{\bf\cdot} (\rho{\bf u})=0, \label{eq:2.1}\\
&&\derv{t}(\rho{\bf u})+\nabla{\bf\cdot}\left[\rho {\bf u}{\bf u}
+\left(p+\frac{B^2}{2\mu}\right){\bf I} -\frac{\bf B B}{\mu}\right]=0,
\label{eq:2.2}\\
&&\deriv{S}{t}+{\bf u}{\bf\cdot}\nabla S=0, 
\label{eq:2.3}\\
&&\deriv{\bf B}{t}-\nabla\times\left({\bf u}\times {\bf B}\right)
+{\bf u}\nabla{\bf\cdot}{\bf B}=0, \label{eq:2.4}
\end{eqnarray}
where $\rho$, ${\bf u}$, $p$,
$S$ and ${\bf B}$ correspond to the gas density, fluid velocity, pressure,
specific entropy, and magnetic induction ${\bf B}$ respectively, and
${\bf I}$ is the unit $3\times 3$ dyadic.
The gas pressure $p=p(\rho,S)$ is a function of the density $\rho$ and
entropy $S$, and $\mu$ is the magnetic permeability. 
Equations (\ref{eq:2.1})-(\ref{eq:2.2}) are the mass and  momentum  
conservation laws, (\ref{eq:2.3}) is the entropy advection equation 
and (\ref{eq:2.4}) is Faraday's equation in the MHD limit. 

In classical MHD, (\ref{eq:2.1})-(\ref{eq:2.4}) are supplemented by
Gauss' law:
\beqn
\nabla{\bf\cdot}{\bf B}=0. \label{eq:2.5}
\eeqn
which implies the non-existence of magnetic monopoles.

It is useful to keep in mind the first law of thermodynamics:
\beqn
TdS=dQ=dU+pdV\quad\hbox{where}\quad V=\frac{1}{\rho}, \label{eq:2.7}
\eeqn
where $U$ is the internal energy per unit mass and $V=1/\rho$ is the specific
volume. To be more exact, the first law of thermodynamics should be  
 $\delta Q=dU+\delta W$, 
in order to emphasize that in general, the internal 
energy density is a perfect differential, whereas $\delta Q$ and $\delta W$
are not necessarily perfect differentials (e.g. for the case of heat 
conduction and dissipation $\delta Q$ is not a perfect differential). 
 However, for ideal 
gases, with no dissipation $\delta Q=TdS$ where $T$ is the temperature 
and $dS$ is a perfect differential. The integrability of 
Pfaffian differential forms of the form $Xdx+Ydy+Zdz=0$ is only 
assured iff ${\bf X}{\bf\cdot}\nabla\times{\bf X}=0$ where 
${\bf X}=(X,Y,Z)$. This result plays an important role in 
Caratheodory thermodynamics (e.g Sneddon (1957)). 
Using the internal energy per unit volume $\varepsilon=\rho U$
instead of $U$, (\ref{eq:2.7}) may be written as:
\beqn
TdS=\frac{1}{\rho}\left(d\varepsilon-hd\rho\right)\quad\hbox{where}\quad
h=\frac{\varepsilon+p}{\rho}, \label{eq:2.8}
\eeqn
is the enthalpy of the gas. Assuming $\varepsilon=\varepsilon(\rho,S)$
 (\ref{eq:2.8}) gives  the formulae:
\beqn
 \rho T=\varepsilon_S, \quad h=\varepsilon_\rho,
\quad p=\rho\varepsilon_{\rho}-\varepsilon, \label{eq:2.9}
\eeqn
relating the temperature $T$, enthalpy $h$ and pressure $p$ to the
internal energy density $\varepsilon(\rho,S)$. From (\ref{eq:2.8})
we obtain:
\beqn
TdS=dh-\frac{1}{\rho}dp\quad \hbox{and}\quad
-\frac{1}{\rho}\nabla p=T\nabla S-\nabla h, \label{eq:2.10}
\eeqn
which is useful in the further analysis of the momentum equation for the
system.

It is worth noting that the different variants of helicity (fluid 
 helicity, cross helicity and magnetic helicity) are pseudo-scalars 
as they reverse sign under space reversal ${\bf x}\to -{\bf x}$ 
(e.g. ${\bf u}{\bf\cdot}\nabla\times{\bf u}$ has this property). 
This is an important property of helicity as it measures parity 
symmetry breaking. A parity invariant flow has zero helicity. 
\section{Helicity in Fluids and MHD}

In this section we give a brief overview of helicity and vorticity 
conservation laws in ideal fluid dynamics and MHD. For 
ideal barotropic fluids, with no magnetic field we discuss the helicity 
conservation law involving the helicity density 
$h_f={\bf u}{\bf\cdot}\boldsymbol{\omega}$, where 
$\boldsymbol{\omega}=\nabla\times{\bf u}$ is the fluid vorticity.  
The integral $H_f=\int_{V_m} h_f \ d^3x$  
over a volume $V_m$ moving with the fluid, is known as the fluid 
helicity. It plays a key role in topological fluid dynamics in the description 
of the linkage of the vorticity streamlines (e.g. Moffatt (1969), 
Arnold and Khesin (1998)).  
The integral $H_m=\int_{V_m} {\bf A}{\bf\cdot} {\bf B}\ d^3x$ 
in MHD is known as the magnetic helicity, where ${\bf B}=\nabla\times{\bf A}$ 
is the magnetic induction and ${\bf A}$ is the magnetic vector potential. 
It describes the linkage of the 
magnetic field lines (Woltjer (1958), Berger and Field (1984)).  
A further quantity of interest in MHD is the cross 
helicity $H_c=\int_{V_m} {\bf u}{\bf\cdot}{\bf B}\ d^3x$ which describes 
the topology of the magnetic field and fluid velocity streamlines. 
One of the main aims of the present paper is to show how these fluid 
and MHD invariants are obtained by Lie dragging 
invariant differential forms and scalars with the flow 
(Tur and Janovsky 1993).  We also describe 
helicity and cross helicity conservation laws in MHD and gas dynamics.

\subsection{Helicity in Fluid Dynamics}

In a barotropic, ideal fluid, in which the pressure $p=p(\rho)$, is independent of the entropy $S$, the helicity density 
\beqn
h_f={\bf u}{\bf\cdot}\boldsymbol{\omega}\quad
\hbox{where}\quad \boldsymbol{\omega}=\nabla\times{\bf u}, \label{eq:hf1}
\eeqn
satisfies the helicity conservation law:
\beqn
\deriv{h_f}{t}+\nabla{\bf\cdot}\left[{\bf u}h_f
+\left(h-\frac{1}{2}|{\bf u}|^2\right) \boldsymbol{\omega}\right]=0. 
\label{eq:hf2}
\eeqn
The total helicity for a fluid volume $V_m$ moving with the 
fluid 
is conserved following the flow (e.g. Moffatt 1969). 
Thus, for a barotropic fluid,
\beqn
\frac{dH_f}{dt}=0\quad \hbox{where}\quad H_f=\int_{V_m} {\bf u}{\bf\cdot}
\nabla\times{\bf u}\ d^3x, \label{eq:hf3}
\eeqn
where $H_f$ is the total helicity of the fluid in the volume $V_m$. 
For the conservation law (\ref{eq:hf3}) to apply, it is required that 
that the component of the vorticity $\omega_n$ normal to the boundary 
$\partial V_m$ vanish on $\partial V_m$, i.e. 
$\omega_n=\boldsymbol{\omega}{\bf\cdot}{\bf n}=0$ on $\partial V_m$.
Here $d/dt=\partial/\partial t+{\bf u}{\bf\cdot}\nabla$ is 
the total Lagrangian 
time derivative following the flow. 

To derive (\ref{eq:hf2}), note that for a ideal 
gas, the momentum equation for the fluid:
\beqn
\deriv{\bf u}{t}+{\bf u}{\bf\cdot}\nabla{\bf u}=-\frac{1}{\rho}\nabla p, 
\label{eq:hf4}
\eeqn
can be written in the form:
\beqn
\deriv{\bf u}{t}-{\bf u}\times\boldsymbol{\omega}=T\nabla S
-\nabla\left(h+\frac{1}{2}|{\bf u}|^2\right). \label{eq:hf5}
\eeqn
For the case of a barotropic equation of state, there is no $T\nabla S$ 
term in (\ref{eq:hf5}). 

Taking the curl of the momentum equation (\ref{eq:hf5}) gives the 
vorticity equation:
\beqn
\deriv{\boldsymbol{\omega}}{t}-\nabla\times({\bf u}\times\boldsymbol{\omega})
=\nabla T\times\nabla S. \label{eq:hf9}
\eeqn
Taking the scalar product of $\boldsymbol{\omega}$ with the momentum equation 
(\ref{eq:hf5}) and adding the scalar product of ${\bf u}$ with the vorticity 
equation (\ref{eq:hf9})  gives the equation
\beqn
\deriv{({\bf u\cdot}\boldsymbol{\omega})}{t} 
+\nabla{\bf\cdot}
\left[({\bf u}{\bf\cdot}\boldsymbol{\omega}){\bf u}
+\left(h-\frac{1}{2}|{\bf u}|^2\right)\boldsymbol{\omega}\right]
=\boldsymbol{\omega}{\bf\cdot}(T\nabla S)
+{\bf u}{\bf\cdot}\nabla T\times\nabla S. \label{eq:hf10}
\eeqn
For a barotropic fluid there are no entropy gradients i.e. $\nabla S=0$, 
and in that case (\ref{eq:hf10}) reduces to the helicity conservation law 
(\ref{eq:hf2}). 
The integral conservation law (\ref{eq:hf3}) can be derived by using the 
identity: 
\beqn
\frac{d}{dt}\left(\frac{{\bf u}{\bf\cdot}\boldsymbol{\omega}}{\rho}\right)=-\frac{\boldsymbol{\omega}}{\rho}{\bf\cdot}\nabla\left(h-\frac{1}{2}|{\bf u}|^2\right). \label{eq:hf12}
\eeqn
The detailed proof of $dH_f/dt=0$ is given by Moffatt (1969). 

There are other conservation laws for ideal fluids. For 
example Kelvin's theorem implies that the circulation 
$\Gamma=\oint_{C} {\bf u}{\bf\cdot} d{\bf x}$ is conserved following the 
flow, for an ideal, barotropic fluid, where $C$ is a closed path moving 
with the fluid, i.e. $d\Gamma/dt=0$, where 
$d/dt=\partial/\partial t+{\bf u}{\bf\cdot}\nabla$ is the Lagrangian 
time derivative moving with the flow (this result also holds if there is 
a conservative, external gravitational field present).  However, 
the circulation is not 
conserved if there are entropy gradients in the flow, in 
which case $d\Gamma/dt=\int_{A}(\nabla T\times\nabla S){\bf\cdot}{\bf n}dA$, 
where 
$A$ is the area enclosing $C$ with normal ${\bf n}$. 

\begin{theorem}{\bf Ertel's Theorem}
Ertel's theorem in ideal fluid mechanics 
states that the potential vorticity 
$q=\boldsymbol{\omega}{\bf\cdot}\nabla S/\rho$  is a scalar invariant advected
with the flow, i.e., 
\beqn
\frac{d}{dt}\left(\frac{\boldsymbol{\omega}{\bf\cdot}\nabla S}{\rho}\right)=0, 
\label{eq:ertel1}
\eeqn
where $\boldsymbol{\omega}=\nabla\times{\bf u}$ is the fluid vorticity.
\end{theorem}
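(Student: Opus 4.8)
The plan is to reduce the claim to two advective (Lagrangian) evolution equations --- one for the specific vorticity $\boldsymbol{\omega}/\rho$ and one for the entropy gradient $\nabla S$ --- and then to differentiate the product $q=(\boldsymbol{\omega}/\rho){\bf\cdot}\nabla S$ and watch everything cancel. Throughout I write $d/dt=\partial/\partial t+{\bf u}{\bf\cdot}\nabla$. First I would recast the vorticity equation (\ref{eq:hf9}) in advective form. Expanding $\nabla\times({\bf u}\times\boldsymbol{\omega})$ and using $\nabla{\bf\cdot}\boldsymbol{\omega}=0$ yields
\beqn
\frac{d\boldsymbol{\omega}}{dt}=(\boldsymbol{\omega}{\bf\cdot}\nabla){\bf u}
-\boldsymbol{\omega}(\nabla{\bf\cdot}{\bf u})+\nabla T\times\nabla S.
\eeqn
Writing the mass equation (\ref{eq:2.1}) as $\nabla{\bf\cdot}{\bf u}=-\rho^{-1}d\rho/dt$, the compressional term is cancelled by the contribution from $d(1/\rho)/dt$, and one obtains the stretching law
\beqn
\frac{d}{dt}\left(\frac{\boldsymbol{\omega}}{\rho}\right)
=\left(\frac{\boldsymbol{\omega}}{\rho}{\bf\cdot}\nabla\right){\bf u}
+\frac{1}{\rho}\nabla T\times\nabla S.
\eeqn

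Next I would differentiate the entropy advection equation (\ref{eq:2.3}). Taking the gradient of $dS/dt=0$ and regrouping the time and space derivatives gives, in components,
\beqn
\frac{d}{dt}(\partial_j S)=-(\partial_j u_i)(\partial_i S).
\eeqn

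With these two results the theorem follows by the product rule:
\beqn
\frac{dq}{dt}=\frac{d}{dt}\left(\frac{\boldsymbol{\omega}}{\rho}\right){\bf\cdot}\nabla S
+\frac{\boldsymbol{\omega}}{\rho}{\bf\cdot}\frac{d}{dt}(\nabla S).
\eeqn
Two cancellations do all the work. The baroclinic source contributes $\rho^{-1}(\nabla T\times\nabla S){\bf\cdot}\nabla S$, which vanishes identically as a scalar triple product with a repeated factor; this is precisely the step that lets Ertel's theorem survive for a non-barotropic gas, even though the vorticity itself is then no longer frozen into the flow. The surviving vortex-stretching contribution is, in components, $\rho^{-1}\omega_j(\partial_j u_i)(\partial_i S)$, and it is cancelled exactly by the entropy-gradient term $\rho^{-1}\omega_j\,d(\partial_j S)/dt=-\rho^{-1}\omega_j(\partial_j u_i)(\partial_i S)$. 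Hence $dq/dt=0$.

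I expect the main obstacle to be purely bookkeeping: aligning the free and dummy indices of the two stretching terms so that their mutual cancellation is manifest, and being careful that the velocity-gradient factor produced by differentiating $\nabla S$ matches the one produced by stretching $\boldsymbol{\omega}/\rho$. An alternative, coordinate-free route would obtain the same result from the Lie-dragging framework of Section 4: the entropy $S$ is an advected $0$-form and $\rho\,d^3x$ is an advected $3$-form, so $dS$ and $\boldsymbol{\omega}/\rho$ behave as dual advected objects whose contraction $q$ is automatically invariant. I would keep that interpretation in reserve as an independent check on the component computation.
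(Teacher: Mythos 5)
Your proposal is correct and follows essentially the same route as the paper's proof: recasting the vorticity equation into an advective stretching law for $\boldsymbol{\omega}/\rho$ via mass continuity, deriving $d(\nabla S)/dt=-(\nabla{\bf u})^T{\bf\cdot}\nabla S$ from entropy advection, and adding the two contracted equations so the stretching terms cancel. Your explicit remarks on the vanishing triple product and the index bookkeeping, and the Lie-dragging cross-check, are just more detailed versions of steps the paper leaves implicit.
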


\begin{proof}
The vorticity equation (\ref{eq:hf9}) may be written in the form:
\beqn
\frac{d\boldsymbol{\omega}}{dt}+\boldsymbol{\omega}\nabla{\bf\cdot u}
-\boldsymbol{\omega}{\bf\cdot}\nabla{\bf u}=\nabla T\times\nabla S. 
\label{eq:ertel2}
\eeqn
Using the mass continuity equation (\ref{eq:2.1}) in the form 
$\nabla{\bf\cdot}{\bf u}=-(d\rho/dt)/\rho$ in (\ref{eq:ertel2}) 
gives the equation:
\beqn
\frac{d}{dt}\left(\frac{\boldsymbol{\omega}}{\rho}\right) 
-\frac{\boldsymbol{\omega}}{\rho}{\bf\cdot}\nabla{\bf u}
=\frac{\nabla T\times\nabla S}{\rho}. \label{eq:ertel3}
\eeqn
Also using the entropy advection equation $dS/dt=0$, we obtain:
\beqn
\frac{d}{dt}\nabla S=\nabla\left(\frac{dS}{dt}\right)
-(\nabla{\bf u})^T{\bf\cdot}\nabla S
\equiv -(\nabla{\bf u})^T{\bf\cdot}\nabla S.
\label{eq:ertel5}
\eeqn
Taking the scalar product of (\ref{eq:ertel3}) with $\nabla S$ and the 
scalar product of (\ref{eq:ertel5}) with $\boldsymbol{\omega}/\rho$ 
and adding the resultant equations gives Ertel's theorem (\ref{eq:ertel1}). 
This completes the proof.
\end{proof}

\subsection{Helicity in MHD}
Magnetic helicity in space and fusion plasmas has been investigated
 as a key quantity describing the topology
of magnetic fields
(e.g. Moffatt (1969,1978), Moffatt and Ricca (1992), Berger and Field (1984), 
Finn and Antonsen (1985,1988), Rosner et al. (1989), Low (2006),
Longcope and Malanushenko (2008)) 
 The magnetic helicity $H$ is defined as:
\beqn
H=\int_V\boldsymbol{\omega}_1\wedge d\boldsymbol{\omega}_1
=\int_V d^3x {\bf A\cdot B}, \label{eq:3.1}
\eeqn
where $\boldsymbol{\omega}_1={\bf A}{\bf\cdot}d{\bf x}$ 
is the magnetic vector potential one-form, $d\boldsymbol{\omega}_1={\bf B}{\bf\cdot} d{\bf S}$ is the magnetic field two-form;  
 ${\bf B}=\nabla\times{\bf A}$ is the magnetic induction,
 ${\bf A}$ is the magnetic vector potential and $V$ is the isolated volume
 in which the magnetic field configuration of interest is located. The
magnetic helicity is an invariant of magnetohydrodynamics (MHD)
(Els\"asser (1956), Woltjer (1958), Moffat (1969,1978)).
In (\ref{eq:3.1}) it is assumed that the normal magnetic 
field $B_n={\bf B\cdot n}$ vanishes on the boundary $\partial V$ of the 
volume $V$. The magnetic helicity (\ref{eq:3.1}) when expressed as an 
integral of  $\boldsymbol{\omega}_1\wedge d\boldsymbol{\omega}_1$  
is known as the Hopf invariant. 

For open-ended magnetic field configurations,
 a gauge independent definition of relative helicity
for a magnetic field configuration in a volume V (Finn and Antonsen (1985,1988))
is:
\beqn
H_r=\int_V d^3x \left({\bf A}_1+{\bf A}_2\right){\bf\cdot}
\left(\bf{B}_1-{\bf B}_2\right), \label{eq:3.2}
\eeqn
(see also Berger and Field (1984)
for an equivalent definition)
where ${\bf B}_1=\nabla\times{\bf A}_1$ describes the magnetic field of interest and ${\bf B}_2=\nabla\times {\bf A}_2$ is a reference magnetic field
with the same normal flux as ${\bf B}_1$ (in many applications the reference
magnetic field is a potential magnetic field, i.e. $\nabla\times{\bf B}_2=0$).
Relative  helicity is now commonly used  in the modeling of solar magnetic 
structures (Longcope and Malanushenko (2008), Low (2006)).  
Bieber et al. (1987) 
and Webb et al. (2010a) investigated the relative helicity of the Parker 
interplanetary spiral magnetic field. 
Berger and Ruzmaikin (2000) investigated the 
injection of magnetic helicity into the solar wind from the photospheric base
based on observational data and the differential rotation of the Sun. 
Webb et al. (2010b) obtained the relative helicity of shear and 
torsional Alfv\'en waves.

\subsubsection{Magnetic helicity conservation equation}
For ideal MHD,
$h_m={\bf A}{\bf\cdot}{\bf B}$
 satisfies the conservation law:
\begin{equation}
\deriv{h_m}{t}+\nabla{\bf\cdot}
\left[{\bf u}h_m+{\bf B}(\phi_E-{\bf A}{\bf\cdot}{\bf u})\right]=0, 
\label{eq:h2}
\end{equation}
where
\begin{equation}
{\bf E}=-\nabla\phi_E-\deriv{\bf A}{t}=-{\bf u}\times{\bf B},\quad 
{\bf B}=\nabla\times{\bf A}. \label{eq:h3}
\end{equation}
In (\ref{eq:h3}) ${\bf E}=-{\bf u}\times {\bf B}$ is Ohm's law 
in the infinite conductivity limit.

To derive (\ref{eq:h2}) use Faraday's law 
 ${\bf B}_t+\nabla\times{\bf E}=0$
and ${\bf B}=\nabla\times{\bf A}$ to obtain the equations: 
\beqnar
&&{\bf B}_t+\nabla\times {\bf E}=0,\label{eq:mh1a}\\
&&{\bf A}_t+{\bf E}+\nabla\phi_E=0, \label{eq:mh1b}
\eeqnar
where $\phi_E$ is the electric field potential. Note that 
the curl of (\ref{eq:mh1b}) gives Faraday's law (\ref{eq:mh1a}). 
 Combining (\ref{eq:mh1a})-(\ref{eq:mh1b})  
gives the equation:
\beqn
\derv{t}({\bf A}{\bf\cdot}{\bf B})+\nabla{\bf\cdot}
\left({\bf E}\times{\bf A}+\phi_E {\bf B}\right)
=-2{\bf E}{\bf\cdot}{\bf B}, \label{eq:mh4}
\eeqn
Noting  ${\bf E}=-{\bf u}\times{\bf B}$ and ${\bf E}{\bf\cdot}{\bf B}=0$
in (\ref{eq:mh4}) gives helicity
conservation equation (\ref{eq:h2}) for ideal MHD. 

The total magnetic helicity
$H_m=\int_{V_m} {\bf A}{\bf\cdot}{\bf B}\ d^3 x$ moving with the flow
is invariant, i.e. $dH_m/dt=0$ provided ${\bf B}{\bf\cdot}{\bf n}=0$
on the boundary surface $\partial V_m$ of the volume $V_m$ 
 The proof is similar to 
the proof that the helicity in ideal fluid mechanics:
$H_f=\int_{V_m}{\bf u}{\bf\cdot}\nabla\times {\bf u}\ d^3x$ is conserved
following the flow (see (\ref{eq:hf3}) and also Moffatt (1969,1978)).

Consider  the choice of the gauge for ${\bf A}$. 
By setting ${\bf B}=\nabla\times{\bf A}$,
 (\ref{eq:h3}) may be written in the form:
\beqn
\frac{d{\bf A}}{dt}=\nabla({\bf A}{\bf\cdot}{\bf u}-\phi_E)
-(\nabla{\bf u})^T{\bf\cdot}{\bf A}, \label{eq:2.27}
\eeqn
 where
$d/dt=\partial/\partial t+{\bf u}{\bf\cdot}\nabla$ is
the Lagrangian time derivative. Using  the gauge transformation:
\beqn
{\tilde{\bf A}}={\bf A}+\nabla\Lambda\quad\hbox{where}\quad
\frac{d\Lambda}{dt}+{\bf A}{\bf\cdot}{\bf u}-\phi_E=0, \label{eq:2.29}
\eeqn
in (\ref{eq:2.27}), 
Faraday's equation, for ${\tilde{\bf A}}$ reduces to:
\beqn
\frac{d{\tilde{\bf A}}}{dt}+(\nabla{\bf u})^T{\bf\cdot}{\tilde{\bf A}}=0,
\label{eq:2.30}
\eeqn
Equation (\ref{eq:2.30}) can also be written in the form:
\beqn
\deriv{\tilde{\bf A}}{t}
-{\bf u}\times(\nabla\times\tilde{\bf A})
+\nabla({\bf u}{\bf\cdot}\tilde{\bf A})=0. \label{eq:2.30a}
\eeqn
The latter equation is equivalent to (\ref{eq:h3}) for the electric 
field ${\bf E}=-{\bf u}\times{\bf B}$ in the form:
\beqn
{\bf E}=-\nabla({\bf u}{\bf\cdot}\tilde{\bf A})-\deriv{\tilde{\bf A}}{t}, 
\label{eq:2.30b}
\eeqn
which shows that the electric potential in the new gauge is $\tilde{\phi}_E
={\bf u}{\bf\cdot}\tilde{\bf A}$. The 
evolution equation (\ref{eq:2.30a}) is equivalent to the equation 
$d/dt(\tilde{\bf A}{\bf\cdot}d{\bf x})=0$ (see Section 4), which shows 
that the 1-form $\alpha=\tilde{\bf A}{\bf\cdot}d{\bf x}$ is Lie dragged 
by the flow. 
Combining (\ref{eq:2.30})
with Faraday's equation for ${\bf B}$ gives the helicity transport equation:
\beqn
\deriv{\tilde{h}}{t}+\nabla{\bf\cdot}({\tilde h} {\bf u})=0,
\label{eq:2.32}
\eeqn
where ${\tilde h}={\tilde{\bf A}}{\bf\cdot}{\bf B}$.

The gauge choice (\ref{eq:2.29}) appears to be the best choice of    
the gauge potential $\Lambda$ in the formulation of magnetic helicity
and magnetic helicity related conservation laws (Section 4), since it 
fits in with the idea that $\tilde{\bf A}{\bf\cdot}d{\bf x}$ is an 
invariant, Lie dragged one form, and gives the simplest continuity 
equation for the helicity conservation law (\ref{eq:2.32}). 

A question that naturally arises is what happens to 
 the advection equation 
(\ref{eq:2.30a}) in the limit as $|{\bf u}|\to 0$. 
Assuming $\partial \tilde{\bf A}/\partial t\sim \epsilon \hat{\bf n} g({\bf x})$
and $|{\bf u}|=\epsilon$, then in the limit as $\epsilon\to 0$, 
(\ref{eq:2.30a}) reduces to the equation:
\beqn
\epsilon\left[\hat{\bf n}g-\hat{\bf u}\times{\bf B}
+\nabla A_{\parallel}\right]=0, 
\quad\hbox{where}\quad A_\parallel={\bf A}{\bf\cdot}\hat{\bf u}, 
\label{eq:2.34}
\eeqn
where $\hat{\bf u}$ is a unit vector in the direction of ${\bf u}$. 
From (\ref{eq:2.34}) we obtain:
\beqnar
&&{\bf B}_\perp={\bf B}-({\bf B}{\bf\cdot}\hat{\bf u})\hat{\bf u}=
\left(\nabla A_{\parallel}+g({\bf x}){\bf n}\right)\times\hat{\bf u}, 
\nonumber\\ 
&&{\bf B}=B_\parallel\hat{\bf u}+{\bf B}_\perp,\quad {\bf B}_\perp
=\left(\nabla A_{\parallel}+g({\bf x}){\bf n}\right)\times\hat{\bf u}.
\label{eq:2.35}
\eeqnar
Equation (\ref{eq:2.35}) needs to be supplemented by Gauss's 
law $\nabla{\bf\cdot}{\bf B}=0$.  The split up of the field into parallel 
and perpendicular 
components is reminiscent of the poloidal-toroidal decomposition of 
the field. This decomposition of the field is 
useful in describing magnetic helicity, in which
 the helicity in given in terms of the linkage of the toroidal and 
poloidal magnetic fluxes, which is independent of the gauge for 
${\bf A}$ (e.g. Kruskal and Kulsrud (1958), Low (2006), Webb et al. (2010a,b)).
A simple example, where this decomposition occurs is the case of  
 magnetostatic equilibria 
in plane Cartesian geometry with an ignorable coordinate $z$, with 
$\hat{u}=(0,0,1)$ and with $g=0$, gives the magnetic field 
representation 
\beqn
{\bf B}_{\perp}=(B_x,B_y,0)
=\left(\deriv{A}{y},-\deriv{A}{x},0\right), \label{eq:2.36}
\eeqn
where $A=A_z$, which is consistent with solutions of 
the Grad-Shafranov equation 
for magnetostatic equilibria with an ignorable coordinate $z$. Our main 
concern here is what are the implications of taking $|{\bf u}|\to 0$
in the evolution equation (\ref{eq:2.30a}) for $\tilde{\bf A}$, 
and whether it is consistent to use this gauge in this limit? 
This issue needs to be investigated in further detail, but will not be 
pursued further in the present paper.

\subsubsection{Cross Helicity in MHD}
The cross helicity (for $p=p(\rho)$) is defined as the integral:
\begin{equation}
C[u,B]=\int_D d^3x\ {\bf u\cdot B}. \label{eq:crh1}
\end{equation}
It is a Casimir of barotropic MHD ($p=p(\rho)$),  
i.e. $\{F,C\}=0$, for any functional $F$ where $\{.,.\}$ 
is MHD Poisson brackets (Padhye and Morrison (1996)). 
 It is  also referred to as a rugged invariant of 
MHD (Matthaeus et al. (1982)). 
In order for the cross helicity 
to be an MHD invariant, it is implicitly assumed that 
${\bf B}{\bf\cdot n}=0$ on the boundary $\partial V$ of the volume $V$ 
of interest. It is straightforward  to adapt the argument (\ref{eq:hf3}) 
used to show the invariance or constancy of the fluid helicity $H_f$ 
to show that $dH_c/dt=0$ where $H_c=\int_{V_m} {\bf u\cdot B}\ d^3 x$
is the cross helicity for a volume $V_m$ moving with the fluid. 

The cross helicity density conservation law (for $p=p(\rho)$) is:
\begin{equation}
\deriv{h_c}{t}+\nabla{\bf\cdot}\left[{\bf u}h_c
+{\bf B}\left(h-\frac{1}{2}|{\bf u}|^2\right)\right]=0  \quad\hbox{where}\quad 
h_c={\bf u\cdot B}, \label{eq:crh2}
\end{equation}
and $h=(p+\varepsilon)/\rho$ is the gas enthalpy.
Equation (\ref{eq:crh2}) also holds
if $p=p(\rho,S)$ and ${\bf B}{\bf\cdot}\nabla S=0$.
Conservation 
law (\ref{eq:crh2}) is due to fluid relabelling symmetries. 
To derive  
(\ref{eq:crh2}) we use the Faraday and momentum equations:  
\beqn
\deriv{\bf B}{t}-\nabla\times({\bf u}\times{\bf B})=0,
\quad \frac{d{\bf u}}{dt}
=-\frac{1}{\rho}\nabla p+\frac{{\bf J}\times{\bf B}}{\rho}, \label{eq:crossh1} 
\eeqn
where ${\bf J}=\nabla\times{\bf B}/\mu_0$ and $d{\bf u}/dt=(\partial_t
+{\bf u\cdot}\nabla){\bf u}$ and we assume $\nabla{\bf\cdot}{\bf B}=0$.  
Using the first law of thermodynamics (\ref{eq:2.10}) 
 the momentum equation for the 
MHD fluid can be written in the form:
\beqn
{\bf u}_t-{\bf u}\times{\boldsymbol\omega}
+\nabla\left(h+\frac{1}{2}|{\bf u}|^2\right)
-\frac{{\bf J}\times{\bf B}}{\rho}- T\nabla S=0, \label{eq:crossh4}
\eeqn
where ${\bf u}_t=\partial {\bf u}/\partial t$. 
Taking the scalar product of Faraday's equation with ${\bf u}$ 
and the scalar product of the momentum equation (\ref{eq:crossh4})
with ${\bf B}$ gives the cross helicity equation:
\beqn
\derv{t}({\bf u\cdot B})+\nabla{\bf\cdot}\left[{\bf E}\times{\bf u}
+\left(h+(1/2)|{\bf u}|^2\right){\bf B}\right]=T{\bf B}{\bf\cdot}\nabla S.
\label{eq:crossh7}
\eeqn
If ${\bf B}{\bf\cdot}\nabla S=0$, equation (\ref{eq:crossh7}) reduces to the
cross-helicity conservation law (\ref{eq:crh2}).

The  helicity conservation equation 
 (\ref{eq:hf2}) holds for a barotropic gas, 
in which there are no entropy gradients. Similarly, the cross helicity 
conservation law (\ref{eq:crh2}) holds provided ${\bf B}{\bf\cdot}\nabla S=0$.


\section{Advected Invariants}
Tur and Janovsky (1993) developed a formalism for  
 geometrical objects ${\bf G}$ (tensors, p-forms and vectors) 
that are advected with the flow in ideal 
gas dynamics and MHD. The basic requirement for ${\bf G}$ 
to be advected or Lie dragged with the flow ${\bf u}$ is that 
\beqn
\left(\derv{t}+{\bf u\cdot}\nabla\right){\bf G}\equiv 
\left(\derv{t}+{\cal L}_{\bf u}\right) {\bf G}=0, 
\label{eq:ad1}
\eeqn
where ${\cal L}_{\bf u}$ denotes the Lie derivative with respect to the vector 
field ${\bf u}$. As in the Calculus of exterior differential forms 
and in differential geometry (e.g. Harrison and Estabrook (1971), 
Misner Thorne and Wheeler (1973), 
Fecko (2006)), vector fields ${\bf V}$ and one-forms $\alpha=A_i dx^i
\equiv {\bf A}{\bf\cdot}d{\bf x}$ are dual.  

\subsection{Exterior differential forms and vector fields}
Useful discussions of the algebra of exterior differential forms may be found 
for example in the books by Frankel (1997),Bott and Tu (1982),    
Misner at al. (1973), Marsden and Ratiu 
(1994), Holm (2008a,b), Flanders (1963). A useful short summary is 
given in the paper by Harrison and Estabrook (1971), who develop 
a geometric approach to invariance groups and solutions of 
partial differential systems using Cartan's geometric formulation of 
partial differential equations in the language 
of exterior differential forms and vector fields.  
This formalism was used by  Tur and Janovsky (1993) in their work 
on advected invariants in fluids and MHD plasmas.  
 
The vector field 
${\bf V}$ in 3D Cartesian geometry is thought of as a directional 
derivative operator:
\beqn
{\bf V}=V^x\derv{x}+V^y\derv{y}+V^z\derv{z}, \label{eq:ad2}
\eeqn
and the one-form ${\bf A}{\bf\cdot}d{\bf x}$ has the form:
\beqn
\boldsymbol{\omega}={\bf A}{\bf\cdot}d{\bf x}
=A_x dx+A_y dy+A_z dz. \label{eq:ad3}
\eeqn
The inner product of vector ${\bf u}$
and  1-form $\omega$ is the scalar or dot product:
\begin{align}
\langle{\bf u},\omega\rangle&=
\left\langle u^x\derv{x}+u^y\derv{y}+u^z\derv{z}, 
A_x dx+A_y dy+A_z dz\right\rangle\nonumber\\ 
&=u^x A_x+u^y A_y+u^z A_z
\equiv {\bf u}{\bf\cdot}{\bf A}. \label{eq:ad4}
\end{align}
Equivalent notations for the inner product are:
\begin{equation} 
\langle{\bf u},\omega\rangle
\equiv {\bf u}
\mathrel{\lrcorner}\omega
\equiv {\bf i}_{\bf u}\omega,
\label{eq:ad5}
\end{equation}
where ${\bf i}_{\bf u}$ denotes inner product of contravariant
field ${\bf u}$ with a covariant field $\omega$.
\begin{equation}
\left\langle\derv{x^i},d x^j\right\rangle=\delta_{ij}, 
\quad \hbox{e.g.}\quad 
\left\langle\derv{x},dx\right\rangle=1,
\quad \left\langle\derv{x},dy\right\rangle=0, \label{eq:ad6}
\end{equation}

In a differentiable manifold of dimension $n$, a $p$-form 
$\boldsymbol{\omega}$ can be thought of as a completely antisymmetric covariant $pth$ rank tensor, described by its anti-symmetric components 
$\omega_{\mu_1\ldots\mu_p}$. The $p$-form in general can be expressed 
in the form:
\beqn
\boldsymbol{\omega}=\omega_{\mu_1\ldots\mu_p}dx^{\mu_1}\wedge dx^{\mu_2}\wedge\ldots
\wedge dx^{\mu_p}, \label{eq:ad7}
\eeqn
One forms can be thought of as elements of the cotangent space at a point 
of a manifold, involving the mapping of the vector fields in 
the tangent space onto the reals. In (\ref{eq:ad7}) $\wedge$ denotes a 
non-commutative anti-symmetrized multiplication. 
At each point of the manifold, 
the forms have a Grassmann algebra defined by the properties of 
the wedge product operator $\wedge$. 
Over the manifold we may use the three operations of
of exterior differentiation, $d$, of contraction with a vector
field ${\bf V}$ (a contravariant vector $V^\mu$),
${\bf V}\contr\boldsymbol{\omega}$ , and of Lie derivative
with respect to ${\bf V}$, ${\cal L}_{\bf V}\boldsymbol{\omega}$. These
operations give forms of rank $p+1$, $p-1$ and $p$ respectively.


Misner et al. (1973) and Schutz (1980) and other texts emphasize the 
geometrical picture of the commutator of two vector fields 
in terms of the closure 
of the quadrilateral associated with the two vector fields. In Misner et al. 
(1973), the Faraday two form is visualized in terms of an egg-crate 
like structure. For example, the magnetic field two-form 
${\bf B}{\bf\cdot}d{\bf S}$, has a geometric structure associated with 
the oriented surface element $d{\bf S}$ and the vector field ${\bf B}$, 
which describes magnetic flux tubes. The lie dragging of vectors, differential 
forms and tensors involves the concept of parallel transport, in which the 
change in the geometric quantity of interest at a point along a Lie orbit or 
trajectory, must be pulled back to the initial point involved in the derivative
to make sense. It is also useful in some applications to use the dual of 
vector fields and forms using the hodge star formalism (e.g. Flanders (1963), 
Frankel (1997) , Fecko (2006)).

Some of the basic properties of the wedge product, $\wedge$, 
of the exterior derivative $d$ and the Lie derivative ${\cal L}_{\bf V}$ 
are given below. 

 Let $\omega$ be a $p$-form, $\sigma$ a $q$-form, $f$ a $0$-form,
$c$ a constant, ${\bf V}$ and ${\bf W}$ be vector fields, then:
\begin{align}
&\omega\wedge\sigma=(-1)^{pq}\sigma\wedge\omega,\nonumber\\ 
&d(\omega\wedge\sigma)=d\omega\wedge\sigma+(-1)^p\omega\wedge d\sigma, 
\nonumber\\ 
&dd\omega=0,\quad dc=0, \label{eq:ex14}\\
&\left({\bf V}+{\bf W}\right)\mathrel{\lrcorner}\omega
={\bf V}\mathrel{\lrcorner}\omega+{\bf W}\mathrel{\lrcorner}\omega, 
\quad (f{\bf V})\mathrel{\lrcorner}\omega
=f({\bf V}\mathrel{\lrcorner}\omega), \nonumber\\
&{\bf V}\mathrel{\lrcorner}(\omega\wedge\sigma)
=({\bf V}\mathrel{\lrcorner}\omega)\wedge\sigma
+(-1)^p\omega\wedge({\bf V}\mathrel{\lrcorner}\sigma). 
\label{eq:ex15}
\end{align}
Cartan's magic formula for the Lie derivative of the $p$ form $\omega$:
\begin{equation}
{\cal L}_{\bf V}\omega={\bf V}\mathrel{\lrcorner}d\omega
+d({\bf V}\mathrel{\lrcorner}\omega), \label{eq:ex16}
\end{equation}
is a particularly useful formula in applications. 
 Other Lie derivative formulae are:
\begin{align}
&{\cal L}_{\bf V}f={\bf V}\mathrel{\lrcorner}df, \quad 
{\cal L}_{\bf V}d\omega=d\left({\cal L}_{\bf V}\omega\right), \nonumber\\
&{\cal L}_{\bf V}\left(\omega\wedge\sigma\right)
=\left({\cal L}_{\bf V}\omega\right)\wedge\sigma
+\omega\wedge\left({\cal L}_{\bf V}\sigma\right), \nonumber\\
&{\cal L}_{\bf V}\left({\bf W}\mathrel{\lrcorner}\omega\right)=
[{\bf V},{\bf W}]\mathrel{\lrcorner}\omega
+{\bf W}\mathrel{\lrcorner}\left({\cal L}_{\bf V}\omega\right). 
\label{eq:ex17}
\end{align}

\leftline{\bf Exterior Derivative Formula Relations (vector notation)}

\beqnar
&df=\nabla f{\bf\cdot}d{\bf x},\nonumber\\
&d({\bf V}{\bf\cdot}d{\bf x})
=(\nabla\times{\bf V}){\bf\cdot}d{\bf S}\quad\hbox{(Stokes thm)}, \nonumber\\
&d({\bf A}{\bf\cdot}d{\bf S})=(\nabla{\bf\cdot}{\bf A}) dV
\quad\hbox{(Gauss thm)}, \nonumber\\
&d^2f=d(\nabla f{\bf\cdot}d{\bf x})
=(\nabla\times\nabla f){\bf\cdot}d{\bf S}=0\quad\hbox{(Poincar\'e Lemma)}, 
\nonumber\\
&d^2({\bf V}{\bf\cdot}d{\bf x})=d[(\nabla\times{\bf V}){\bf\cdot}d{\bf S}]
=\nabla{\bf\cdot}(\nabla\times {\bf V}) dV=0\quad \hbox{(Poincar\'e Lemma)}
\nonumber\\
&{\bf X}{\mathrel{\lrcorner}}({\bf V\cdot}d{\bf x})={\bf V\cdot X}, 
\nonumber\\
&{\bf X}{\mathrel{\lrcorner}}({\bf B\cdot}d{\bf S})
=-({\bf X}\times{\bf B}){\bf\cdot} d{\bf x},\nonumber\\
&{\bf X}{\mathrel{\lrcorner}}dV={\bf X\cdot}d{\bf S}, \nonumber\\
&d({\bf X}{\mathrel{\lrcorner}}dV)=d({\bf X\cdot}d{\bf S})
=(\nabla{\bf\cdot X}) dV. \label{eq:exterior1}
\eeqnar
$dd\omega=0$ for a $p$-form is known as the Poincar\'e Lemma.
It implies the equality of mixed second order partial derivatives.
If $\omega=d\alpha$ the form is exact and $d\omega=dd\alpha=0$.
A form with $d\omega=0$ is closed. Not all closed forms
are exact. Exactness means 'integrable'.

\leftline{\bf Lie Derivative Relations (vector notation)}
\beqnar
&{\cal L}_{\bf X}f={\bf X}{\mathrel{\lrcorner}}df
={\bf X}{\bf\cdot}\nabla f,\nonumber\\
&{\cal L}_{\bf X}({\bf V\cdot}d{\bf x})=\left(-{\bf X}
\times(\nabla\times{\bf V})+\nabla({\bf X}{\bf\cdot}V)\right)
{\bf\cdot}d{\bf x}, \nonumber\\
&{\cal L}_{\bf X}({\bf B\cdot}d{\bf S})=\left(-\nabla\times({\bf X}\times{\bf B})+{\bf X}(\nabla{\bf\cdot}{\bf B})\right){\bf\cdot}d{\bf S}, \nonumber\\
&{\cal L}_{\bf X}(fdV)=\nabla{\bf\cdot}({\bf X} f) dV, \label{eq:lie1}
\eeqnar
For vector fields ${\bf X}$ and ${\bf Y}$
\begin{equation}
{\cal L}_{\bf X}{\bf Y}=[{\bf X},{\bf Y}]
=({\bf X}{\bf\cdot}{\bf Y}-{\bf Y}{\bf\cdot}{\bf X}){\bf\cdot}
\nabla\equiv{\hbox{\rm ad}}_{\bf X}({\bf Y})\label{eq:lie2}
\end{equation}
Here $[{\bf X},{\bf Y}]$ is the left Lie bracket.

For a 1-form density $\sf{m}={\bf m\cdot}d{\bf x}\otimes dV$:
\begin{equation}
{\cal L}_{\bf X} \sf{m}=
\left(\nabla{\bf\cdot}({\bf X}\otimes {\bf m})
+(\nabla {\bf X})^T{\bf\cdot}{\bf m}\right){\bf\cdot}d{\bf x}\otimes dV
=:\sf{ad}_{\bf X}^*\sf{m}. \label{eq:lie3}
\end{equation}
The pairing between the one form density $\sf{m}$ 
and the vector field ${\bf u}$ is defined by the inner product:
\begin{equation}
\langle\sf{m},{\bf u}\rangle
=\int_\Omega {\bf u}{\mathrel{\lrcorner}}{\bf m}\ dV. \label{eq:lie4}
\end{equation}
Vector fields can be either left or right invariant vector fields. 
Thus, associated with the group transformation ${\bf x}=g{\bf x}_0$, 
the right invariant vector field 
${\bf u}=\dot{g}{\bf x}_0=\dot{g} g^{-1}{\bf x}$ defines the 
right invariant vector field ${\bf u}=\dot{g}g^{-1}$. The corresponding 
left invariant version of the same vector field is ${\bf v}=g^{-1}\dot{g}$.  
The right and left Lie brackets are related by:
 $[{\bf U},{\bf V}]_R=-[{\bf U},{\bf V}]_L$. The left
Lie bracket is used in (\ref{eq:lie2}). The right Lie bracket  used in
(\ref{eq:lie3}) is given by:
\begin{equation}
ad_{\bf U}({\bf V})=[{\bf U},{\bf V}]_R=\left({\bf V}{\bf\cdot}\nabla {\bf U}
-{\bf U\cdot}\nabla{\bf V}\right){\bf\cdot}\nabla. \label{eq:lie5}
\end{equation}
A more detailed discussion of the difference between right and left 
vector fields of a Lie algebra are given by Marsden and Ratiu (1994), 
Holm (1998), Holm (2008a,b) 
and Fecko (2006).  

\leftline{\bf Lie dragging of forms and vector fields}
Useful formulas for the Lie dragging of 0-forms, 1-forms, 2-forms, 3-forms 
and vector fields are given below. 
These formulae are particularly useful in describing advected 
invariants. 

\noindent{For 0-forms or functions $I$:}
\begin{equation}\frac{dI}{dt}
=\deriv{I}{t}+{\bf u}{\bf\cdot}\nabla I=0. \label{eq:liedr1}
\end{equation}
For 1-forms: ${\bf S}{\bf\cdot}d{\bf x}$
\begin{equation}
\frac{d}{dt}\left({\bf S}{\bf\cdot}d{\bf x}\right)=\left(\deriv{\bf S}{t}
-{\bf u}\times(\nabla\times {\bf S})
+\nabla({\bf u}{\bf\cdot}{\bf S})\right){\bf\cdot}d{\bf x}=0, \label{eq:liedr2}
\end{equation}
For 2-forms ${\bf B}{\bf\cdot}d{\bf S}$:
\begin{equation}
\frac{d}{dt}\left({\bf B}{\bf\cdot}d{\bf S}\right)=\left(
\deriv{\bf B}{t}-\nabla\times({\bf u}\times{\bf B})
+{\bf u}(\nabla{\bf\cdot}{\bf B})\right){\bf\cdot}d{\bf S}=0. 
\label{eq:liedr3}
\end{equation}
For 3-forms $\rho dx\wedge dy\wedge dz$:
\begin{equation}
\frac{d}{dt}\left(\rho dx\wedge dy\wedge dz\right)=
\left(\deriv{\rho}{t}+\nabla{\bf\cdot}(\rho {\bf u})\right) 
dx\wedge dy\wedge dz=0.
\label{eq:liedr4}
\end{equation}
 For vector fields (the dual of one-forms): ${\bf J}=J^i\nabla_i$:
\begin{equation}
\frac{d{\bf J}}{dt}=\deriv{\bf J}{t}+\left[{\bf u},{\bf J}\right]=0, 
\quad\hbox{where}\quad  
[{\bf u},{\bf J}]=
\left({\bf u}{\bf\cdot}\nabla J^i-{\bf J}{\bf\cdot}\nabla u^i\right)\nabla_i,
 \label{eq:liedr5} 
\end{equation}
is the Lie bracket of ${\bf u}$ and ${\bf J}$.
There are many invariants which are advected with the flow involving 
${\bf A}$, ${\bf B}$, $S$, and $\rho$ (e.g. Tur and Janovsky (1993).

\subsection{Applications}
 
Consider the quantities:
\begin{equation}
{\bf S}'=\nabla S({\bf x},t),\quad I'=\frac{{\bf A}{\bf\cdot B}}{\rho}, 
\quad \rho'={\bf A}{\bf\cdot}{\bf B}. \label{i1}
\end{equation}
One can show that 
\begin{equation}
\frac{d}{dt} I'=0, \quad 
\frac{d}{dt}\left({\bf S}'{\bf\cdot}d{\bf x}\right)=0, 
\quad \frac{d}{dt} ({\bf B}{\bf\cdot}d{\bf S})=0,
\quad \frac{d}{dt}\left(\rho'dx\wedge dy\wedge dz\right)=0,  \label{eq:i2}
\end{equation}
where 
\begin{equation}
\frac{d}{dt}=\derv{t}+{\bf u}{\bf\cdot}\nabla\equiv \derv{t}+{\cal L}_{\bf u}, 
\label{eq:i3}
\end{equation}
is the Lagrangian or advective time derivative  following the flow,  
and ${\cal L}_{\bf u}$ denotes the Lie derivative with respect to the vector 
field ${\bf u}$.  Here $I'$ is a scalar or $0-$form,  
$\nabla S{\bf\cdot}d{\bf x}$ is a 1-form, ${\bf B}{\bf\cdot}d{\bf S}$ 
is a 2-form and $\rho'dx\wedge dy\wedge dz$ is a 3-form, which are advected 
invariants that are Lie dragged by the flow (i.e. these quantities 
remain invariant moving with the flow).  
The advection invariance of the Faraday 2-form ${\bf B}{\bf\cdot} d{\bf S}$ 
is equivalent to Faraday's equation (\ref{eq:2.4}).  
There are many other invariants. Some integral invariants are:
\beqnar
&&\Gamma^1_1=\oint_{\gamma(t)}\Phi {\bf A\cdot}d{\bf l},\quad
\Gamma^2_1=\int_{\bf S(t)} \Phi {\bf B}{\bf\cdot} d{\bf S}', \quad
I^3_2=\int_{\Omega(t)} \Phi ({\bf A\cdot B})\ d^3 x,\nonumber\\
&&I_3^4=\int_{\Omega(t)}\Phi {\bf A}{\bf\cdot}
\left[\nabla S\times
\nabla\left(\frac{{\bf A}{\bf\cdot}{\bf B}}{\rho}\right)\right]\ d^3x,
\nonumber\\
&&\Phi=\Phi\biggl(\frac{{\bf A}{\bf\cdot}{\bf B}}{\rho}, S,
\frac{\bf B}{\bf A\cdot B}{\bf\cdot}\nabla
\left(\frac{\bf A\cdot B}{\rho}\right),
\frac{\bf B}{\rho}{\bf\cdot}\nabla
\left(\frac{{\bf B\cdot}\nabla S}{\rho}\right)\ldots\biggr), \label{eq:i4}
\eeqnar
where $\Phi$ is an arbitrary function of its arguments.

\subsection{Lie Dragging}
\begin{example}{\bf 1.}
Consider the results of Lie dragging the 
Faraday 2-form:
\begin{equation}
\beta=B_x dy\wedge dz+B_y dz\wedge dx
+B_zdx\wedge dy\equiv {\bf B}{\bf\cdot} d{\bf S}, \label{eq:22}
\end{equation}
where ${\bf u}=u^x\partial_x+u^y\partial_y+u^z\partial_z
={\bf u}{\bf\cdot}\nabla$ is the vector field representing the fluid 
velocity (here we use the notation of vector fields and one-forms 
used in modern differential geometry (e.g. Misner et al. (1973)), 
in which the base vectors for contravariant vector fields
are written as $\partial_{x^i}\equiv {\bf e}_i$ and the base vectors 
for one-forms or covariant vector fields are written as $dx^i\equiv {\bf e}^i$) 
We use Cartan's magic formula:
\begin{equation}
{\cal L}_{\bf u}(\beta)={\bf u}\mathrel{\lrcorner} d\beta
+d({\bf u}\mathrel{\lrcorner}\beta). \label{eq:23}
\end{equation}
Calculating  $d\beta$ and ${\bf u}\mathrel{\lrcorner} d\beta$,
${\bf u}\mathrel{\lrcorner}\beta$ and $d({\bf u}\mathrel{\lrcorner}\beta)$
gives:
\beqnar
&&d\beta=
\nabla{\bf\cdot}{\bf B}\ dx\wedge dy\wedge dz, \quad
{\bf u}\mathrel{\lrcorner}d\beta
=\nabla{\bf\cdot}{\bf B} \left({\bf u}{\bf\cdot}d{\bf S}\right), \nonumber\\
&&{\bf u}\mathrel{\lrcorner}\beta=-({\bf u}\times {\bf B}){\bf\cdot}d{\bf x},
\quad
d({\bf u}\mathrel{\lrcorner}\beta)=-\nabla\times({\bf u}\times{\bf B})
{\bf\cdot}d{\bf S}. \label{eq:24}
\eeqnar
Using the results (\ref{eq:24}) in Cartan's formula (\ref{eq:23}) gives
\begin{equation}
\left(\derv{t}+{\cal L}_{\bf u}\right) \beta=\left(\deriv{\bf B}{t}
-\nabla\times({\bf u}\times{\bf B})+{\bf u}\nabla{\bf\cdot}{\bf B}\right)
{\bf\cdot} d{\bf S}=0, \label{eq:25}
\end{equation}
which implies Faraday's equation (note $\nabla{\bf\cdot}{\bf B}$
is advected with the
flow if $\nabla{\bf\cdot}{\bf B}\neq 0$ (e.g. in numerical MHD)).
\end{example}

\begin{example}{\bf 2.}
Consider the effect of Lie-dragging the 1-form:
\begin{equation}
\alpha=A_x dx+A_y dy+A_z dz\equiv {\bf A}{\bf\cdot}d{\bf x}. \label{eq:26}
\end{equation}
Using Cartan's magic formula:
\beqn
{\cal L}_{\bf u}(\alpha)={\bf u}\mathrel{\lrcorner}d\alpha
+ d({\bf u}\mathrel{\lrcorner}\alpha),  \label{eq:27}
\eeqn
and the results
\begin{align}
d\alpha&=(\nabla\times{\bf A}){\bf\cdot}d{\bf S}, 
\quad {\bf u}\mathrel{\lrcorner}d\alpha
=-[{\bf u}\times(\nabla\times{\bf A})]{\bf\cdot}d{\bf x}, \nonumber\\
{\bf u}\mathrel{\lrcorner}\alpha&=({\bf u}{\bf\cdot}{\bf A}),\quad 
d({\bf u}\mathrel{\lrcorner}\alpha)
=\nabla({\bf u}{\bf\cdot}{\bf A})
{\bf\cdot}d{\bf x}, \label{eq:28}
\end{align}
we obtain
\begin{equation}
\left(\derv{t}+{\cal L}_{\bf u}\right)\alpha=
\left(\deriv{\bf A}{t}-{\bf u}\times(\nabla\times{\bf A})
+\nabla({\bf u}{\bf\cdot}{\bf A})\right){\bf\cdot}d{\bf x}. \label{eq:29}
\end{equation}
If $(\partial_t+{\cal L}_{\bf u})\alpha =0$, then $\alpha$ is Lie dragged 
with the flow. Comparing (\ref{eq:29}) with (\ref{eq:2.30a})-(\ref{eq:2.30b}) 
it follows that ${\bf A}{\bf\cdot}d{\bf x}$ is Lie dragged by the flow 
if $\phi_E={\bf u}{\bf\cdot}{\bf A}$. In this special gauge 
${\bf A}{\bf\cdot}{\bf B}/\rho$ is an advected invariant
 (see (\ref{eq:maghel2})).  
\end{example}

\begin{example}{\bf 3.}
Faraday's equation (\ref{eq:2.4}) combined with the mass 
continuity equation (\ref{eq:2.1}) implies:
\begin{equation}
\deriv{\bf b}{t}+[{\bf u},{\bf b}]
\equiv \left(\derv{t}+{\cal L}_{\bf u}\right){\bf b}=0  \quad\hbox{where}\quad 
{\bf b}=\frac{\bf B}{\rho},
 \label{eq:31}
\end{equation}
and $[{\bf u},{\bf b}]$ is the Lie bracket of the vector fields 
${\bf u}$ and ${\bf b}$, i.e.
\begin{equation}
{\cal L}_{\bf u}({\bf b})=[{\bf u},{\bf b}]=[{\bf u},{\bf b}]^i\nabla_i 
 =\left(u^s\nabla_sb^i-b^s\nabla_su^i\right)\nabla_i. \label{eq:31a}
\end{equation} 
The vector field ${\bf b}$ is Lie dragged
with the fluid, and hence
\begin{equation}
b^i\derv{x^i}=b^j_0\derv{x^j_0}\quad \hbox{or}\quad  b^i=
x_{ij}b_0^j, \quad x_{ij}=\deriv{x^i}{x_0^j}, 
 \label{eq:32}
\end{equation}
where ${\bf x}={\bf x}({\bf x}_0,t)$ is the Lagrangian map.
 From (\ref{eq:31}) and (\ref{eq:32})
we obtain:
\begin{equation}
B^i=\frac{x_{ij} B_0^j({\bf x}_0)}{J}\quad \hbox{where}\quad J=\det(x_{ij}), 
\quad \label{eq:33}
\end{equation}
which is Cauchy's solution for ${\bf B}$ (e.g. Newcomb (1962),
Parker (1979)).
\end{example} 
\subsubsection{Entropy and mass advection}

The  entropy $S=S(x_0)$, is a 0-form (i.e. a function) which is Lie dragged 
with the fluid, i.e. 
\begin{equation}
\left(\derv{t}+{\cal L}_{\bf u}\right)S\equiv 
\deriv{S}{t}+{\bf u}{\bf\cdot}\nabla S=0, \label{eq:f15}
\end{equation}
which is  (\ref{eq:liedr1}) for the advection of a 0-form $I$, but with 
$ I\to S$. The integral of (\ref{eq:f15}) is $S=S_0({\bf x}_0)$, 
where ${\bf x}_0$ is the Lagrange fluid label for which ${\bf x}={\bf x}_0$
at time $t=0$. 

Consider the mass 3-form:
\begin{equation}
\beta=\rho\ dx\wedge dy\wedge dz. \label{eq:f16}
\end{equation}
  Using Cartan's formula (\ref{eq:23}) 
we find
$d\beta=0$ as $\beta$ is a 3-form in 3D xyz-space, and 
${\bf u}\mathrel{\lrcorner}\beta=\rho {\bf u}{\bf\cdot}d{\bf S}$, 
which implies:
\beqn
{\cal L}_{\bf u}(\beta)=0+d({\bf u}\mathrel{\lrcorner}\beta)
=\nabla{\bf\cdot}(\rho{\bf u})dx\wedge dy\wedge dz\label{eq:f17}
\eeqn
and
\beqn
\left(\derv{t}+{\cal L}_{\bf u}\right)\beta=
\left(\deriv{\rho}{t}+
\nabla{\bf\cdot}(\rho {\bf u})\right)d^3x=0.\label{eq:f18}
\eeqn
Equation (\ref{eq:f18}) is the same as (\ref{eq:liedr4}) for an advected 
3-form $\rho dx\wedge dy\wedge dz$. 
The integral of (\ref{eq:f18}) is:
\beqn 
\rho d^3x=\rho_0d^3x_0,\quad\hbox{where}\quad \rho=\rho_0({\bf x}_0)/J, 
\quad J=\det(x_{ij}). 
 \label{eq:f19}
\eeqn

Thus the mass continuity, entropy advection and Faraday's equation can all be 
expressed in terms of the Lie dragging of forms by the vector field ${\bf u}$.

\subsection{Theorems for Advected Invariants}
\begin{theorem}\label{thm1}
If $\omega^p$ is an invariant, then $\omega^{p+1}=d\omega^p$ 
is an invariant $(p+1)$-form.
\end{theorem}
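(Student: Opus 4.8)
The plan is to read ``invariant'' in the sense of the advection condition (\ref{eq:ad1}), so that $\omega^p$ being invariant means
\beqn
\left(\derv{t}+{\cal L}_{\bf u}\right)\omega^p=0,
\eeqn
and the goal is to establish the same identity with $\omega^p$ replaced by $d\omega^p$. The whole argument rests on two commutation properties: that the exterior derivative $d$ commutes with the partial time derivative $\partial/\partial t$, and that it commutes with the Lie derivative ${\cal L}_{\bf u}$. The second of these is already recorded in (\ref{eq:ex17}) as ${\cal L}_{\bf V}d\omega=d({\cal L}_{\bf V}\omega)$, so only the first requires comment.

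First I would observe that here $d$ denotes the \emph{spatial} exterior derivative, built from $dx^i\wedge\partial_{x^i}$ acting at fixed Eulerian time $t$, while $\partial/\partial t$ differentiates the component functions of the form at fixed spatial position ${\bf x}$. Since the mixed second-order partial derivatives of the (smooth) components satisfy $\partial_t\partial_{x^i}=\partial_{x^i}\partial_t$, and since $d$ does not act on the time slot, one has $\partial_t(d\omega^p)=d(\partial_t\omega^p)$. Combining this with (\ref{eq:ex17}) gives the one-line computation
\beqn
\left(\derv{t}+{\cal L}_{\bf u}\right)d\omega^p
=d\left(\deriv{\omega^p}{t}\right)+d\left({\cal L}_{\bf u}\omega^p\right)
=d\left[\left(\derv{t}+{\cal L}_{\bf u}\right)\omega^p\right]=d(0)=0,
\eeqn
which is exactly the statement that $\omega^{p+1}=d\omega^p$ is an invariant $(p+1)$-form.

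The main obstacle is conceptual rather than computational: one must be careful that the ``$d$'' appearing in the theorem is the spatial exterior differential and that $\partial/\partial t$ is the Eulerian partial time derivative holding ${\bf x}$ fixed (not the Lagrangian $d/dt$), for it is only with this reading that $d$ and $\partial/\partial t$ commute freely. With that understood, the remaining steps are routine, and it is worth noting the geometric content: the result is the form-language analogue of the familiar vector-calculus fact that if a vector potential obeys the advection law then its curl does as well. For instance, the invariance of the one-form $\tilde{\bf A}{\bf\cdot}d{\bf x}$ in (\ref{eq:2.30a}) immediately forces the invariance of the Faraday two-form ${\bf B}{\bf\cdot}d{\bf S}=d(\tilde{\bf A}{\bf\cdot}d{\bf x})$, recovering Faraday's equation (\ref{eq:25}).
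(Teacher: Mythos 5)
Your proof is correct and follows essentially the same route as the paper's: apply $d$ to the invariance condition and use the commutation relations $d\,\partial_t=\partial_t\, d$ and $d\,{\cal L}_{\bf u}={\cal L}_{\bf u}\,d$ (the latter being (\ref{eq:ex17})). The extra care you take in justifying $\partial_t(d\omega^p)=d(\partial_t\omega^p)$ and the Faraday-equation illustration are welcome elaborations but do not change the argument.
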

\begin{proof}
{\bf Proof}\\
$\omega^p$ is invariant implies:
\begin{equation}
\left(\derv{t}+{\cal L}_{\bf u}\right)\omega^p=0. \label{eq:f20}
\end{equation}
Take $d$ of (\ref{eq:f20}).
Use $d\partial_t=\partial_t d$, and $d{\cal L}_{\bf u}={\cal L}_{\bf u} d$
gives (\ref{eq:f20}) but with $\omega^p\to \omega^{p+1}$.
\end{proof}

\begin{example}
The entropy $S$ is a scalar invariant implies
$\alpha=dS=\nabla S{\bf\cdot}d{\bf x}$ is a conserved advected 1-form. 
\end{example}
\begin{theorem}\label{thm2}
Let $\omega_1^k$ and $\omega_2^l$ be
advected $k$ and $l$-form invariants,
then $\omega^{k+l}=\omega_1^k\wedge\omega_2^l$ is an advected $(k+l)$-form
invariant.
\end{theorem}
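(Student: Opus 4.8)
The plan is to show that the advection operator $D:=\partial_t+{\cal L}_{\bf u}$, which defines an advected invariant through $D{\bf G}=0$ in (\ref{eq:ad1}), obeys a graded product rule with respect to the wedge product, and then to apply it directly to $\omega_1^k\wedge\omega_2^l$. The entire content of the theorem is that this operator is a derivation on the exterior algebra, so the result follows once the two pieces of $D$ are each seen to distribute over $\wedge$.

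First I would note that the partial time derivative satisfies the ordinary product rule,
\begin{equation}
\partial_t(\omega_1^k\wedge\omega_2^l)
=(\partial_t\omega_1^k)\wedge\omega_2^l
+\omega_1^k\wedge(\partial_t\omega_2^l),\nonumber
\end{equation}
because in a fixed coordinate basis $\partial_t$ acts only on the component functions, on which it is an ordinary derivation, while the antisymmetric structure of $\wedge$ is time independent. Next I would invoke the Leibniz property of the Lie derivative listed in (\ref{eq:ex17}),
\begin{equation}
{\cal L}_{\bf u}(\omega_1^k\wedge\omega_2^l)
=({\cal L}_{\bf u}\omega_1^k)\wedge\omega_2^l
+\omega_1^k\wedge({\cal L}_{\bf u}\omega_2^l).\nonumber
\end{equation}
Adding the two identities shows that $D$ inherits the same product rule,
\begin{equation}
D(\omega_1^k\wedge\omega_2^l)
=(D\omega_1^k)\wedge\omega_2^l
+\omega_1^k\wedge(D\omega_2^l).\nonumber
\end{equation}
Since $\omega_1^k$ and $\omega_2^l$ are advected invariants, the hypotheses give $D\omega_1^k=0$ and $D\omega_2^l=0$, so both terms on the right vanish and $D(\omega_1^k\wedge\omega_2^l)=0$. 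Hence $\omega^{k+l}=\omega_1^k\wedge\omega_2^l$ is an advected $(k+l)$-form invariant, which completes the proof.

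There is no substantive obstacle; the argument is purely formal once the product rules are assembled. The one point I would treat carefully is that the Lie-derivative Leibniz rule in (\ref{eq:ex17}) carries no Koszul sign, which is correct because ${\cal L}_{\bf u}$ is a degree-zero derivation, in contrast to the exterior derivative, whose graded product rule in (\ref{eq:ex14}) carries the factor $(-1)^p$. I would therefore be explicit that it is the Lie-derivative version, not the $d$-version, that is being used, so that no spurious sign enters. This mirrors the structure of Theorem \ref{thm1}, whose proof instead rests on the commutation ${\cal L}_{\bf u}d=d{\cal L}_{\bf u}$ from (\ref{eq:ex17}); here it is the derivation (Leibniz) property of the same operator that does the work.
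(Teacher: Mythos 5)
Your proof is correct and follows exactly the same route as the paper's own proof: apply the ordinary product rule for $\partial_t$ and the Leibniz rule for ${\cal L}_{\bf u}$ on the wedge product (the paper's equation (\ref{eq:f23})), add them, and conclude that both terms vanish by hypothesis. Your added remarks on the absence of a Koszul sign for ${\cal L}_{\bf u}$ and the contrast with Theorem \ref{thm1} are accurate refinements of the same argument.
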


\begin{proof}
{\bf Proof}:\\
Use
\beqnar
&&\derv{t}\left(\omega_1\wedge\omega_2\right)
=\deriv{\omega_1}{t}\wedge\omega_2+\omega_1\wedge\deriv{\omega_2}{t}\nonumber\\
&&{\cal L}_{\bf u}\left(\omega_1\wedge\omega_2\right)
={\cal L}_{\bf u}\left(\omega_1\right)\wedge\omega_2
+\omega_1\wedge{\cal L}_{\bf u}\left(\omega_2\right), \label{eq:f23}
\eeqnar
to get
\begin{equation}
\left(\derv{t}+{\cal L}_{\bf u}\right)(\omega_1\wedge\omega_2)=0. \label{eq:f24}
\end{equation}
\end{proof}
\begin{example}
$\omega_1={\bf S}_1{\bf\cdot}d{\bf x}$ and
$\omega_2={\bf S}_2{\bf\cdot}d{\bf x}$ are advected one-forms, then
\begin{equation}
\omega_1\wedge \omega_2=({\bf S}_1\times{\bf S}_2){\bf\cdot}d{\bf S}, 
\label{eq:f25}
\end{equation}
is an advected 2-form, and $(\omega_1\wedge\omega_2)/\rho$ is an advected
invariant vector field.
\end{example}

There are further theorems on the formation of advected invariants from known 
advected invariants described by Tur and Janovsky (1993). Some of these 
theorems are listed below, without proof. Cartan's magic formula is 
useful in many of the proofs.
\begin{theorem}\label{thm3}
If $\omega$ is a conserved $p$-form, and ${\bf J}$ is a conserved
vector,  then
$\omega^{(p-1)}={\bf J}\mathrel{\lrcorner}\omega$ is
a conserved $(p-1)$ form.
\end{theorem}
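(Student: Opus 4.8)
The plan is to verify the advection condition for $\omega^{(p-1)}={\bf J}\contr\omega$ directly, namely to show that $(\partial_t+{\cal L}_{\bf u})({\bf J}\contr\omega)=0$, using only the two hypotheses (that $\omega$ is a conserved $p$-form and that ${\bf J}$ is a conserved vector) together with the Leibniz-type rules for differentiating an interior product. First I would treat the two pieces of the advective operator separately. Since the contraction ${\bf J}\contr\omega$ is a pointwise bilinear pairing of the vector field with the form, the partial time derivative acts as a derivation and obeys an ordinary product rule, $\partial_t({\bf J}\contr\omega)=(\partial_t{\bf J})\contr\omega+{\bf J}\contr(\partial_t\omega)$. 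For the Lie-derivative piece I would invoke the identity in the last line of (\ref{eq:ex17}), specialized to ${\bf V}={\bf u}$ and ${\bf W}={\bf J}$, which reads ${\cal L}_{\bf u}({\bf J}\contr\omega)=[{\bf u},{\bf J}]\contr\omega+{\bf J}\contr({\cal L}_{\bf u}\omega)$, where $[{\bf u},{\bf J}]$ is the Lie bracket of the two vector fields.

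Adding these and regrouping the terms carrying ${\bf J}$ against those carrying $\omega$ yields
\[
(\partial_t+{\cal L}_{\bf u})({\bf J}\contr\omega)
=\big(\partial_t{\bf J}+[{\bf u},{\bf J}]\big)\contr\omega
+{\bf J}\contr\big((\partial_t+{\cal L}_{\bf u})\omega\big).
\]
The first bracket vanishes by the vector-field advection condition (\ref{eq:liedr5}), which states precisely that a conserved vector satisfies $\partial_t{\bf J}+[{\bf u},{\bf J}]=0$. The second bracket vanishes because $\omega$ is a conserved $p$-form, so $(\partial_t+{\cal L}_{\bf u})\omega=0$. Hence $(\partial_t+{\cal L}_{\bf u})({\bf J}\contr\omega)=0$, and since contracting a $p$-form against a vector field lowers the rank by one, $\omega^{(p-1)}={\bf J}\contr\omega$ is a conserved $(p-1)$-form, as claimed.

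The only genuine point of care is bookkeeping with sign and bracket conventions: the interior-product Lie-derivative identity in (\ref{eq:ex17}) must be the version in which the bracket term $[{\bf u},{\bf J}]\contr\omega$ enters with the same sign as $\partial_t{\bf J}$ does in the advection equation (\ref{eq:liedr5}), so that the two combine into the single conservation condition $\frac{d{\bf J}}{dt}=0$ rather than leaving a residual bracket term. I expect this consistency check—confirming that the left/right Lie-bracket conventions used in (\ref{eq:ex17}) and (\ref{eq:liedr5}) agree—to be the main (and essentially the only) obstacle; once the conventions are aligned, the result follows from a one-line application of the two Leibniz rules.
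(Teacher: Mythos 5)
Your proof is correct, but note that the paper itself supplies no proof of this theorem: it is one of the results listed explicitly ``without proof,'' with the reader referred to Tur and Yanovsky (1993) and with only the remark that Cartan's magic formula is useful in many of the proofs. Your argument is the natural one and fills that gap cleanly: the product rule for $\partial_t$ on the pointwise pairing, plus the last identity of (\ref{eq:ex17}) with ${\bf V}={\bf u}$, ${\bf W}={\bf J}$, regroups everything into $\bigl(\partial_t{\bf J}+[{\bf u},{\bf J}]\bigr)\contr\omega+{\bf J}\contr\bigl((\partial_t+{\cal L}_{\bf u})\omega\bigr)$, and both brackets vanish by hypothesis. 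Your worry about bracket conventions is also resolved within the paper's own notation: the bracket appearing in the advection condition (\ref{eq:liedr5}) is the left Lie bracket of (\ref{eq:lie2}), i.e.\ $[{\bf u},{\bf J}]={\cal L}_{\bf u}{\bf J}$, which is precisely the bracket entering (\ref{eq:ex17}), so the two terms do combine into $d{\bf J}/dt=0$ with no residual sign. The only stylistic difference from what the paper hints at is that you never need Cartan's formula ${\cal L}_{\bf u}\omega={\bf u}\contr d\omega+d({\bf u}\contr\omega)$; the derivation-property identity for ${\cal L}_{\bf u}$ acting on an interior product does all the work, which is arguably the cleaner route for this particular theorem.
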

\begin{theorem}\label{thm4}
If $\omega$ is an invariant  $p$-form, and ${\bf J}$ is an invariant
vector field, then $\omega'={\cal L}_{\bf J}\omega$
is an invariant $p$-form.
\end{theorem}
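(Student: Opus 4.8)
The plan is to compute $(\partial_t+{\cal L}_{\bf u})\omega'$ directly for $\omega'={\cal L}_{\bf J}\omega$ and show it vanishes, using the two invariance hypotheses together with two operator identities. Writing the hypotheses in the notation of (\ref{eq:ad1}), (\ref{eq:lie2}) and (\ref{eq:liedr5}),
\[
\left(\derv{t}+{\cal L}_{\bf u}\right)\omega=0,
\qquad
\left(\derv{t}+{\cal L}_{\bf u}\right){\bf J}
=\deriv{\bf J}{t}+[{\bf u},{\bf J}]=0,
\]
where I use ${\cal L}_{\bf u}{\bf J}=[{\bf u},{\bf J}]$ from (\ref{eq:lie2}).

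First I would establish a product rule for $\partial_t$ acting on the \emph{time-dependent} operator ${\cal L}_{\bf J}$. Since ${\bf J}={\bf J}({\bf x},t)$ enters Cartan's formula (\ref{eq:ex16}) only through contraction, since $\partial_t$ commutes with $d$, and since $\partial_t({\bf J}\contr\eta)=(\partial_t{\bf J})\contr\eta+{\bf J}\contr(\partial_t\eta)$, differentiating ${\cal L}_{\bf J}\omega={\bf J}\contr d\omega+d({\bf J}\contr\omega)$ in $t$ and regrouping each pair of terms back into a Lie derivative via (\ref{eq:ex16}) gives
\[
\deriv{}{t}\left({\cal L}_{\bf J}\omega\right)
={\cal L}_{\partial_t{\bf J}}\,\omega
+{\cal L}_{\bf J}\left(\deriv{\omega}{t}\right).
\]
Next I would invoke the standard commutator identity $[{\cal L}_{\bf u},{\cal L}_{\bf J}]={\cal L}_{[{\bf u},{\bf J}]}$ for Lie derivatives of $p$-forms, so that ${\cal L}_{\bf u}{\cal L}_{\bf J}\omega={\cal L}_{\bf J}{\cal L}_{\bf u}\omega+{\cal L}_{[{\bf u},{\bf J}]}\omega$.

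Adding the two preceding expressions and regrouping, using linearity of ${\cal L}$ in its vector-field argument, yields
\[
\left(\derv{t}+{\cal L}_{\bf u}\right){\cal L}_{\bf J}\omega
={\cal L}_{\bf J}\left[\left(\derv{t}+{\cal L}_{\bf u}\right)\omega\right]
+{\cal L}_{\left(\partial_t{\bf J}+[{\bf u},{\bf J}]\right)}\,\omega.
\]
Both bracketed quantities vanish by the hypotheses, and since the second coefficient is the zero vector field we have ${\cal L}_{\bf 0}\omega=0$; hence the entire right-hand side is zero, establishing that $\omega'={\cal L}_{\bf J}\omega$ is an invariant $p$-form. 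The main obstacle is the bookkeeping in the first step: because ${\bf J}$ is time dependent, $\partial_t$ does \emph{not} simply commute with ${\cal L}_{\bf J}$, and one must carefully produce the extra term ${\cal L}_{\partial_t{\bf J}}\omega$ so that it combines with the $[{\bf u},{\bf J}]$ term from the commutator identity to reconstruct exactly the vector-field invariance condition (\ref{eq:liedr5}). Once that term is correctly accounted for, the proof reduces to a single regrouping.
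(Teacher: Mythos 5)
Your proof is correct, but note that the paper itself gives no proof of Theorem \ref{thm4}: it is one of the theorems ``listed below, without proof'' with a pointer to Tur and Janovsky (1993), so there is no internal argument to compare yours against. Your two ingredients are both sound: the time-derivative product rule $\partial_t\left({\cal L}_{\bf J}\omega\right)={\cal L}_{\partial_t{\bf J}}\,\omega+{\cal L}_{\bf J}\left(\partial_t\omega\right)$, which you correctly extract from Cartan's formula (\ref{eq:ex16}), $\partial_t d=d\partial_t$, and the product rule for contraction; and the commutator identity $[{\cal L}_{\bf u},{\cal L}_{\bf J}]={\cal L}_{[{\bf u},{\bf J}]}$, whose sign is consistent with the paper's left-bracket convention in (\ref{eq:lie2}) and (\ref{eq:liedr5}), so the two residual terms genuinely combine into ${\cal L}_{\partial_t{\bf J}+[{\bf u},{\bf J}]}\,\omega={\cal L}_{\bf 0}\,\omega=0$. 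It is worth pointing out that within the paper's own toolkit there is a shorter route that avoids the commutator identity altogether: by Cartan's formula ${\cal L}_{\bf J}\omega={\bf J}\contr d\omega+d({\bf J}\contr\omega)$; Theorem \ref{thm1} makes $d\omega$ invariant, Theorem \ref{thm3} (which follows in one line from the identity in (\ref{eq:ex17}) together with the contraction product rule in time) makes ${\bf J}\contr d\omega$ and ${\bf J}\contr\omega$ invariant, and Theorem \ref{thm1} applied again makes $d({\bf J}\contr\omega)$ invariant, so the sum is invariant. Your argument buys self-containedness --- it does not lean on Theorem \ref{thm3}, which the paper also leaves unproved --- at the cost of invoking the Lie-derivative commutator identity, which the paper never states explicitly.
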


\begin{theorem}\label{thm5}
If ${\bf J}_1$ and ${\bf J}_2$ are invariant vector fields then so
is $[{\bf J}_1,{\bf J}_2]$ iff $\{{\bf J}_1,{\bf J}_2, {\bf u}\}$ are
elements of a Lie algebra. 
\end{theorem}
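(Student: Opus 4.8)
The plan is to compute $\left(\derv{t}+{\cal L}_{\bf u}\right)[{\bf J}_1,{\bf J}_2]$ directly from the invariance hypotheses and to show that it vanishes exactly when the Jacobi identity holds for the triple $\{{\bf u},{\bf J}_1,{\bf J}_2\}$. First I would record the two hypotheses in the form given by (\ref{eq:liedr5}): since ${\bf J}_1$ and ${\bf J}_2$ are invariant vector fields, $\deriv{{\bf J}_i}{t}=-[{\bf u},{\bf J}_i]$ for $i=1,2$, where ${\cal L}_{\bf u}$ acting on a vector field is the left Lie bracket $[{\bf u},\cdot]$ of (\ref{eq:lie2}).

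Next I would differentiate the bracket in time, using bilinearity of $[\cdot,\cdot]$ so that $\deriv{}{t}[{\bf J}_1,{\bf J}_2]=[\deriv{{\bf J}_1}{t},{\bf J}_2]+[{\bf J}_1,\deriv{{\bf J}_2}{t}]$, and then substitute the invariance relations. Adding the advective piece ${\cal L}_{\bf u}[{\bf J}_1,{\bf J}_2]=[{\bf u},[{\bf J}_1,{\bf J}_2]]$ gives
\[
\left(\derv{t}+{\cal L}_{\bf u}\right)[{\bf J}_1,{\bf J}_2]
=[{\bf u},[{\bf J}_1,{\bf J}_2]]-[[{\bf u},{\bf J}_1],{\bf J}_2]
-[{\bf J}_1,[{\bf u},{\bf J}_2]].
\]

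The crucial observation is that, using the antisymmetry of the bracket, the right-hand side is precisely the Jacobiator $[{\bf u},[{\bf J}_1,{\bf J}_2]]+[{\bf J}_1,[{\bf J}_2,{\bf u}]]+[{\bf J}_2,[{\bf u},{\bf J}_1]]$ of the three fields. This sum vanishes if and only if the Jacobi identity holds for $\{{\bf u},{\bf J}_1,{\bf J}_2\}$, and the Jacobi identity together with the (automatic) antisymmetry of the bracket is exactly the defining requirement that these vector fields close into a Lie algebra. Hence $[{\bf J}_1,{\bf J}_2]$ is invariant if and only if $\{{\bf u},{\bf J}_1,{\bf J}_2\}$ form a Lie algebra, which is the claimed equivalence.

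I would expect no genuine computational obstacle here; the one subtlety worth flagging is the interpretation of the ``iff''. The nontrivial content is recognizing that the obstruction to invariance of $[{\bf J}_1,{\bf J}_2]$ is carried entirely by the Jacobiator, so that closure into a Lie algebra is simultaneously necessary and sufficient. I would also remark that for ordinary vector fields on a manifold under the standard commutator the Jacobi identity is an identity, so in that setting the bracket of two invariants is automatically invariant; the value of the theorem is the structural identification of bracket-invariance with the Lie-algebra (Jacobi) property, which is what matters in the more general algebraic settings considered by Tur and Janovsky (1993).
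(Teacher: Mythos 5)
Your computation is correct and complete: writing the invariance of ${\bf J}_i$ as $\partial_t{\bf J}_i=-[{\bf u},{\bf J}_i]$, differentiating the bracket by bilinearity, and adding ${\cal L}_{\bf u}[{\bf J}_1,{\bf J}_2]=[{\bf u},[{\bf J}_1,{\bf J}_2]]$ does reduce the obstruction to invariance of $[{\bf J}_1,{\bf J}_2]$ exactly to the Jacobiator of $\{{\bf u},{\bf J}_1,{\bf J}_2\}$, which is the content of the ``iff''. Note, however, that the paper supplies no proof of this theorem at all (it is one of the results explicitly ``listed below, without proof'' and attributed to Tur and Janovsky (1993)), so there is nothing to compare your argument against; your concluding remark that for ordinary vector fields under the commutator the Jacobi identity holds identically, so that the bracket of two invariant fields is automatically invariant in that setting, is a correct and useful clarification of where the theorem's hypothesis actually bites.
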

\leftline{\bf Comment:} The question of Lie algebraic structures for 
fluid relabeling symmetries has been addressed by Volkov, Tur and Janovsky 
(1995). Their work shows that there is a hidden 
supersymmetry in hydrodynamical systems (i.e. ideal MHD and hydrodynamics), 
with respect to the odd Buttin bracket. 
 
\subsection{Magnetic Helicity}
 $\alpha={\bf A}{\bf\cdot}d{\bf x}$ is advected one 
form for the magnetic vector potential, provided the gauge for ${\bf A}$ 
is chosen so that $\phi_E={\bf u}{\bf\cdot}{\bf A}$. The Lie dragging 
condition for $\alpha={\bf A}{\bf\cdot}d{\bf x}$ implies:
\begin{equation}
\deriv{\bf A}{t}-{\bf u}\times(\nabla\times{\bf A})
+\nabla({\bf u}{\bf\cdot}{\bf A})=0. \label{eq:maghel1}
\end{equation}
This equation can be written as 
$d{\bf A}/dt+(\nabla{\bf u})^T{\bf\cdot}{\bf A}=0$. 
The  magnetic flux 2-form $\beta={\bf B}{\bf\cdot}d{\bf S}$  
and  the vector field 
${\bf b}={\bf B}/\rho$ are Lie dragged with the flow. Thus,   
${\bf b}\mathrel{\lrcorner}({\bf A}{\bf\cdot}d{\bf x})\equiv {\bf A\cdot B}
/{\rho}$ is a Lie dragged scalar invariant. 
Thus, we obtain the magnetic helicity conservation law:
\begin{equation}
\frac{d}{dt}\left(\frac{{\bf A}{\bf\cdot}{\bf B}}{\rho}\right)
=0\quad \hbox{or}\quad \deriv{h_m}{t}+\nabla{\bf\cdot}(h_m{\bf u})=0, 
\label{eq:maghel2}
\end{equation}
where $h_m={\bf A}{\bf\cdot}{\bf B}$  is the magnetic helicity in 
the gauge $\phi_E={\bf u}{\bf\cdot}{\bf A}$. 

\subsection{The Ertel invariant and related invariants}
In this section we discuss Ertel's theorem
in gas dynamics, and the generalization of Ertel's equation to MHD
(e.g. Kats 2003). The MHD generalization of Ertel's theorem uses
the Clebsch variable representation of the fluid velocity, that arises
from using Lagrangian constraints in the variational principle
for MHD discussed by Zakharov and Kuznetsov (1997).
We also discuss the Hollmann (1964) invariant, which is related to the
Ertel invariant (e.g. Tur and Yanovsky (1993)).
 The Ertel invariant is:
\begin{equation}
I_e=\frac{\omega{\bf\cdot}\nabla S}{\rho}\quad 
\hbox{where}\quad  \omega=\nabla\times{\bf u}. \label{eq:ert1}
\end{equation}
To derive the Ertel invariant we use the Clebsch representation for ${\bf u}$:
\beqnar
&&{\bf u}=\nabla\phi-r\nabla S-\lambda\nabla\mu,\nonumber\\
&&\phi=\int_0^t \left(\frac{1}{2}|{\bf u}|^2-h\right)({\bf x}_0,t')\ dt',
\quad r=-\int_0^t T_0({\bf x}_0,t')\ dt',\label{eq:ert2}
\eeqnar
where $h=(p+\varepsilon)/\rho=$ is the enthalpy, $S$ is the entropy,
$\phi$ is the velocity potential, and $T_0({\bf x}_0,t)=T({\bf x},t)$
is the temperature.
   $\lambda$ and $\mu$ are related to the
Lin constraints associated with vorticity
in a Lagrangian variational principle
with constraints (e.g. Zakharov and Kuznetsov (1997)).
The Clebsch variable representation for ${\bf u}$ 
is related to Weber transformations.

 Let
\begin{equation}
{\bf w}={\bf u}-\nabla\phi+r\nabla S\equiv -\lambda\nabla\mu, \label{eq:ert3}
\end{equation}
 $\nabla\times{\bf w}=-\nabla\lambda\times\nabla\mu$ represents
the component of the vorticity of the fluid that is not generated by
entropy gradients, i.e. it does not depend on $\nabla S$.
The one-form $\alpha={\bf w}{\bf\cdot}d{\bf x}$ is Lie dragged with the fluid.
Thus  ${\bf w}$ satisfies the equation (\ref{eq:liedr2}): 
\begin{equation}
\deriv{\bf w}{t}-{\bf u}\times(\nabla\times{\bf w})
+\nabla({\bf u\cdot w})=0. \label{eq:ert4}
\end{equation}

It follows that   
${\bf b}=(\nabla\times{\bf w})/\rho$ is a Lie dragged vector field and
$\nabla S{\bf\cdot}d{\bf x}$ is a
conserved 1-form (Tur and Janovsky (1993)).
Thus, 
${\bf b}{\mathrel{\lrcorner}}(\nabla S{\bf\cdot}d{\bf x})
={\bf b}{\bf\cdot}\nabla S$ is a conserved scalar. Inspection of 
${\bf b}{\bf\cdot}\nabla S$ reveals that:
\begin{equation}
I_e\equiv {\bf b}{\bf\cdot}\nabla S
=\frac{\nabla\times({\bf u}+r\nabla S-\nabla\phi)}{\rho}{\bf\cdot}\nabla S
=\frac{\nabla\times{\bf u}}{\rho}{\bf\cdot}\nabla S, \label{eq:ert5}
\end{equation}
 is the Ertel invariant.

\begin{theorem}\label{thm6}
 The generalization for the Ertel invariant in MHD is (Kats (2003)):
\begin{equation}
I_e^{(m)}=\frac{\nabla\times({\bf u}-{\bf u}_M)}{\rho}{\bf\cdot}\nabla S, 
\label{eq:ert6}
\end{equation}
where
\beqnar
&&{\bf u}_M=-\frac{(\nabla\times{\boldsymbol{\Gamma}})\times{\bf B}}{\rho}
-{\boldsymbol{\Gamma}}\frac{(\nabla{\bf\cdot}{\bf B})}{\rho},\label{eq:ert6a}\\
&&\deriv{\boldsymbol{\Gamma}}{t}
-{\bf u}\times(\nabla\times{\boldsymbol{\Gamma}})
+\nabla({\boldsymbol{\Gamma}}{\bf\cdot}{\bf u})=-\frac{\bf B}{\mu_0}, 
\label{eq:ert7}
\eeqnar
and $\mu_0$ is the magnetic permeability. We can also write (\ref{eq:ert7})
as:
\beqn
\frac{d}{dt}\left({\boldsymbol{\Gamma}}{\bf\cdot}d{\bf x}\right)
=-\frac{{\bf B}{\bf\cdot}d{\bf x}}{\mu_0}. \label{eq:ert7a}
\eeqn
\end{theorem}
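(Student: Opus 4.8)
The plan is to mirror the hydrodynamic derivation of the Ertel invariant in (\ref{eq:ert3})--(\ref{eq:ert5}). I would build a one-form $\mathbf{W}\cdot d\mathbf{x}$ that is \emph{exactly} Lie dragged by the flow; then $d(\mathbf{W}\cdot d\mathbf{x})=(\nabla\times\mathbf{W})\cdot d\mathbf{S}$ is an invariant two-form by Theorem \ref{thm1}, so (with mass conservation, exactly as for $\mathbf{B}/\rho$ in (\ref{eq:31})) the vector $(\nabla\times\mathbf{W})/\rho$ is a Lie-dragged vector field, and contracting it with the conserved one-form $\nabla S\cdot d\mathbf{x}$ produces a conserved scalar via Theorem \ref{thm3}. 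The only new ingredient compared with gas dynamics is the Lorentz force, so the whole game is to show that $\mathbf{u}_M$ is engineered precisely so that $(\mathbf{u}-\mathbf{u}_M)\cdot d\mathbf{x}$ obeys the same evolution law as the pure-fluid velocity one-form.

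First I would record the evolution of the bare velocity one-form: feeding the MHD momentum equation (\ref{eq:crossh4}) into the one-form Lie-dragging formula (\ref{eq:liedr2}) with $\mathbf{S}=\mathbf{u}$ gives $\frac{d}{dt}(\mathbf{u}\cdot d\mathbf{x})=\big[\nabla(\tfrac12|\mathbf{u}|^2-h)+T\nabla S+\tfrac{\mathbf{J}\times\mathbf{B}}{\rho}\big]\cdot d\mathbf{x}$, the only difference from the gas-dynamic case being the extra Lorentz term.

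The crux is the identity $\frac{d}{dt}(\mathbf{u}_M\cdot d\mathbf{x})=\tfrac{\mathbf{J}\times\mathbf{B}}{\rho}\cdot d\mathbf{x}$. Taking $\nabla\cdot\mathbf{B}=0$ for the main line, I would rewrite $\mathbf{u}_M$ as an interior product: the identity $\mathbf{X}\lrcorner(\mathbf{B}\cdot d\mathbf{S})=-(\mathbf{X}\times\mathbf{B})\cdot d\mathbf{x}$ from (\ref{eq:exterior1}), applied with $\mathbf{X}=(\nabla\times\boldsymbol{\Gamma})/\rho$, gives $\mathbf{u}_M\cdot d\mathbf{x}=\tfrac{\nabla\times\boldsymbol{\Gamma}}{\rho}\lrcorner(\mathbf{B}\cdot d\mathbf{S})$. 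I would then use the Leibniz rule for the advective derivative across a contraction (from the last line of (\ref{eq:ex17})); since the Faraday two-form $\mathbf{B}\cdot d\mathbf{S}$ is Lie dragged (\ref{eq:liedr3}), its term drops out and $\frac{d}{dt}(\mathbf{u}_M\cdot d\mathbf{x})=\frac{d}{dt}\!\big(\tfrac{\nabla\times\boldsymbol{\Gamma}}{\rho}\big)\lrcorner(\mathbf{B}\cdot d\mathbf{S})$. Taking the exterior derivative of the defining law (\ref{eq:ert7a}) and invoking mass conservation yields $\frac{d}{dt}\big(\tfrac{\nabla\times\boldsymbol{\Gamma}}{\rho}\big)=-\mathbf{J}/\rho$, and contracting back reproduces the Lorentz term. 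This is the step I expect to be the hard part, since it is where the specific form of $\mathbf{u}_M$ and the evolution equation (\ref{eq:ert7}) must conspire; the $\boldsymbol{\Gamma}\,\nabla\cdot\mathbf{B}/\rho$ piece in (\ref{eq:ert6a}) is the correction that keeps the accounting consistent (matching the magnetic-monopole force in the momentum balance) when $\nabla\cdot\mathbf{B}\neq0$.

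Subtracting the two evolution laws leaves $\frac{d}{dt}\big((\mathbf{u}-\mathbf{u}_M)\cdot d\mathbf{x}\big)=\big[\nabla(\tfrac12|\mathbf{u}|^2-h)+T\nabla S\big]\cdot d\mathbf{x}$, which is identical to the gas-dynamic case, so the remaining entropic and exact pieces are absorbed exactly as before using the Clebsch potentials of (\ref{eq:ert2}). Setting $\mathbf{W}=\mathbf{u}-\mathbf{u}_M-\nabla\phi+r\nabla S$ and using $d\phi/dt=\tfrac12|\mathbf{u}|^2-h$, $dr/dt=-T$, together with the fact that $dS$ is Lie dragged, gives $\frac{d}{dt}(\mathbf{W}\cdot d\mathbf{x})=0$. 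Theorems \ref{thm1} and \ref{thm3} then deliver that $(\nabla\times\mathbf{W})\cdot\nabla S/\rho$ is a conserved scalar, and since $\nabla\times\nabla\phi=0$ and $(\nabla r\times\nabla S)\cdot\nabla S=0$ this equals $\nabla\times(\mathbf{u}-\mathbf{u}_M)\cdot\nabla S/\rho=I_e^{(m)}$, which completes the proof.
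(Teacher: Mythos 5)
Your proof is correct, and its skeleton is the same as the paper's: build a Lie-dragged one-form ${\bf w}\cdot d{\bf x}$ with ${\bf w}={\bf u}-{\bf u}_M-\nabla\phi+r\nabla S$, pass to the invariant two-form $d({\bf w}\cdot d{\bf x})$ and the invariant vector field $\nabla\times{\bf w}/\rho$, contract with $dS$, and observe that the $\nabla\phi$ and $r\nabla S$ contributions drop out of the curl-dot-$\nabla S$ exactly as in (\ref{eq:ert10}). Where you differ is in how the invariance of ${\bf w}\cdot d{\bf x}$ is established. The paper simply cites the Clebsch representation (\ref{eq:ert8}) from the constrained variational principle, notes ${\bf w}\equiv-\tilde\lambda\nabla\mu$ with $\tilde\lambda,\mu$ advected, and declares the one-form invariant "by inspection"; the dynamical verification is outsourced to the Weber-transformation proposition of Section 5.3, in particular (\ref{eq:wt5})--(\ref{eq:wt9}). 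You instead prove the invariance directly from the momentum equation (\ref{eq:crossh4}) and the defining law (\ref{eq:ert7a}), and your central computation --- writing ${\bf u}_M\cdot d{\bf x}$ as an interior product of $(\nabla\times\boldsymbol{\Gamma})/\rho$ with the Faraday two-form and applying the Leibniz rule from (\ref{eq:ex17}) --- is precisely the paper's (\ref{eq:wt8})--(\ref{eq:wt9}) with the roles of the contracted vector field and two-form interchanged (the paper contracts the Lie-dragged ${\bf B}/\rho$ into $\tilde{\boldsymbol{\Gamma}}\cdot d{\bf S}$, whose source is $-{\bf J}\cdot d{\bf S}$ by (\ref{eq:Clebsch11}); you contract $(\nabla\times\boldsymbol{\Gamma})/\rho$, whose source is $-{\bf J}/\rho$, into the Lie-dragged ${\bf B}\cdot d{\bf S}$ --- both give the Lorentz term). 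What your route buys is self-containedness: the theorem is proved without presupposing that ${\bf u}$ admits the Clebsch expansion (\ref{eq:Clebsch3})--(\ref{eq:Clebsch4}), only that potentials $\phi$, $r$, $\boldsymbol{\Gamma}$ exist satisfying the stated Lagrangian evolution equations. Your treatment of $\nabla\cdot{\bf B}\neq0$ is only gestured at, but the paper's (\ref{eq:wt6}) confirms that the $\boldsymbol{\Gamma}\,\nabla\cdot{\bf B}/\rho$ term in ${\bf u}_M$ absorbs the monopole force exactly as you anticipate, so this is not a gap in the main line of argument.
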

\begin{proof}
 Use the Clebsch representation for ${\bf u}$:
\begin{equation}
{\bf u}=\nabla\phi-r\nabla S-\tilde{\lambda}\nabla\mu +{\bf u}_M.
 \label{eq:ert8}
\end{equation}
Inspection shows that ${\bf w}$ satisfies the equation:
\begin{equation}
{\bf w}={\bf u}-\nabla\phi+r\nabla S-{\bf u}_M\equiv -\tilde{\lambda}\nabla\mu,
\label{eq:ert9}
\end{equation}
and hence $\alpha={\bf w}{\bf\cdot}d{\bf x}$ is an invariant 1-form. 
It follows that  ${\bf b}=\nabla\times{\bf w}/\rho$ is a Lie advected
vector field. 
$dS=\nabla S{\bf\cdot}d{\bf x}$ is an invariant advected 1-form.
Thus, $I_e^m={\bf b}{\bf\cdot}\nabla S$ is an invariant scalar 
given by:
\beqnar
I_e^m&&=\nabla\times
\left({\bf u}-\nabla\phi+r\nabla S-{\bf u}_M\right){\bf\cdot}\nabla S/\rho
\nonumber\\
&&=\left[\nabla\times({\bf u}-{\bf u}_M)+\nabla r\times\nabla S\right]
{\bf\cdot}\nabla S/\rho\nonumber\\
&&\equiv \nabla\times({\bf u}-{\bf u}_M)
{\bf\cdot}\nabla S/\rho. \nonumber\\
&&\label{eq:ert10}
\eeqnar
The quantity $I_e^m$ is the MHD analogue of the Ertel invariant. It reduces 
to the Ertel invariant in the case where ${\bf u}_{M}$ is 
zero. 
\end{proof}

\begin{theorem}\label{thm7}
 The Hollmann invariant is:
\begin{equation}
I_h=({\bf u}-\nabla\phi){\bf\cdot}\frac{\nabla S\times\nabla I_e}{\rho}\quad 
\hbox{where}\quad I_e=\frac{(\nabla\times{\bf u}){\bf \cdot}\nabla S}{\rho}, 
\label{eq:hol1}
\end{equation}
is the Ertel invariant. Here $\phi$ is the Clebsch potential  
in (\ref{eq:ert2}) associated with potential flow. The Hollmann invariant $I_h$ 
is Lie dragged with the flow. 
\end{theorem}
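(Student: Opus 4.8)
The plan is to exhibit $I_h$ as the quotient of two advected $3$-forms, after which the statement follows from the wedge-product theorem (Theorem~\ref{thm2}) and the advected mass $3$-form. The only genuinely new ingredient beyond the machinery already assembled is a short ``quotient'' observation, which I will flag as the main point requiring care.

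First I would simplify the defining expression (\ref{eq:hol1}). Using the Clebsch decomposition (\ref{eq:ert3}) in the form ${\bf u}-\nabla\phi={\bf w}-r\nabla S$ with ${\bf w}=-\lambda\nabla\mu$, and the fact that $\nabla S{\bf\cdot}(\nabla S\times\nabla I_e)=0$, the term proportional to $r\nabla S$ drops out, so that $I_h={\bf w}{\bf\cdot}(\nabla S\times\nabla I_e)/\rho$. This is the key reduction: it re-expresses $I_h$ entirely in terms of the vector ${\bf w}$, whose associated one-form is already known to be Lie dragged, and it reveals the numerator $\rho I_h={\bf w}{\bf\cdot}(\nabla S\times\nabla I_e)$ as a candidate coefficient of an advected top-form.

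Next I would assemble three advected one-forms. By (\ref{eq:ert4}) the one-form $\alpha={\bf w}{\bf\cdot}d{\bf x}$ is Lie dragged. Since the entropy $S$ is an advected $0$-form (\ref{eq:f15}) and the Ertel scalar $I_e$ is an advected $0$-form (Ertel's theorem (\ref{eq:ertel1}), re-derived in (\ref{eq:ert5})), Theorem~\ref{thm1} guarantees that $dS=\nabla S{\bf\cdot}d{\bf x}$ and $dI_e=\nabla I_e{\bf\cdot}d{\bf x}$ are advected one-forms. Applying Theorem~\ref{thm2} twice, the wedge $\alpha\wedge dS\wedge dI_e$ is an advected $3$-form, and evaluating the determinant of the three coefficient vectors gives $\alpha\wedge dS\wedge dI_e={\bf w}{\bf\cdot}(\nabla S\times\nabla I_e)\,dx\wedge dy\wedge dz$.

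Finally I would divide by the advected mass $3$-form $\rho\,dx\wedge dy\wedge dz$ of (\ref{eq:f18}). Writing the advected $3$-form just obtained as $I_h\,(\rho\,dx\wedge dy\wedge dz)$ and using the product rule $(\partial_t+{\cal L}_{\bf u})(f\beta)=(df/dt)\,\beta+f\,(\partial_t+{\cal L}_{\bf u})\beta$ for a scalar $f$ times a form $\beta$, the advection of both $3$-forms forces $dI_h/dt=0$, since the mass form is nowhere vanishing. This quotient step is exactly the mechanism already used in (\ref{i1})--(\ref{eq:i2}), where ${\bf A}{\bf\cdot}{\bf B}$ appears as an advected $3$-form coefficient and ${\bf A}{\bf\cdot}{\bf B}/\rho$ as the corresponding advected scalar. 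I expect this quotient lemma to be the step needing most care: it is not one of the explicitly stated Theorems~\ref{thm1}--\ref{thm5}, so I would justify it by the product-rule identity above rather than cite a named result, while the reduction in the first paragraph and the wedge assembly are routine applications of the exterior-derivative and wedge theorems.
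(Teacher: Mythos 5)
Your proposal is correct and is essentially the paper's own argument: both proofs build $I_h$ out of the advected one-forms ${\bf w}{\bf\cdot}d{\bf x}$, $dS$ and $dI_e$ together with a division by $\rho$, with the $r\nabla S$ part of ${\bf w}$ dropping out because it is orthogonal to $\nabla S\times\nabla I_e$. The only difference is bookkeeping in the final step --- the paper forms the advected vector field ${\bf b}=\nabla S\times\nabla I_e/\rho$ from the two-form $dS\wedge dI_e$ and contracts it with ${\bf w}{\bf\cdot}d{\bf x}$ (Theorem~\ref{thm3}), whereas you wedge all three one-forms into an advected $3$-form and divide by the mass $3$-form, which is algebraically the identical computation.
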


\begin{proof}
 $\omega_1=\nabla S{\bf\cdot}d{\bf x}$ and
$\omega_2=\nabla I_e{\bf\cdot}d{\bf x}$ are conserved one-forms. Thus,
$\omega=\omega_1\wedge \omega_2=(\nabla S\times\nabla I_e){\bf\cdot}d{\bf S}$
is a conserved two form, and
\begin{equation}
{\bf b}=\nabla S\times\nabla I_e/\rho, \label{eq:hol2}
\end{equation}
is a conserved vector.
$\alpha={\bf w}{\bf\cdot}d{\bf x}$ is a conserved one-form, where
\begin{equation}
{\bf w}={\bf u}-\nabla\phi+r\nabla S, \label{eq:hol3}
\end{equation}
and ${\bf w}$ satisfies the equation:
\begin{equation}
\deriv{\bf w}{t}-{\bf u}\times(\nabla\times{\bf w})
+\nabla({\bf u}{\bf\cdot}{\bf w})=0. \label{eq:hol4}
\end{equation}

 Using (\ref{eq:hol2}) and (\ref{eq:hol3}) it follows that
\begin{equation}
I_h={\bf w}{\bf\cdot}{\bf b}\equiv ({\bf u}-\nabla\phi){\bf\cdot}
\frac{\nabla S\times\nabla I_e}{\rho}, \label{eq:hol5}
\end{equation}
is a scalar invariant (i.e. the Hollmann invariant).
\quad
\end{proof}
Similarly, 
the MHD version of the Hollmann invariant is:
\beqn
I_h^m={\bf w}_m{\bf\cdot}{\bf b}_m =\left({\bf u}-{\bf u}_M-\nabla\phi\right)
{\bf\cdot}\left(\frac{\nabla S\times\nabla I_e^m}{\rho}\right), \label{eq:hol6}
\eeqn
where
\beqn
{\bf w}_m={\bf u}-\nabla\phi+r\nabla S-{\bf u}_M, \quad 
{\bf b}_m=\frac{\nabla S\times\nabla I_e^m}{\rho}. \label{eq:hol7}
\eeqn
\subsection{Topological Invariants}

Topological invariants and integrals of differential forms over a volume $V$
that are non-zero are sometimes referred to as topological charges.
A more complete discussion is given by Tur and Yanovsky (1993).  First we recall the definitions of closed and exact differential forms.

\begin{definition}
A $p-$form $\omega^p$ is closed if its exterior derivative $d\omega^p=0$.
\end{definition}

\begin{definition}
A $p-$form $\omega^p$ is exact if it can be expressed as the exterior
derivative of a $(p-1)$-from $\omega^{p-1}$,
i.e., $\omega^p=d\omega^{p-1}$.
It is assumed that $\omega^p$ and $\omega^{p-1}$
are sufficiently smooth and differentiable
on a star-shaped region of the manifold on which the forms are defined.
\end{definition}

\begin{lemma}[Poincar\'e]
The Poincar\'e Lemma states that if $X$ is a contractible open set of
$R^n$, then any closed $p-$form defined on $X$ is exact, for any integer
$0<p\leq n$.
\end{lemma}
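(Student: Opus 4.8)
The plan is to prove exactness by constructing an explicit chain-homotopy (Poincar\'e homotopy) operator that inverts $d$ on closed forms. Since $X$ is contractible, there is a smooth family of maps $\phi_t:X\to X$, $t\in[0,1]$, with $\phi_0=\mathrm{id}_X$ and $\phi_1$ equal to a constant map onto a point $x_*\in X$; when $X$ is star-shaped about $x_*$ one takes the explicit contraction $\phi_t(x)=x_*+t(x-x_*)$. Let $V_t$ be the (in general $t$-dependent) vector field generating this family, so that for any $p$-form $\omega$ one has the standard flow identity $\frac{d}{dt}\,\phi_t^*\omega=\phi_t^*\,{\cal L}_{V_t}\omega$.

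First I would invoke Cartan's magic formula (\ref{eq:ex16}) to rewrite the Lie derivative, obtaining
\begin{equation}
\frac{d}{dt}\,\phi_t^*\omega
=\phi_t^*\bigl(V_t\contr d\omega+d(V_t\contr\omega)\bigr).
\label{eq:pl1}
\end{equation}
Because $\omega$ is closed, $d\omega=0$ and the first term vanishes; and since $d$ commutes with pullback, the right-hand side of (\ref{eq:pl1}) equals $d\bigl(\phi_t^*(V_t\contr\omega)\bigr)$. Integrating in $t$ from $0$ to $1$ and using $\phi_0^*\omega=\omega$ together with the fact that $\phi_1$ is constant (so $\phi_1^*\omega=0$ for $p\geq1$), the fundamental theorem of calculus gives
\begin{equation}
-\,\omega=d\left(\int_0^1\phi_t^*(V_t\contr\omega)\,dt\right).
\label{eq:pl2}
\end{equation}
Thus $\omega=d\omega^{p-1}$ with $\omega^{p-1}=-\int_0^1\phi_t^*(V_t\contr\omega)\,dt$ a smooth $(p-1)$-form, exhibiting the closed $p$-form $\omega$ as exact.

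The main obstacle is analytic rather than algebraic. For the star-shaped contraction the generating field is $V_t(y)=(y-x_*)/t$, which is singular as $t\to0$; I would show that this singularity is harmless by checking that in the pulled-back integrand of (\ref{eq:pl2}) the $1/t$ is exactly cancelled (the factor $y-x_*=t(x-x_*)$ supplies a compensating $t$), while the pullback contributes an overall $t^{\,p-1}$ from the $p-1$ surviving one-form factors $\phi_t^*dx^{i}=t\,dx^{i}$. Hence the integrand is $O(t^{\,p-1})$, so the $t$-integral converges and defines a smooth form precisely because $p\geq1$---which is also exactly where the hypothesis $0<p$ enters. For a general contractible (rather than explicitly star-shaped) $X$ the only additional point is to replace the a priori merely continuous contraction by a smooth one, available by a standard smoothing argument, after which the computation (\ref{eq:pl1})--(\ref{eq:pl2}) applies verbatim.
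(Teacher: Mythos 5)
Your argument is correct, and it is the standard chain-homotopy proof of the Poincar\'e Lemma; but note that the paper itself does not prove this statement at all. It quotes the lemma as background and offers only the heuristic remark that contractibility makes every cycle the boundary of a cone, ``a dual version of which gives the Poincar\'e Lemma.'' Your proposal is precisely the de Rham--side realization of that duality: the cone construction on chains corresponds to the homotopy operator $h\omega=-\int_0^1\phi_t^*(V_t\contr\omega)\,dt$ on forms, and Cartan's magic formula (\ref{eq:ex16}) together with $d\phi_t^*=\phi_t^*d$ turns the flow identity into the chain-homotopy relation $\phi_1^*-\phi_0^*=d\circ h+h\circ d$, which collapses to $\omega=d(\pm h\omega)$ on closed forms. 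Your treatment of the analytic point --- that the $1/t$ singularity of the generating field of the star-shaped contraction is cancelled by the factor $y-x_*=t(x-x_*)$ and the $t^{\,p-1}$ from the pulled-back one-form factors, which is exactly where $p\geq 1$ is used --- is the part most expositions gloss over, and it is handled correctly. One small inconsistency to fix: you declare $\phi_0=\mathrm{id}$ and $\phi_1=\mathrm{const}$, but the explicit contraction $\phi_t(x)=x_*+t(x-x_*)$ has the opposite orientation ($\phi_0\equiv x_*$, $\phi_1=\mathrm{id}$), so the singular endpoint and the overall sign in (\ref{eq:pl2}) should be stated with one consistent convention; this affects nothing essential. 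What your approach buys over the paper's is an explicit, constructive primitive $\omega^{p-1}$, which is exactly the kind of homotopy formula the paper itself relies on elsewhere (e.g.\ Webb et al.\ (2010a) for the magnetic vector potential); what the paper's homological remark buys is brevity and the topological intuition linking exactness to the bounding of cycles.
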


\begin{definition}
Contractibility means that there is a homotopy $F_t:X\times[0,1]\to X$
 that continuously deforms $X$ to a point. Thus every cycle $c$
in $X$ is the boundary of some cone. One can take the cone to be the image
of $c$ under the homotopy. A dual version of this result gives
the Poincar\'e Lemma.
\end{definition}
From the above definitions, it follows that an exact $p$-form is closed, but a
closed $p$-form is not necessarily exact. To verify these statements,
note that if $\omega^p$ is exact, then $\omega^p=d\omega^{p-1}$ for some
$p-1$ form $\omega^{p-1}$. By the Poincar\'e
Lemma, $d\omega^p=dd\omega^{p-1}=0$(i.e. the Poincar\'e Lemma
states that $dd\alpha=0$ for a differential
form $\alpha$, where $\alpha$ is sufficiently differentiable, i.e.
at least twice differentiable on the star shaped region of the manifold $M$ on
which the form is defined).
However, a closed form $\omega^p$ with $d\omega^p=0$ is not necessarily exact,
i.e. there might not exist a $(p-1)$ form such that $\omega^p=d\omega^{p-1}$.
The word exact is synonymous with the notion of global integrability.

An invariant integral of the form (e.g. magnetic helicity):
\beqn
I=\int_V \boldsymbol{\omega}\wedge d\boldsymbol{\omega}, \label{eq:top0}
\eeqn
where $\omega=\tilde{\bf A}{\bf\cdot}d{\bf x}$ is an advected invariant 1-form, and with $d\boldsymbol{\omega}
=\nabla\times\tilde{\bf A}{\bf\cdot}d{\bf S}$ for which 
the integral (\ref{eq:top0}) is non-zero, defines 
a non-zero topological charge known as the Hopf invariant. A classical 
example of an MHD solution with non-zero topological charge 
is the MHD topological 
soliton (e.g. Kamchatnov (1982)) and related topological MHD 
solutions (Semenov et al. (2002)). 

If $\beta=\omega{\bf\cdot}d{\bf S}$ is an advected invariant 2-form, then
${\bf J}=\boldsymbol{\omega}/\rho\equiv (\omega^i/\rho)\partial/\partial x^i$
is an invariant advected vector field, and
$d\beta=\nabla{\bf\cdot}\boldsymbol{\omega}\ d^3x\equiv
\nabla{\bf\cdot}(\rho {\bf J}) d^3x\neq 0$
if $\nabla{\bf\cdot}(\rho{\bf J})\neq 0$.
If $\nabla{\bf\cdot}(\rho{\bf J})\neq 0$,
the integral $I^q=\int\ d\beta$  has non-zero {\em topological charge}.
Examples of two-forms with non-zero topological charge can be constructed
from the wedge product of two invariant 1-forms. For example, if
\beqn
\omega_{S_1}^1={\bf S}_1{\bf\cdot} d{\bf x}, \quad
\omega_{S_2}^1={\bf S}_2{\bf\cdot} d{\bf x}, \label{eq:top1}
\eeqn
are invariant 1-forms, then
\beqn
\omega^2=\omega_{S_1}^1\wedge \omega_{S_2}^1
={\bf S}_1{\bf\cdot}d{\bf x}\wedge {\bf S}_2{\bf\cdot}d{\bf x}
=({\bf S}_1\times{\bf S}_2){\bf\cdot}d{\bf S}, \label{eq:top2}
\eeqn
is an invariant 2-form. Taking the exterior derivative of $\omega^2$ gives
\beqn
d\omega^2=\nabla{\bf\cdot}({\bf S}_1\times{\bf S}_2)\ d^3x. \label{eq:top3}
\eeqn
In general $\nabla{\bf\cdot}({\bf S}_1\times{\bf S}_2)\neq 0$, and hence
the 3-form $d\omega^2$ has non-zero topological charge. More precisely,
the topological charge for a volume $V=D_3(t)$ is given by the
equivalent expressions:
\beqn
I^q=\int _{D_3(t)} d\omega^2=\int_{\partial D_3(t)}\omega^2
=\int_{\partial D_3(t)}
({\bf S}_1\times{\bf S}_2){\bf\cdot}d{\bf S}. \label{eq:top3a}
\eeqn
Thus $I^2$ is zero if the normal component of ${\bf S}_1\times{\bf S}_2$
is zero on the boundary $\partial D_3(t)$ of the volume $D_3(t)$ of the
region of interest.
\begin{example}{\bf 1.}
For compressible ideal fluid flows:
\beqn
\omega^1_1=\nabla{S}{\bf\cdot}d{\bf x}, \quad 
\omega_2^1=({\bf u}-\nabla\phi+r\nabla S){\bf\cdot}d{\bf x}
\equiv {\bf w}{\bf\cdot} d{\bf x}, \label{eq:top4}
\eeqn
are invariant 1-forms advected with the flow. 
${\bf w}{\bf\cdot}d{\bf x}$ is an invariant advected 1-form, 
where ${\bf u}=\nabla\phi-r\nabla S-\lambda\nabla\mu$ is the  Clebsch
representation for the fluid velocity ${\bf u}$. The two-form $\omega^2$
with properties:
\beqnar
\omega^2&=&\omega^1_1\wedge\omega^1_2=\nabla S\times({\bf u}
-\nabla\phi){\bf\cdot}d{\bf S}, \nonumber\\
d\omega^2&=&\nabla{\bf\cdot}[\nabla S\times({\bf u}-\nabla\phi)] d^3x,
\label{eq:top5}
\eeqnar
is an advected invariant 2-form. Using the identity
\beqn
\nabla{\bf\cdot}({\bf E}\times{\bf A})={\bf A}{\bf\cdot}
\nabla\times{\bf E}-{\bf E}{\bf\cdot}\nabla\times{\bf A}, \label{eq:top6}
\eeqn
with ${\bf E}=\nabla S$ and ${\bf A}={\bf u}-\nabla\phi$ in (\ref{eq:top5})
we obtain:
\beqn
d\omega^2=-\nabla S{\bf\cdot}\nabla\times{\bf u}\ d^3x=-\rho I_e\ d^3x,
\label{eq:top7}
\eeqn
where $I_e$ is the Ertel invariant. In this case, in general
$d\omega^2=\nabla{\bf\cdot}(\rho {\bf J})\ d^3x\neq 0$ where
$\rho {\bf J}=\nabla S\times({\bf u}-\nabla\phi)$. This example shows 
that if $\rho I_e\neq 0$, the Ertel invariant can give rise to 
topological charge in ideal fluid mechanics. 
\end{example}

\begin{example}{\bf 2.}
For ideal MHD,
\beqn
\omega_1^1=\tilde{\bf A}{\bf\cdot}d{\bf x},
\quad \omega_2^1=\nabla S{\bf\cdot}d{\bf x},  \label{eq:top8}
\eeqn
are invariant advected 1-forms. The two form:
\beqn
\omega^2=\omega_1^1\wedge\omega_2^1
=(\tilde{\bf A}\times \nabla S){\bf\cdot}d{\bf S}, \label{eq:top9}
\eeqn
is an advected invariant 2-form, with exterior derivative:
\begin{align}
d\omega^2=&\nabla{\bf\cdot}(\tilde{\bf A}\times\nabla S) d^3x=
\left[\nabla\times\tilde{\bf A}{\bf\cdot}\nabla S-\tilde{\bf A}{\bf\cdot}
(\nabla\times\nabla S)\right] d^3x\nonumber\\
\equiv&({\bf B}{\bf\cdot}\nabla S)d^3x=\rho I_b\ d^3x, \label{eq:top10}
\end{align}
where $I_b={\bf B}{\bf\cdot}\nabla S/\rho$ is an invariant, advected scalar.
In this case $d\omega^2=\nabla{\bf\cdot}(\rho {\bf J})\ d^3x$ 
where $\rho {\bf J}=\tilde{\bf A}\times\nabla S$. If the integral 
$I^2=\int_V  d\omega^2$ is nonzero then it gives a non-zero 
topological charge associated with the scalar 
$I_b={\bf B}{\bf\cdot}\nabla S/\rho$.  
\end{example}

\subsection{The Godbillon Vey Invariant}
In an MHD flow, in which $\tilde{\bf A}{\bf\cdot}\nabla\times\tilde{\bf A}=0$
the magnetic helicity $\tilde{\bf A}{\bf\cdot}{\bf B}=0$. The question 
arises of whether the magnetic field in this case has a non-trivial topology. 
It turns out that the field can still have a non-trivial topology 
if the higher order topological invariant, the Godbillon-Vey invariant 
is non-zero. The same question also arises in ordinary fluid dynamics 
for flows in which ${\bf u}{\bf\cdot}\nabla\times{\bf u}=0$. 
A discussion and derivation of the Godbillon-Vey invariant is given below 
(see also Tur and Janovsky (1993)).

Consider the Pfaffian differential form (1-form) 
$\tilde{\boldsymbol{\omega}}^1_A=\tilde{\bf A}{\bf\cdot}d{\bf x}$, for which 
$d\tilde{\boldsymbol{\omega}}_A^1
=(\nabla\times\tilde{\bf A}){\bf\cdot}d{\bf S}$ and 
\beqn
\tilde{\boldsymbol{\omega}}^1_A\wedge d\tilde{\boldsymbol\omega}_A^1
={\bf A}{\bf\cdot}d{\bf x}\wedge(\nabla\times\tilde{\bf A})
{\bf\cdot}d{\bf S}
=(\tilde{\bf A}{\bf\cdot}\nabla\times{\tilde{\bf A}})\ d^3x. \label{eq:gv1}
\eeqn
The Pfaffian differential equation:
\beqn
\tilde{\boldsymbol{\omega}}_A^1={\tilde{\bf A}}{\bf\cdot}d{\bf x}=0, 
\label{eq:gv2}
\eeqn
determines planes perpendicular to the vector field $\tilde{\bf A}$ at each 
point. For these planes to exist, i.e. for the Pfaffian equation (\ref{eq:gv2}) to have a solution requires that the integrability conditions 
\beqn
\tilde{\boldsymbol\omega}_A^1\wedge d \tilde{\boldsymbol\omega}_A^1
=(\tilde{\bf A}{\bf\cdot}\nabla\times\tilde{\bf A})\ d^3x=0. \label{eq:gv3}
\eeqn
are satisfied. If
\beqn
\tilde{\bf A}{\bf\cdot}\nabla\times{\tilde{\bf A}}=0, \label{eq:gv4}
\eeqn
 the Pfaffian equation (\ref{eq:gv2}) is integrable (e.g. Sneddon (1957)).
 
Tur and Janovsky (1993) discuss the geometric obstruction to integrability 
when $\tilde{\bf A}{\bf\cdot}\nabla\times{\tilde{\bf A}}\neq 0$ in terms 
of non-closure of the integral paths. Note that the helicity or Hopf 
invariant
\beqn
 I^\tau= \int_V\tilde{\boldsymbol{\omega}}^1_A\wedge 
d\tilde{\boldsymbol{\omega}}^1_A
=\int_V \tilde{\bf A}{\bf\cdot}\nabla\times{\tilde{\bf A}}
 d^3x, \label{eq:gv5}
\eeqn
is non-zero 
only if $\tilde{\bf A}{\bf\cdot}\nabla\times{\tilde{\bf A}}
\neq 0$ in some region in the volume $V$ (i.e. $\tilde{\bf A}{\bf\cdot}\nabla\times{\tilde{\bf A}}=0$ throughout the whole of $V$ is not possible). Thus 
$I^\tau\neq 0$ implies $\alpha=\tilde{\bf A}{\bf\cdot}d{\bf x}$ 
is non-integrable in sub-regions of $V$ where $\alpha$ does not change sign.

A natural question (e.g. Tur and Janovsky (1993)), is: given that the 
differential form 
$\tilde{\boldsymbol{\omega}}^1=\tilde{\bf A}{\bf\cdot}d{\bf x}=0$ 
is integrable, and satisfies the integrability condition (\ref{eq:gv3}), 
are there then higher order topological invariants that have non-zero 
topological charge? The answer to this question is yes, there is a higher 
order topological quantity that can be non-zero in this case called 
the Godbillon Vey invariant. It is defined by the equation:
\beqn
I^g =\int_{D^3(t)}\boldsymbol{\eta}{\bf\cdot}
\nabla\times\boldsymbol{\eta}\ d^3x
\quad\hbox{where}\quad \boldsymbol{\eta}=\frac{\tilde{\bf A}\times{\bf B}}
{|\tilde{\bf A}|^2}. \label{eq:gv6}
\eeqn
 where ${\bf B}=\nabla\times\tilde{\bf A}$, and      
${\bf B}{\bf\cdot}{\bf n}=0$ on the boundary $\partial D^3(t)$ of the 
region $D^3(t)$ with outward normal ${\bf n}$.
$I^g$  is a topological invariant that is advected with the flow, 
i.e., 
\beqn
\frac{dI^g}{dt}=0, \label{eq:gv6a}
\eeqn
where $d/dt=\partial/\partial t+{\bf u}{\bf\cdot}\nabla$ is the Lagrangian 
time derivative moving with the flow. 
 It is important to note that 
the Godbillon Vey invariant (\ref{eq:gv6}) only applies to zero helicity 
flows for which $\tilde{\bf A}{\bf\cdot}\nabla\times\tilde{\bf A}=0$.

In (\ref{eq:gv6}) $\boldsymbol{\eta}$ is defined by the integrability 
 equation:
\beqn
d\tilde{\boldsymbol{\omega}}_A^1=\boldsymbol{\omega}_\eta^1\wedge 
\tilde{\boldsymbol{\omega}}^1_A, \label{eq:integr1}
\eeqn
where
\beqn
\tilde{\boldsymbol{\omega}}_A^1=\tilde{\bf A}{\bf\cdot}d{\bf x},\quad\hbox{and} 
\quad \boldsymbol{\omega}_\eta^1=\boldsymbol{\eta}{\bf\cdot}d{\bf x}, 
\label{eq:integr2}
\eeqn
are 1-forms. Taking the exterior derivative of 
$\tilde{\boldsymbol{\omega}}_A^1$ 
and using it in (\ref{eq:integr1}) we obtain the equivalent flux equation:
\beqn
(\nabla\times\tilde{\bf A}){\bf\cdot}d{\bf S}=(\boldsymbol{\eta}\times 
\tilde{\bf A}){\bf\cdot}d{\bf S}\quad \hbox{or}\quad 
\nabla\times\tilde{\bf A}=\boldsymbol{\eta}\times\tilde{\bf A}. 
\label{eq:integr3}
\eeqn
From (\ref{eq:integr3}) we obtain:
\beqn
\tilde{\bf A}\times(\nabla\times\tilde{\bf A})
=\tilde{\bf A}\times(\boldsymbol{\eta}\times\tilde{\bf A})
=(\tilde{\bf A}{\bf\cdot}\tilde{\bf A})\boldsymbol{\eta}
-(\tilde{\bf A}{\bf\cdot}\boldsymbol{\eta})\tilde{\bf A}. \label{eq:integr4}
\eeqn
The general solution of (\ref{eq:integr4}) for $\boldsymbol{\eta}$ is:
\beqn
\boldsymbol{\eta}=\frac{1}{|\tilde{\bf A}|^2}\left(\tilde{\bf A}\times {\bf B}
+\boldsymbol{\eta}{\bf\cdot}\tilde{\bf A}\tilde{\bf A}\right)\label{eq:integr5}
\eeqn
By dropping the arbitrary component of $\boldsymbol{\eta}$ parallel to 
$\tilde{\bf A}$ we obtain the solution (\ref{eq:gv6}) 
for $\boldsymbol{\eta}$. 

A derivation of the Godbillon Vey invariant (\ref{eq:gv6}) and 
and the invariance equation (\ref{eq:gv6a}) for $I^g$  
(see also   
Tur and Janovsky (1993)) is outlined below.

\begin{proof}{\em of Godbillon Vey formula (\ref{eq:gv6a})}

The Frobenius integrability condition (\ref{eq:gv3}) is satisfied 
if there exists a 1-form $\boldsymbol{\omega}^1_{\eta}$ such that 
\beqn
d\tilde{\boldsymbol{\omega}}_A^1=\boldsymbol{\omega}_\eta^1\wedge 
\tilde{\boldsymbol{\omega}}^1_A,  \label{eq:gv7}
\eeqn
Note that 
\beqn
\tilde{\boldsymbol{\omega}}^1_A\wedge d\tilde{\boldsymbol{\omega}}_A^1
=\tilde{\boldsymbol{\omega}}^1_A\wedge (\boldsymbol{\omega}^1_\eta
\wedge\tilde{\boldsymbol{\omega}}^1_A)=-\tilde{\boldsymbol{\omega}}^1_A\wedge
\tilde{\boldsymbol{\omega}}^1_A\wedge\boldsymbol{\omega}^1_\eta=0, 
\label{eq:gv8}
\eeqn
where we used the associative and anti-symmetry properties of the $\wedge$ 
operation. Equation (\ref{eq:gv7}) ensures $d\tilde{\boldsymbol{\omega}}^1_A=0$ 
whenever $\tilde{\boldsymbol{\omega}}^1_A=0$. The condition 
$d\tilde{\boldsymbol{\omega}}^1_A=0$ implies by the Poincar\'e Lemma 
that there exist a $0$-form $\Phi$ such that 
$\tilde{\boldsymbol{\omega}}^1_A=d\Phi$. The Pfaffian 
equation $\tilde{\boldsymbol{\omega}}^1_A=\tilde{\bf A}{\bf\cdot}d{\bf x}=0$ 
is then satisfied by $\Phi(x,y,z)=const.$. Equation (\ref{eq:gv7}) implies
that the set of forms $\{\tilde{\boldsymbol{\omega}}^1_A,
d\tilde{\boldsymbol{\omega}}^1_A\}$ 
is a closed ideal of differential forms which are in involution according 
to Cartan's theory of differential equations 
(e.g. Harrison and Estabrook (1971), i.e. the equations 
$\tilde{\boldsymbol{\omega}}^1_A=0$ are integrable and satisfy the 
integrability conditions (\ref{eq:gv3})).  Equations (\ref{eq:gv7}) are similar
to the Maurer Cartan equations, which are differentiability conditions 
in differential geometry. 

We require that $d\tilde{\boldsymbol{\omega}}^1_A$ 
is advected with the flow, i.e. 
\beqn
\left(\derv{t}+{\cal L}_{\bf u}\right)d\tilde{\boldsymbol{\omega}}^1_A\equiv 
\left(\derv{t}+{\cal L}_{\bf u}\right)\left(\boldsymbol{\omega}^1_\eta
\wedge\tilde{\boldsymbol{\omega}}^1_A\right)=0. \label{eq:gv9}
\eeqn
Expanding (\ref{eq:gv9}) using the properties of the 
Lie derivative ${\cal L}_{\bf u}$ gives:
\beqn 
\left[\left(\derv{t}+{\cal L}_{\bf u}\right)
\tilde{\boldsymbol{\omega}}^1_\eta\right]
\wedge\tilde{\boldsymbol{\omega}}^1_A
+\boldsymbol{\omega}^1_\eta\wedge\left[\left(\derv{t}+{\cal L}_{\bf u}\right)
\tilde{\boldsymbol{\omega}}^1_A\right]=0. \label{eq:gv10}
\eeqn
Using (\ref{eq:gv10}) and the condition that $\tilde{\boldsymbol{\omega}}_A^1$ 
is Lie dragged with the flow (\ref{eq:gv10}) simplifies to:
\beqn
\left[\left(\derv{t}+{\cal L}_{\bf u}\right)
\boldsymbol{\omega}^1_\eta\right]
\wedge \tilde{\boldsymbol{\omega}}^1_A=0, \label{eq:gv11}
\eeqn

Equation (\ref{eq:gv11}) is satisfied if 
\beqn
\left(\derv{t}+{\cal L}_{\bf u}\right)\boldsymbol{\omega}^1_\eta=\alpha
\tilde{\boldsymbol{\omega}}^1_A, \label{eq:gv12}
\eeqn
Equation (\ref{eq:gv12}) can also be written in the form:
\beqn
\deriv{\boldsymbol{\eta}}{t}-{\bf u}\times(\nabla\times\boldsymbol{\eta})
+\nabla({\bf u}{\bf\cdot}\boldsymbol{\eta})=\alpha\tilde{\bf A}. 
\label{eq:gv12a}
\eeqn
Taking the scalar product of (\ref{eq:gv12a}) with $\tilde{\bf A}$ gives:
\beqn
\alpha|\tilde{\bf A}|^2=\tilde{\bf A}{\bf\cdot}
\left[\deriv{\boldsymbol{\eta}}{t}-{\bf u}
\times(\nabla\times\boldsymbol{\eta})
+\nabla({\bf u}{\bf\cdot}\boldsymbol{\eta})\right]. \label{eq:gv12b}
\eeqn
An alternative expression for $\alpha$ can be obtained by noting that 
$\tilde{\bf A}{\bf\cdot}d{\bf x}$ is Lie dragged with the flow. Thus, 
$\tilde{\bf A}$ satisfies (\ref{eq:2.30a}), and hence:
\beqn
0=\boldsymbol{\eta}{\bf\cdot}
\left[\deriv{\tilde{\bf A}}{t}
-{\bf u}\times(\nabla\times\tilde{\bf A})
+\nabla({\bf u}{\bf\cdot}\tilde{\bf A})\right]. 
\label{eq:gv12c}
\eeqn
Noting that $\tilde{\bf A}{\bf\cdot}\boldsymbol{\eta}=
\tilde{\bf A}{\bf\cdot}(\tilde{\bf A}\times {\bf B}/|\tilde{\bf A}|^2)=0$ 
and adding (\ref{eq:gv12b}) 
and (\ref{eq:gv12c}) we obtain:
\beqn
\alpha=\frac{1}{|\tilde{\bf A}|^2}
\left\{ \tilde{\bf A}{\bf\cdot}[{\bf u\cdot}\nabla\boldsymbol{\eta}
+(\nabla{\bf u})^T{\bf\cdot}\boldsymbol{\eta}]
+\boldsymbol{\eta}{\bf\cdot}[{\bf u}{\bf\cdot}\nabla\tilde{\bf A}
+(\nabla{\bf u})^T{\bf\cdot}\tilde{\bf A}]\right\}. \label{eq:gv12d}
\eeqn

Next we investigate if the 3-form:
\beqn
\boldsymbol{\omega}^3_\eta=\boldsymbol{\omega}^1_\eta\wedge 
d\boldsymbol{\omega}^1_\eta, \label{eq:gv13}
\eeqn
is an advected (Lie dragged) 3-form. We find:
\beqn
\left(\derv{t}+{\cal L}_{\bf u}\right)\boldsymbol{\omega}^3_\eta
=-d\left(\alpha d\tilde{\boldsymbol{\omega}}^1_A\right).
\label{eq:gv14}
\eeqn
To derive (\ref{eq:gv14}) first note that
\begin{align}
\left(\derv{t}+{\cal L}_{\bf u}\right)\boldsymbol{\omega}^3_\eta
=&\left[\left(\derv{t}+{\cal L}_{\bf u}\right)\boldsymbol{\omega}^1_\eta\right]
\wedge d\boldsymbol{\omega}^1_\eta
+\boldsymbol{\omega}^1_\eta\wedge
\left[\left(\derv{t}+{\cal L}_{\bf u}\right)
\wedge d\boldsymbol{\omega}^1_\eta\right]
\nonumber\\
=&\alpha\boldsymbol{\omega}_A^1\wedge d\boldsymbol{\omega}_\eta^1+
\boldsymbol{\omega}^1_\eta\wedge 
d\left[\left(\derv{t}+{\cal L}_{\bf u}\right)
\boldsymbol{\omega}^1_\eta\right].
\label{eq:gv14a}
\end{align}
Next we use the fact that 
$d\boldsymbol{\omega}^1_\eta\wedge\tilde{\boldsymbol{\omega}}^1_A=0$ 
which follows by noting 
\beqn
d\left(d\tilde{\boldsymbol{\omega}}^1_A\right)=0\equiv d\left(\boldsymbol{\omega}^1_\eta\wedge
\tilde{\boldsymbol{\omega}}^1_A\right)=d\boldsymbol{\omega}^1_\eta\wedge 
\tilde{\boldsymbol{\omega}}^1_A-\boldsymbol{\omega}^1_\eta\wedge 
d\tilde{\boldsymbol{\omega}}^1_A, \label{eq:gv15}
\eeqn
and that 
$\boldsymbol{\omega}^1_\eta\wedge d\tilde{\boldsymbol{\omega}}^1_A=0$ by 
(\ref{eq:gv7}). Thus, 
\beqn
\left(\derv{t}+{\cal L}_{\bf u}\right)
\boldsymbol{\omega}^3_\eta
=\boldsymbol{\omega}^1_\eta
\wedge d[\alpha\boldsymbol{\omega}_A^1]
\equiv -d\left[\boldsymbol{\omega}_\eta^1
\wedge\alpha\boldsymbol{\omega}_A^1\right]. \label{eq:gv15a}
\eeqn
which reduces to (\ref{eq:gv14}).

Next we consider the Godbillon Vey integral:
\beqn
I^g =\int \boldsymbol{\omega}^3_\eta
=\int \boldsymbol{\omega}^1_\eta\wedge d\boldsymbol{\omega}^1_\eta 
\equiv\int_{D^3(t)} 
\boldsymbol{\eta}{\bf\cdot}\nabla\times\boldsymbol{\eta}\ d^3x. 
\label{eq:gv16}
\eeqn
Using (\ref{eq:gv14}) gives:
\beqn
\deriv{I^g}{t}=\int\deriv{\boldsymbol{\omega}^3_\eta}{t}
=\int\left[-{\cal L}_{\bf u}\left(\boldsymbol{\omega}^3_\eta\right) 
-d\left(\alpha d\tilde{\boldsymbol{\omega}}^1_A\right)\right]. \label{eq:gv17}
\eeqn
However, using Cartan's magic formula gives 
\beqn
{\cal L}_{\bf u}\left(\boldsymbol{\omega}^3_\eta\right)
=d\left({\bf u}\contr \boldsymbol{\omega}^3_\eta\right)+{\bf u}\contr d\boldsymbol{\omega}^3_\eta= d\left({\bf u}\contr \boldsymbol{\omega}^3_\eta\right), \label{eq:gv18}
\eeqn
(note $\boldsymbol{\omega}^3_\eta$ is a 3-form and hence 
$d \boldsymbol{\omega}^3_\eta=0$). From (\ref{eq:gv18}) and (\ref{eq:gv17}) 
we obtain:
\beqn
\deriv{I^g}{t}=\int_{D^3(t)}-d\left({\bf u}\contr \boldsymbol{\omega}^3_\eta
+\alpha d\tilde{\boldsymbol{\omega}}^1_A\right)
=-\int_{\partial D^3(t)}
\left({\bf u}\contr \boldsymbol{\omega}^3_\eta
+\alpha d\tilde{\boldsymbol{\omega}}^1_A\right), \label{eq:gv18a}
\eeqn
Writing
\beqn
\psi=\boldsymbol{\eta}{\bf\cdot}\nabla\times\boldsymbol{\eta}, \label{eq:gv19}
\eeqn
(\ref{eq:gv18a}) can be written in the form:
\beqnar
\int_{D^3(t)}\deriv{\psi}{t} \ d^3x&=&
-\int\left\{{\bf u}\contr\left(\boldsymbol{\omega}^1_\eta\wedge 
d\boldsymbol{\omega}_\eta^1\right)+\alpha 
d(\tilde{\bf A}{\bf\cdot}d{\bf x})\right\} \nonumber\\
&=&-\int\left\{{\bf u}\contr\left[(\boldsymbol{\eta}{\bf\cdot}
d{\bf x})\wedge(\nabla\times\boldsymbol{\eta}){\bf\cdot}d{\bf S}\right]
+\alpha(\nabla\times\tilde{\bf A}){\bf\cdot}d{\bf S}\right\}\nonumber\\
&=&-\int\left[{\bf u}\contr(\boldsymbol{\eta}{\bf\cdot}
\nabla\times\boldsymbol{\eta})d^3x
+\alpha{\bf B}{\bf\cdot}d{\bf S}\right]
\nonumber\\
&=&-\int_{\partial D^3(t)} \left[\psi {\bf u}{\bf\cdot}d{\bf S}
+\alpha{\bf B}{\bf\cdot}d{\bf S}\right]\nonumber\\
&=&-\int_{D^3(t)}\nabla{\bf\cdot}({\bf u}\psi+\alpha {\bf B})\ d^3x. 
\label{eq:gv20}
\eeqnar
Equation (\ref{eq:gv20}) implies the 
conservation law:
\beqn
\deriv{\psi}{t}+\nabla{\bf\cdot}({\bf u}\psi+\alpha {\bf B})=0. \label{eq:gv21}
\eeqn
where $\alpha$ is given in (\ref{eq:gv12d}).

Integrating the continuity equation (\ref{eq:gv21}) for $\psi$ 
over the volume $D^3(t)$, and using the results 
\beqn
\psi d^3x=\psi({\bf x}_0)d^3x_0,\quad d^3x=Jd^3x_0,\quad \psi J=\psi_0(x_0), 
\quad \frac{d\ln J}{dt}=\nabla{\bf\cdot}{\bf u},
\label{eq:gv22}
\eeqn
from Lagrangian fluid mechanics where $J=\det(x_{ij})$ is the Jacobian 
determinant of $x_{ij}=\partial x^i/\partial x_0^j$ of the Lagrangian 
map relating the Eulerian position coordiante ${\bf x}$ and the Lagrangian
label ${\bf x}_0$ where ${\bf x}={\bf x}_0$ at $t=0$, we obtain:
\beqnar
0&=&\int_{D^3(t)}\left[\deriv{\psi}{t}+\nabla{\bf\cdot}({\bf u}\psi
+\alpha {\bf B})\right]
\ d^3x
\nonumber\\
&=&\int_{D^3(t)} \left[\deriv{\psi}{t}+\left(\psi\frac{d\ln J}{dt}
+{\bf u}{\bf\cdot}\nabla\psi\right)\right]\  Jd^3x_0\nonumber\\
&=&\int_{D^3(t)}\ \left[J\frac{d\psi}{dt}+\psi\frac{dJ}{dt}\right]\ d^3x_0
\nonumber\\
&=&\int_{D^3(t)} \left[\frac{d\psi}{dt} d^3x+\psi\frac{d}{dt}(d^3x)\right].
\label{eq:gv23}
\eeqnar
In the second line in (\ref{eq:gv23}) there is no contribution from 
the $\alpha{\bf B}$ term, because if we apply Gauss's theorem 
$\nabla{\bf\cdot}(\alpha {\bf B})d^3x\to \alpha {\bf B}{\bf\cdot}d{\bf S}
=\alpha{\bf B}{\bf\cdot}\tilde{\bf A} dS/|\tilde{\bf A}|=0$ 
and because ${\bf B}{\bf\cdot}\tilde{\bf A}=0$ is 
the integrability condition for  
$\tilde{\bf A}{\bf\cdot}d{\bf x}=0$. 
The last integral in (\ref{eq:gv23}) can be recognized as $dI^g/dt$. 
Thus, (\ref{eq:gv23}) implies the Lagrangian conservation law:
\beqn
\frac{dI^g}{dt}=0. \label{eq:gv24}
\eeqn
Thus $I^g$ is a constant moving with the flow.  This completes the proof 
of (\ref{eq:gv6a}). 
\end{proof}

\section{Hamiltonian Approach}

In this section we discuss the Hamiltonian approach to MHD and gas dynamics. 
In Section 5.1 we give a brief description of a constrained variational 
principle for MHD using Lagrange multipliers to enforce the constraints 
of mass conservation; the entropy advection equation; Faraday's 
equation and the so-called Lin constraint describing in part, the vorticity
of the flow (i.e. Kelvin's theorem). This leads to Hamilton's canonical 
equations in terms of Clebsch potentials. A basic reference 
 is the paper 
by Zakharov and Kuznetsov (1997). The Lagrange multipliers define 
the Clebsch variables, which give a representation for the 
fluid velocity ${\bf u}$.  In Section 5.2 we transform the canonical 
Poisson bracket obtained from the Clebsch variable approach to a 
non-canonical Poisson bracket written in terms of Eulerian 
physical variables (see e.g. Morrison and Greene (1980,1982), Morrison (1982),  
and Holm and Kupershmidt (1983a,b) for more details). In Section 5.3 
we discuss the connection between the Clebsch variable approach and 
Weber transformations. Our main aim 
is to obtain the Clebsch variable evolution equations that follow from 
the variational principle. We use these evolution equations and Clebsch 
variables later to obtain nonlocal fluid helicity and cross helicity 
conservation laws in the next section.  

\subsection{Clebsch variables and Hamilton's Equations}

Consider the MHD action (modified by constraints):
\beqn
J=\int\ d^3x\ dt  L,  \label{eq:Clebsch1}
\eeqn
where
\beqnar
L=&&\left\{\frac{1}{2}\rho u^2-\epsilon(\rho, S)-\frac{B^2}{2\mu_0}\right\}
+\phi\left(\deriv{\rho}{t}+\nabla{\bf\cdot}(\rho {\bf u})\right)\nonumber\\
&&+\beta\left(\deriv{S}{t}+{\bf u}{\bf\cdot}\nabla S\right)
+\lambda\left(\deriv{\mu}{t}+{\bf u\cdot}\nabla\mu\right) \nonumber\\
&&+\boldsymbol{\Gamma}{\bf\cdot}\left(\deriv{\bf B}{t}-\nabla\times({\bf u}\times{\bf B})
+{\bf u}(\nabla{\bf\cdot B})\right). \label{eq:Clebsch2}
\eeqnar
The Lagrangian in curly brackets equals the kinetic minus
the potential energy (internal thermodynamic energy plus magnetic energy).
The Lagrange multipliers $\phi$, $\beta$, $\lambda$, 
and $\boldsymbol{\Gamma}$ ensure that the 
mass, entropy, Lin constraint, Faraday equations are satisfied. We do not 
enforce $\nabla{\bf\cdot}{\bf B}= 0$, since we are interested in the 
effect of $\nabla{\bf\cdot}{\bf B}\neq 0$ (which is useful for numerical 
MHD where $\nabla{\bf\cdot}{\bf B}\neq 0$). It is straightforward 
to impose $\nabla{\bf\cdot}{\bf B}=0$ if desired, although some care 
is required in the formulation of the Poisson bracket, to ensure that the 
Jacobi identity is satisfied 
(e.g. Morrison and Greene 1982, {\bf Morrison 1982}).  

 Stationary point conditions for the action are $\delta J=0$.
 $\delta J/\delta {\bf u}=0$ gives the Clebsch representation
for ${\bf u}$:
\beqn
{\bf u}=\nabla\phi-\frac{\beta}{\rho}\nabla S-\frac{\lambda}{\rho}\nabla\mu+
{\bf u}_M\label{eq:Clebsch3}
\eeqn
where
\beqn
 {\bf u}_M=-\frac{(\nabla\times\boldsymbol{\Gamma})\times{\bf B}}{\rho}
-\boldsymbol{\Gamma}\frac{\nabla{\bf\cdot B}}{\rho}, \label{eq:Clebsch4}
\eeqn
is magnetic contribution to ${\bf u}$.
 Setting $\delta J/\delta\phi$, $\delta J/\delta\beta$,
$\delta J/\delta \lambda$, $\delta J/\delta\boldsymbol{\Gamma}$ consecutively 
equal to zero gives the mass, entropy advection, Lin constraint, 
and Faraday (magnetic flux conservation) constraint
equations:
\beqnar
&&\rho_t+\nabla{\bf\cdot}(\rho {\bf u})=0,\nonumber\\
&&S_t+{\bf u}{\bf\cdot}\nabla S=0,\nonumber\\
&&\mu_t+{\bf u\cdot}\nabla\mu=0, \nonumber\\
&&{\bf B}_t-\nabla\times({\bf u}\times{\bf B})+{\bf u}(\nabla{\bf\cdot B})=0.
\label{eq:Clebsch5}
\eeqnar

 Setting $\delta J/\delta\rho$, $\delta J/\delta S$, $\delta J/\delta\mu$,
$\delta J/\delta {\bf B}$ equal to zero gives evolution equations 
for the Clebsch potentials $\phi$, $\beta$ $\lambda$ and $\boldsymbol{\Gamma}$
as:
\beqnar
&&-\left(\deriv{\phi}{t}+{\bf u}{\bf\cdot}\nabla\phi\right)
+\frac{1}{2} u^2-h=0, \label{eq:Clebsch6}\\
&&\deriv{\beta}{t}+\nabla{\bf\cdot}(\beta {\bf u})+\rho T=0, 
\label{eq:Clebsch7}\\
&&\deriv{\lambda}{t}+\nabla{\bf\cdot}(\lambda {\bf u})=0, 
\label{eq:Clebsch8}\\
&&\deriv{\boldsymbol{\Gamma}}{t}
-{\bf u}\times (\nabla\times\boldsymbol{\Gamma})
+\nabla(\boldsymbol{\Gamma}{\bf\cdot u})+\frac{\bf B}{\mu_0}=0. 
\label{eq:Clebsch9}
\eeqnar
Equation (\ref{eq:Clebsch6}) is related to Bernoulli's equation for potential 
flow.The $\nabla(\boldsymbol{\Gamma}{\bf\cdot u})$ term 
in (\ref{eq:Clebsch9}) is associated with
$\nabla{\bf\cdot}{\bf B}\neq 0$.
 Taking the curl of (\ref{eq:Clebsch9})  gives:
\beqn
\deriv{\tilde{\boldsymbol{\Gamma}}}{t}
-\nabla\times({\bf u}\times {\tilde{\boldsymbol{\Gamma}}})
=-\frac{\nabla\times{\bf B}}{\mu_0} \quad\hbox{where}
\quad \tilde{\boldsymbol{\Gamma}}=\nabla\times\boldsymbol{\Gamma}.  \label{eq:Clebsch10}
\eeqn

 Equations (\ref{eq:Clebsch6})-(\ref{eq:Clebsch10}) can be written in the form:
\beqnar
&&\frac{d\phi}{dt}=\frac{1}{2}u^2-h,\quad 
\frac{d}{dt}\left(\frac{\beta}{\rho}\right)=-T, \nonumber\\
&&\frac{d}{dt}\left(\lambda d^3x\right)=0\quad \hbox{or}
\quad \frac{d}{dt}\left(\frac{\lambda}{\rho}\right)=0, \nonumber\\
&&\frac{d}{dt}(\boldsymbol{\Gamma}{\bf\cdot}d{\bf x})
=-\frac{{\bf B}{\bf\cdot}d{\bf x}}{\mu_0},\quad
\frac{d}{dt}(\tilde{\boldsymbol{\Gamma}}{\bf\cdot}d{\bf S})
=-{\bf J}{\bf\cdot}d{\bf S}. \label{eq:Clebsch11}
\eeqnar
where $d/dt=\partial/\partial t+{\bf u}{\bf\cdot}\nabla$,
is the Lagrangian time
derivative following the flow and ${\bf J}=\nabla\times{\bf B}/\mu_0$ 
is the current. 

 Introduce the Hamiltonian functional:
\beqn
{\cal H}=\int H d^3x\quad\hbox{where}\quad H=\frac{1}{2}\rho u^2+\epsilon(\rho,S)+\frac{B^2}{2\mu_0}. \label{eq:H1}
\eeqn
 Substitute the Clebsch
expansion (\ref{eq:Clebsch3})-(\ref{eq:Clebsch4}) for ${\bf u}$
in (\ref{eq:H1}). Evaluating the  variational derivatives
of ${\cal H}$ gives Hamilton's equations:
\begin{align}
&\deriv{\rho}{t}=\frac{\delta{\cal H}}{\delta \phi},
\quad \deriv{\phi}{t}=-\frac{\delta {\cal H}}{\delta\rho}, \quad
\deriv{S}{t}=\frac{\delta{\cal H}}{\delta\beta},\quad
\deriv{\beta}{t}=-\frac{\delta{\cal H}}{\delta S}, \nonumber\\
&\deriv{\mu}{t}=\frac{\delta{\cal H}}{\delta{\lambda}}, \quad
\deriv{\lambda}{t}=-\frac{\delta{\cal H}}{\delta\mu}, \quad
\deriv{\bf B}{t}=\frac{\delta{\cal H}}{\delta\boldsymbol{\Gamma}}, \quad
\deriv{\boldsymbol{\Gamma}}{t}=-\frac{\delta{\cal H}}{\delta{\bf B}}. 
\label{eq:H3} 
\end{align}
Here  $\{\rho,\phi\}$, $\{S,\beta\}$,
$\{\mu,\lambda\}$, $\{{\bf B},\boldsymbol{\Gamma}\}$
are canonically conjugate variables.

The canonical Poisson bracket is:
\begin{align}
\{F,G\}=&\int d^3x\ \biggl(\frac{\delta F}{\delta\rho}
\frac{\delta G}{\delta \phi}-\frac{\delta F}{\delta \phi}
\frac{\delta G}{\delta\rho}
+\frac{\delta F}{\delta{\bf B}}
{\bf\cdot}\frac{\delta G}{\delta \boldsymbol{\Gamma}}
-\frac{\delta F}{\delta \boldsymbol{\Gamma}}
{\bf\cdot}\frac{\delta G}{\delta{\bf B}}\nonumber\\
&\quad +\frac{\delta F}{\delta S}
\frac{\delta G}{\delta \beta}-\frac{\delta F}{\delta\beta}
\frac{\delta G}{\delta S}
+\frac{\delta F}{\delta\mu}
\frac{\delta G}{\delta\lambda}-\frac{\delta F}{\delta\lambda}
\frac{\delta G}{\delta\mu}
\biggr). \label{eq:H4}
\end{align}

It is straightforward to verify that the canonical Poisson bracket (\ref{eq:H4})
satisfies the linearity, skew symmetry and Jacobi identity necessary
for a Hamiltonian system (i.e. the Poisson bracket defines a Lie algebra).
\subsection{Non-Canonical Poisson Brackets}
{\bf Morrison and Greene (1980,1982)  introduced  non-canonical 
Poisson brackets for MHD for the case $\nabla{\bf\cdot}{\bf B}=0$. 
Morrison and Greene (1982) and Morrison (1982) discuss the form of the 
Poisson bracket if $\nabla{\bf\cdot}{\bf B}\neq 0$.  Morrison (1982) 
discusses the proof of the Jacobi identity. Holm and Kupershmidt (1983) 
point out that the Poisson bracket has the form expected for a semi-direct 
product Lie algebra, for which the Jacobi identity is automatically satisfied. 
Chandre et al. (2013) discuss the $\nabla{\bf\cdot}{\bf B}=0$ constraint 
using Dirac's 
method of constraints and the Dirac bracket.}  

Introduce the new variables:
\beqn
{\bf M}=\rho {\bf u}, \quad \sigma=\rho S, \label{eq:H5}
\eeqn
The formulae for the transformation of 
variational derivatives in the old variables 
$(\rho,\phi,S,\beta,{\bf B},\boldsymbol{\Gamma})$ in terms of the 
new variables $(\rho,\sigma,{\bf B},{\bf M})$ are:
\begin{align}
&\frac{\delta F}{\delta\rho}=\frac{\delta F}{\delta\rho}+S\frac{\delta F}
{\delta\sigma}
+\frac{\delta F}{\delta{\bf M}}{\bf\cdot}\nabla\phi,\quad
\frac{\delta F}{\delta\phi}=
-\nabla{\bf\cdot}\left(\rho\frac{\delta F}{\delta{\bf M}}\right),\nonumber\\
&\frac{\delta F}{\delta S}=\rho\frac{\delta F}{\delta\sigma}
+\nabla{\bf\cdot}\left(\beta\frac{\delta F}{\delta {\bf M}}\right),\quad
\frac{\delta F}{\delta \beta}
=-\frac{\delta F}{\delta {\bf M}}{\bf\cdot}\nabla S, \nonumber\\
&\frac{\delta F}{\delta {\bf B}}=\left[\frac{\delta F}{\delta{\bf B}}
+\nabla\left(\frac{\delta F}{\delta {\bf M}}\right){\bf\cdot}
\boldsymbol{\Gamma}
+\frac{\delta F}{\delta {\bf M}}{\bf\cdot}\nabla\boldsymbol{\Gamma}\right]
\equiv
\frac{\delta F}{\delta{\bf B}}
+\nabla\left(\boldsymbol{\Gamma}{\bf\cdot}
\frac{\delta F}{\delta {\bf M}}\right)
+(\nabla\times\boldsymbol{\Gamma})\times 
\frac{\delta F}{\delta {\bf M}}, \nonumber\\
&\frac{\delta F}{\delta\boldsymbol{\Gamma}}=\left[{\bf B}{\bf\cdot}\nabla
\left(\frac{\delta F}{\delta {\bf M}}\right) 
-\nabla{\bf\cdot}\left(\frac{\delta F}{\delta{\bf M}}\right){\bf B}
-\frac{\delta F}{\delta{\bf M}}{\bf\cdot}\nabla{\bf B}\right]
\equiv \nabla\times\left(\frac{\delta F}{\delta {\bf M}}\times{\bf B}\right)
-\frac{\delta F}{\delta {\bf M}}(\nabla{\bf\cdot}{\bf B}),\nonumber\\
&\frac{\delta F}{\delta\mu}
=\nabla{\bf\cdot}\left(\lambda\frac{\delta F}{\delta{\bf M}}\right), \quad
\frac{\delta F}{\delta\lambda}=-\frac{\delta F}{\delta{\bf M}}
{\bf\cdot}\nabla\mu.  \label{eq:H6}
\end{align}
Note that 
\beqn
{\bf M}=\rho {\bf u}=\rho\nabla\phi-\beta\nabla S
+{\bf B}{\bf\cdot}(\nabla\boldsymbol{\Gamma})^T
-{\bf B}{\bf\cdot}\nabla\boldsymbol{\Gamma}
-\boldsymbol{\Gamma}(\nabla{\bf\cdot}{\bf B}). \label{eq:H7}
\eeqn

Using the transformations (\ref{eq:H6}) in the canonical Poisson 
bracket (\ref{eq:H4})
we obtain the Morrison and Greene (1982) non-canonical  
Poisson bracket:
\begin{align}
\left\{F,G\right\}=&-\int\ d^3x \biggl\{ \rho
\left[\frac{\delta F}{\delta{\bf M}}
{\bf\cdot}\nabla\left(\frac{\delta G}{\delta\rho}\right)
-\frac{\delta G}{\delta{\bf M}}
{\bf\cdot}\nabla\left(\frac{\delta F}{\delta\rho}\right)\right]\nonumber\\
&+\sigma\left[\frac{\delta F}{\delta{\bf M}}
{\bf\cdot}\nabla\left(\frac{\delta G}{\delta\sigma}\right)
-\frac{\delta G}{\delta{\bf M}}
{\bf\cdot}\nabla\left(\frac{\delta F}{\delta\sigma}\right)\right]\nonumber\\
&+{\bf M}{\bf\cdot}\left[\left(\frac{\delta F}{\delta{\bf M}}
{\bf\cdot}\nabla\right)\frac{\delta G}{\delta{\bf M}}
-\left(\frac{\delta G}{\delta{\bf M}}
{\bf\cdot}\nabla\right)\frac{\delta F}{\delta{\bf M}}\right]\nonumber\\
&+{\bf B}{\bf\cdot}\left[\frac{\delta F}{\delta{\bf M}}
{\bf\cdot}\nabla\left(\frac{\delta G}{\delta{\bf B}}\right)
-\frac{\delta G}{\delta{\bf M}}
{\bf\cdot}\nabla\left(\frac{\delta F}{\delta{\bf B}}\right)\right]\nonumber\\
&+{\bf B}{\bf\cdot}\left[
\left(\nabla\frac{\delta F}{\delta {\bf M}}\right){\bf\cdot}
\frac{\delta G}{\delta{\bf B}}
-\left(\nabla\frac{\delta G}{\delta{\bf M}}\right){\bf\cdot}
\frac{\delta F}{\delta{\bf B}}\right]
\biggr\}. \label{eq:H9}
\end{align}
The bracket (\ref{eq:H9}) has the Lie-Poisson form and 
satisfies the Jacobi identity for all functionals 
$F$ and $G$ of the physical variables, and in general applies both for 
$\nabla{\bf\cdot}{\bf B}\neq 0$  and $\nabla{\bf\cdot}{\bf B}=0$.

\subsection{Weber Transformations}
 The classical Weber transformation uses the Lagrangian
map: ${\bf x}={\bf x}({\bf x}_0,t)$ to integrate the Eulerian
momentum equation to get the 
Clebsch representation for ${\bf u}$.
 The Eulerian momentum conservation equation  can be written as:
\beqn
\derv{t}(\rho {\bf u})+\nabla{\bf\cdot}\left[\rho{\bf u}\otimes{\bf u}
+p{\sf I}
+\left(\frac{B^2}{2\mu_0}{\sf I}
-\frac{{\bf B}\otimes{\bf B}}{\mu_0}\right)\right]=0, \label{eq:wt1}
\eeqn
or as:
\beqn 
\frac{d{\bf u}}{dt}=T\nabla S-\nabla h+\frac{{\bf J}\times {\bf B}}{\rho}
+{\bf B}
\frac{\nabla{\bf\cdot}{\bf B}}{\mu_0\rho}. \label{eq:wt2}
\eeqn
Use:
\beqnar
&&\frac{d{\bf u}}{dt}
=\deriv{\bf u}{t}+{\boldsymbol\omega}\times{\bf u}
+\nabla\left(\frac{1}{2}|{\bf u}|^2\right)\quad
\hbox{where}\quad  {\boldsymbol\omega}=\nabla\times{\bf u},\nonumber\\
&&\frac{d}{dt}\left({\bf u\cdot}d{\bf x}\right)
=\left[\deriv{\bf u}{t}+{\boldsymbol\omega}\times{\bf u}
+\nabla\left(|{\bf u}|^2\right)\right]{\bf\cdot}d{\bf x}, \label{eq:wt3}
\eeqnar
to get:
\beqn
\frac{d}{dt}\left({\bf u\cdot}d{\bf x}\right)=
\left[T\nabla S
+\nabla \left(\frac{1}{2} |{\bf u}|^2-h\right)
+\frac{{\bf J}\times {\bf B}}{\rho}
+{\bf B}
\frac{\nabla{\bf\cdot}{\bf B}}{\mu_0\rho}\right]{\bf\cdot}d{\bf x}.
\label{eq:wt4}
\eeqn

 On  the right-hand side (RHS) of (\ref{eq:wt4}) for the magnetic terms we 
use:
\beqnar
&&\frac{d}{dt}
\left[\left(\frac{(\nabla\times\boldsymbol{\Gamma})\times {\bf B}}{\rho}\right)
{\bf\cdot}d{\bf x}\right]
= -\left(\frac{{\bf J}\times{\bf B}}{\rho}\right){\bf\cdot}d{\bf x}
\label{eq:wt5}\\
&&\frac{d}{dt}
\left[\left(\frac{\nabla{\bf\cdot B}}{\rho}\right) \boldsymbol{\Gamma}
{\bf\cdot}d{\bf x}\right]
=-\left(\frac{\nabla{\bf\cdot B}}{\rho}\right)
\frac{\bf B}{\mu_0}{\bf\cdot}d{\bf x}.
\label{eq:wt6}
\eeqnar
 On the right hand side of (\ref{eq:wt4}) for the gas bits we use:
\beqnar
&&\frac{d}{dt}\left(\nabla\phi{\bf\cdot}d{\bf x}\right)=
\nabla\left(\frac{1}{2}\left|{\bf u}\right|^2-h\right){\bf\cdot}d{\bf x},
\nonumber\\
&&\frac{d}{dt}\left(r\nabla S{\bf\cdot}d{\bf x}\right)
=-T\nabla S{\bf\cdot}d{\bf x},
\quad \frac{d}{dt}\left(\tilde{\lambda}\nabla\mu{\bf\cdot}d{\bf x}\right)=0, \nonumber\\
&&\tilde{\lambda}=\frac{\lambda}{\rho}, \quad r=\frac{\beta}{\rho}.
\label{eq:wt7}
\eeqnar
 to obtain the Clebsch
representation ${\bf u}={\bf u}_h+{\bf u}_M$
in (\ref{eq:Clebsch3})-(\ref{eq:Clebsch4}).

\begin{proposition}
Equations (\ref{eq:wt4}) -(\ref{eq:wt7}) imply the Clebsch representation 
${\bf u}={\bf u}_h+{\bf u}_M$ in (\ref{eq:Clebsch3})-(\ref{eq:Clebsch4}).
\end{proposition}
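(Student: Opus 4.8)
The plan is to integrate the one-form balance law (\ref{eq:wt4}) by observing that each term on its right-hand side is a perfect Lagrangian time derivative of a one-form, as furnished by the auxiliary identities (\ref{eq:wt5})--(\ref{eq:wt7}). First I would recast every right-hand contribution of (\ref{eq:wt4}) as a $d/dt$: the gas identities (\ref{eq:wt7}) give $\nabla(\frac12|{\bf u}|^2-h)\cdot d{\bf x}=\frac{d}{dt}(\nabla\phi\cdot d{\bf x})$ and $T\nabla S\cdot d{\bf x}=-\frac{d}{dt}(r\nabla S\cdot d{\bf x})$, while the magnetic identities (\ref{eq:wt5})--(\ref{eq:wt6}) give $({\bf J}\times{\bf B}/\rho)\cdot d{\bf x}=-\frac{d}{dt}[((\nabla\times\boldsymbol{\Gamma})\times{\bf B}/\rho)\cdot d{\bf x}]$ and $({\bf B}\,\nabla\cdot{\bf B}/\mu_0\rho)\cdot d{\bf x}=-\frac{d}{dt}[(\boldsymbol{\Gamma}\,\nabla\cdot{\bf B}/\rho)\cdot d{\bf x}]$. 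With $\tilde{\lambda}=\lambda/\rho$ and $r=\beta/\rho$, these four statements are exactly the evolution equations (\ref{eq:Clebsch11}) for the Clebsch potentials written as one-form transport laws.

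Substituting these perfect derivatives into (\ref{eq:wt4}) and transposing all terms to one side, the four contributions combine with $\frac{d}{dt}({\bf u}\cdot d{\bf x})$ into a single total Lagrangian derivative. Recognizing the magnetic combination $(\nabla\times\boldsymbol{\Gamma})\times{\bf B}/\rho+\boldsymbol{\Gamma}\,\nabla\cdot{\bf B}/\rho=-{\bf u}_M$ from the definition (\ref{eq:Clebsch4}), the equation collapses to $\frac{d}{dt}[({\bf u}-\nabla\phi+r\nabla S-{\bf u}_M)\cdot d{\bf x}]=0$. Thus the one-form ${\bf w}\cdot d{\bf x}$, with ${\bf w}={\bf u}-\nabla\phi+r\nabla S-{\bf u}_M$, is Lie dragged by the flow. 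I would then invoke the last identity in (\ref{eq:wt7}), namely that $\tilde{\lambda}\nabla\mu\cdot d{\bf x}$ is advected (a consequence of $d\mu/dt=0$ and $\frac{d}{dt}(\lambda/\rho)=0$), so that ${\bf w}\cdot d{\bf x}$ and $-\tilde{\lambda}\nabla\mu\cdot d{\bf x}$ are both advected one-forms.

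Since an advected one-form equals its initial value in Lagrangian coordinates, the difference $({\bf w}+\tilde{\lambda}\nabla\mu)\cdot d{\bf x}$ is determined entirely by its value at $t=0$. Choosing the initial data so that the potentials represent the initial velocity — with $\phi_0=r_0=0$ and $\boldsymbol{\Gamma}_0=0$ in accord with the definitions (\ref{eq:ert2}), and $\tilde{\lambda}_0,\mu_0$ supplying the Clebsch decomposition of ${\bf u}_0$ — forces this difference to vanish at $t=0$ and hence for all time. This yields ${\bf u}-\nabla\phi+r\nabla S-{\bf u}_M=-\tilde{\lambda}\nabla\mu$, which is precisely the Clebsch representation (\ref{eq:Clebsch3})--(\ref{eq:Clebsch4}), completing the proof.

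I expect the genuine content to lie in the final identification rather than in the algebra. Assembling the perfect derivatives is essentially bookkeeping, the only care being the consistent tracking of signs through (\ref{eq:wt5})--(\ref{eq:wt7}). The subtle point is pinning down the ``integration constant'' of the advected one-form: one must be able to Clebsch-decompose the initial velocity field as a gradient plus a single $\lambda\nabla\mu$ term, which is the classical local Pfaff/Clebsch (Darboux) decomposition and can fail globally when the initial helicity density is topologically obstructed. The remainder of the argument then rests on the uniqueness of advected one-forms determined by their Lagrangian initial values.
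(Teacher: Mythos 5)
Your proposal is correct and follows essentially the same route as the paper: both use the identities (\ref{eq:wt5})--(\ref{eq:wt7}) to rewrite (\ref{eq:wt4}) as $\frac{d}{dt}({\bf w}{\bf\cdot}d{\bf x})=0$ and then fix the advected one-form by Clebsch-decomposing its initial value as $-\tilde{\lambda}_0\nabla\mu_0$, which is exactly the paper's choice $w^j=f_{00}\,\partial g_{00}/\partial x_0^j$ with $\tilde{\lambda}=-f_{00}$, $\mu=g_{00}$. Your closing remark about the local Darboux decomposition and its possible global obstruction is a point the paper passes over silently, but it does not change the argument.
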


\begin{proof}
Using (\ref{eq:wt5})-(\ref{eq:wt7}) in (\ref{eq:wt4})
gives:
\beqn
\frac{d}{dt}\left({\bf w}{\bf\cdot}d{\bf x}\right)=0, \label{eq:wt8a}
\eeqn
where
\beqn
{\bf w}={\bf u}
-\left(\nabla\phi-r\nabla S-\frac{\nabla\times\boldsymbol{\Gamma}}{\rho}
\times{\bf B}
-\left(\frac{\nabla{\bf\cdot B}}{\rho}
\right)
\boldsymbol{\Gamma}\right). \label{eq:wt8b}
\eeqn

Integration of  (\ref{eq:wt8a}) gives
\beqn
{\bf w}{\bf\cdot}d{\bf x}=f_0({\bf x}_0)^k dx_0^k
\quad\hbox{or}\quad
w^j=f_0({\bf x}_0)^k \partial x_0^k/\partial x^j. \label{eq:wt8c}
\eeqn
Using the  initial data: 
$w^j=f_0({\bf x}_0)^j=f_{00}({\bf x}_0) \partial g_{00}/\partial x_0^j$ 
at $t=0$ gives
\beqn
{\bf w}=-\tilde{\lambda}\nabla\mu, \label{eq:wt8d}
\eeqn
where $\tilde{\lambda}=-f_{00}$ and $\mu=g_{00}$. Equations 
(\ref{eq:wt8b})-(\ref{eq:wt8c})
then give:
\beqn
{\bf u}=\nabla\phi-\tilde{\lambda}\nabla\mu-r\nabla S-\frac{\nabla\times
\boldsymbol{\Gamma}}{\rho}\times{\bf B}-\left(\frac{\nabla{\bf\cdot B}}{\rho}
\right)\boldsymbol{\Gamma}, \label{eq:wt8e}
\eeqn
which is the Clebsch representation (\ref{eq:Clebsch3})-(\ref{eq:Clebsch4}) 
for ${\bf u}$, where $\tilde{\lambda}=\lambda/\rho$.

The proof of (\ref{eq:wt5}) is sketched below.
 Note that ${\bf b}={\bf B}/\rho$ is an advected vector field.
The one form on the LHS of (\ref{eq:wt5}) can be written as
\beqn
\alpha={\bf b}{\mathrel{\lrcorner}}
\left(\tilde{\boldsymbol{\Gamma}}{\bf\cdot}d{\bf S}\right)
=(\tilde{\boldsymbol{\Gamma}}\times{\bf b}){\bf\cdot}d{\bf x}
\equiv [(\nabla\times{\boldsymbol{\Gamma}})
\times {\bf B}/\rho]{\bf\cdot}d{\bf x}.
\label{eq:wt8}
\eeqn
where $\tilde{\boldsymbol{\Gamma}}=\nabla\times{\boldsymbol{\Gamma}}$.
The RHS of (\ref{eq:wt5}) is:
\beqnar
\frac{d\alpha}{dt}=&&\frac{d{\bf b}}{dt}
{\mathrel{\lrcorner}}\tilde{\boldsymbol{\Gamma}}{\bf\cdot}d{\bf S}
+{\bf b}{\mathrel{\lrcorner}}\frac{d}{dt}(\tilde{\boldsymbol{\Gamma}}
{\bf\cdot}d{\bf S})\nonumber\\
=&&0-{\bf b}{\mathrel{\lrcorner}}({\bf J}{\bf\cdot}d{\bf S})
\equiv -\frac{{\bf J}\times {\bf B}}{\rho}{\bf\cdot}d{\bf x}. \label{eq:wt9}
\eeqnar
This establishes (\ref{eq:wt5}). There are similar proofs for
(\ref{eq:wt6}) and (\ref{eq:wt7}).
\end{proof}

\section{Nonlocal Helicity Conservation Laws}

In this section we look again at the  
helicity conservation law (\ref{eq:hf2})
and the cross helicity conservation law (\ref{eq:crh2}). 
 The  helicity conservation law (\ref{eq:hf2}) requires that
the gas or fluid be barotropic (i.e. $p=p(\rho)$ is independent
of the entropy $S$) in order for this conservation law to apply.
Similarly, the cross helicity conservation equation (\ref{eq:crh2}) only
applies, if either (\romannumeral1)\ the gas is barotropic with $p=p(\rho)$
or if (\romannumeral2) ${\bf B}{\bf\cdot}\nabla S=0$, which implies that
the magnetic field lies in the constant entropy surface.
Using Clebsch variables allows
one to obtain analogous nonlocal conservation laws corresponding to the 
helicity in ordinary fluid dynamics, and the cross helicity conservation
law in MHD. 

\begin{proposition}\label{propnl1}
The generalized helicity conservation law in ideal fluid mechanics 
can be written in the form:
\beqn
\derv{t}\left[\boldsymbol{\Omega}{\bf\cdot}({\bf u}+r\nabla S)\right] 
+\nabla{\bf\cdot}\left\{{\bf u}\left[\boldsymbol{\Omega}{\bf\cdot}({\bf u}+r\nabla S)\right]+\boldsymbol{\Omega}\left(h-\frac{1}{2}|{\bf u}|^2\right)\right\}
=0. \label{eq:nl1}
\eeqn
The nonlocal conservation law (\ref{eq:nl1}) depends on the Clebsch variable 
formulation of ideal fluid mechanics in which the fluid velocity ${\bf u}$ 
is given by the equation:
\beqn
{\bf u}=\nabla\phi-r\nabla S-\tilde{\lambda} \nabla\mu, \label{eq:nl2}
\eeqn
where $\phi$, $r$, $S$, $\tilde{\lambda}$, and $\mu$ satisfy the equations:
\begin{align}
&\frac{d\phi}{dt}=\frac{1}{2} |{\bf u}|^2-h, \quad \frac{dr}{dt}=-T, 
\nonumber\\
&\frac{dS}{dt}=\frac{d\tilde{\lambda}}{dt}=\frac{d\mu}{dt}=0, 
\label{eq:nl3}
\end{align}
and $d/dt=\partial/\partial t+{\bf u}{\bf\cdot}\nabla$ is the Lagrangian time 
derivative following the flow. In (\ref{eq:nl1}) the generalized vorticity 
$\boldsymbol{\Omega}$ is defined by the equations:
\begin{align}
&{\bf w}={\bf u}-\nabla\phi+r\nabla S\equiv -\tilde{\lambda}\nabla\mu, 
\label{eq:nl4}\\
&\boldsymbol{\Omega}=\nabla\times{\bf w}=\boldsymbol{\omega}
+\nabla r\times\nabla S, \label{eq:nl5}
\end{align}
where $\boldsymbol{\omega}=\nabla\times{\bf u}$ is the fluid vorticity. 
The one-form $\alpha={\bf w}{\bf\cdot}d{\bf x}$ and the two-form 
$\beta=d\alpha=\boldsymbol{\Omega}{\bf\cdot}d S$ 
are advected invariants, i.e. 
\begin{align}
&\frac{d\alpha}{dt}=\left(\derv{t}+{\cal L}_{\bf u}\right) \alpha\equiv
\left[\deriv{\bf w}{t}-{\bf u}\times(\nabla\times{\bf w}) 
+\nabla({\bf u}{\bf\cdot}{\bf w})\right]{\bf\cdot}d{\bf x}=0, \label{eq:nl6}\\
&\frac{d\beta}{dt}=\left(\derv{t}+{\cal L}_{\bf u}\right) \beta\equiv
\left[\deriv{\boldsymbol{\Omega}} {t}- \nabla\times({\bf u}\times
\boldsymbol{\Omega})\right] {\bf\cdot}d{\bf S}=0. \label{eq:nl7}
\end{align} 
An alternative form of the conservation law (\ref{eq:nl1}) is:
\beqn
\derv{t}\left({\bf u\bf\cdot}\boldsymbol{\Omega}+\beta I_e\right)
+\nabla{\bf\cdot}
\bigl[({\bf u\cdot}\boldsymbol{\Omega}){\bf u}
+(\beta I_e){\bf u}
+\boldsymbol{\Omega}\left(h-\frac{1}{2}|{\bf u}|^2\right)
\bigr]=0, 
\label{eq:nl7a}
\eeqn
where
\beqn
\beta=r\rho, \quad I_e=\frac{\boldsymbol{\omega}{\bf\cdot}\nabla S}{\rho}
\equiv\frac{\boldsymbol{\Omega}{\bf\cdot}\nabla S}{\rho} 
\label{eq:nl7b}
\eeqn
in which $I_e$ is the Ertel invariant and the Clebsch variable $\beta$ 
satisfies the evolution equation:
\beqn
\deriv{\beta}{t}+\nabla{\bf\cdot}(\beta{\bf u})=-\rho T. \label{eq:nl7c}
\eeqn
\end{proposition}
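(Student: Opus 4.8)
The plan is to realize the nonlocal helicity density $h_{nl}=\boldsymbol{\Omega}{\bf\cdot}({\bf u}+r\nabla S)$ as the density of a Hopf-type $3$-form $\sigma\wedge d\sigma$ built from a single modified velocity $1$-form $\sigma$, and then to show that the advective derivative of this $3$-form is an exact $3$-form, hence a pure divergence. First I would introduce
\beqn
\sigma=({\bf u}+r\nabla S){\bf\cdot}d{\bf x}=\alpha+d\phi,
\eeqn
where $\alpha={\bf w}{\bf\cdot}d{\bf x}$ and the splitting $\sigma=\alpha+d\phi$ follows from ${\bf w}={\bf u}-\nabla\phi+r\nabla S$ in (\ref{eq:nl4}). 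Because the Clebsch representation (\ref{eq:nl2}) gives ${\bf w}=-\tilde{\lambda}\nabla\mu$, the form $\alpha=-\tilde{\lambda}\,d\mu$ is a product of the materially conserved $0$-form $\tilde{\lambda}$ and the advected exact $1$-form $d\mu$ (advected since $\mu$ is an advected $0$-form), so $(\partial_t+{\cal L}_{\bf u})\alpha=0$, which is (\ref{eq:nl6}). By Theorem~\ref{thm1} its exterior derivative $d\alpha=d\sigma=\boldsymbol{\Omega}{\bf\cdot}d{\bf S}$ is then also advected, which is (\ref{eq:nl7}); here $d\sigma=\boldsymbol{\Omega}{\bf\cdot}d{\bf S}$ because $\nabla\times({\bf u}+r\nabla S)=\boldsymbol{\omega}+\nabla r\times\nabla S=\boldsymbol{\Omega}$.

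The key step is the advective derivative of $\sigma$ itself. Using that $d$ commutes with $\partial_t+{\cal L}_{\bf u}$, together with the Clebsch equation $d\phi/dt=\tfrac12|{\bf u}|^2-h$ from (\ref{eq:nl3}), I would obtain
\beqn
\left(\derv{t}+{\cal L}_{\bf u}\right)\sigma
=\frac{d\alpha}{dt}+d\!\left(\frac{d\phi}{dt}\right)=df,
\qquad f=\tfrac12|{\bf u}|^2-h,
\eeqn
so the drag of $\sigma$ is an exact $1$-form. This is the place where the non-barotropic forcing $T\nabla S$ in the bare momentum equation (\ref{eq:hf5}) gets absorbed into $d\phi$ through the Clebsch variable $r$ (with $dr/dt=-T$), leaving only a gradient.

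Next I would form the $3$-form $\gamma=\sigma\wedge d\sigma$ and note, from $({\bf V}{\bf\cdot}d{\bf x})\wedge\bigl((\nabla\times{\bf V}){\bf\cdot}d{\bf S}\bigr)=({\bf V}{\bf\cdot}\nabla\times{\bf V})\,d^3x$ with ${\bf V}={\bf u}+r\nabla S$, that its density is exactly $h_{nl}$. Applying the Leibniz rule for $\partial_t+{\cal L}_{\bf u}$ and using $(\partial_t+{\cal L}_{\bf u})(d\sigma)=0$ from (\ref{eq:nl7}) gives
\beqn
\left(\derv{t}+{\cal L}_{\bf u}\right)\gamma=df\wedge d\sigma=d\left(f\,d\sigma\right),
\eeqn
the last equality because $d(d\sigma)=0$. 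By the Gauss-theorem relation in (\ref{eq:exterior1}), $d(f\,d\sigma)=\nabla{\bf\cdot}[(\tfrac12|{\bf u}|^2-h)\boldsymbol{\Omega}]\,d^3x$. On the other hand the $3$-form drag rule (\ref{eq:liedr4}) with $\rho\to h_{nl}$ gives $(\partial_t+{\cal L}_{\bf u})\gamma=[\partial_t h_{nl}+\nabla{\bf\cdot}(h_{nl}{\bf u})]\,d^3x$. Equating the two expressions and cancelling $d^3x$ yields $\partial_t h_{nl}+\nabla{\bf\cdot}[h_{nl}{\bf u}+(h-\tfrac12|{\bf u}|^2)\boldsymbol{\Omega}]=0$, which is (\ref{eq:nl1}). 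The alternative form (\ref{eq:nl7a}) then follows by the purely algebraic rewriting $h_{nl}={\bf u}{\bf\cdot}\boldsymbol{\Omega}+r(\nabla S{\bf\cdot}\boldsymbol{\Omega})={\bf u}{\bf\cdot}\boldsymbol{\Omega}+\beta I_e$, using $\nabla S{\bf\cdot}\boldsymbol{\Omega}=\nabla S{\bf\cdot}\boldsymbol{\omega}=\rho I_e$ and $\beta=r\rho$ as in (\ref{eq:nl7b}); and the auxiliary equation (\ref{eq:nl7c}) is just $\partial_t(r\rho)+\nabla{\bf\cdot}(r\rho{\bf u})=\rho\,dr/dt=-\rho T$ via mass conservation (\ref{eq:2.1}).

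I expect the main obstacle to be the exactness identity $(\partial_t+{\cal L}_{\bf u})\sigma=df$: establishing it cleanly requires checking that the Clebsch evolution equations (\ref{eq:nl3}) for $\phi,r,\tilde{\lambda},\mu$ are mutually consistent with the momentum equation, so that after assembling $\sigma={\bf w}{\bf\cdot}d{\bf x}+d\phi$ every non-gradient contribution cancels. Once this is in hand the rest is forced, since an exact drag of $\sigma$ makes the drag of $\sigma\wedge d\sigma$ exact and therefore a perfect divergence. The remaining work is routine form and vector-calculus bookkeeping: the commutation $d(\partial_t+{\cal L}_{\bf u})=(\partial_t+{\cal L}_{\bf u})d$, the fact $d(d\sigma)=0$ (which encodes $\nabla{\bf\cdot}\boldsymbol{\Omega}=0$), and the translations between wedge products and dot and cross products collected in (\ref{eq:exterior1}).
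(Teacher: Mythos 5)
Your proof is correct, and it reaches (\ref{eq:nl1}) by a genuinely different main computation than the paper's. Both arguments share the same starting point (the advection of $\alpha={\bf w}{\bf\cdot}d{\bf x}=-\tilde{\lambda}\,d\mu$ and hence of $d\alpha=\boldsymbol{\Omega}{\bf\cdot}d{\bf S}$, exactly as in (\ref{eq:nl6})--(\ref{eq:nl7})), but from there the paper proceeds in Eulerian vector calculus: it dots the momentum equation (\ref{eq:nl10c}) with $\boldsymbol{\Omega}$ and the $\boldsymbol{\Omega}$-equation with ${\bf u}$ to get a balance law with source $T\boldsymbol{\Omega}{\bf\cdot}\nabla S=\rho T I_e$, and then absorbs that source into a divergence using the continuity equation (\ref{eq:nl7c}) for $\beta=r\rho$ together with $dI_e/dt=0$, arriving first at (\ref{eq:nl7a}) and only then at (\ref{eq:nl1}). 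You instead package ${\bf u}+r\nabla S$ into the single $1$-form $\sigma=\alpha+d\phi$, observe that its advective derivative is the exact form $d(\tfrac12|{\bf u}|^2-h)$ by the Clebsch equation for $\phi$, and conclude that the drag of the Hopf $3$-form $\sigma\wedge d\sigma$ (whose density is precisely the nonlocal helicity) equals $d(f\,d\sigma)$, a pure divergence; (\ref{eq:nl1}) comes out directly and (\ref{eq:nl7a}) follows by algebra, reversing the paper's order. Your route never needs to invoke the Eulerian momentum equation or the Ertel invariant explicitly --- they are encoded in the evolution equations (\ref{eq:nl3}) for $\phi$ and $r$, whose consistency with the momentum equation is exactly the content of the Weber transformation in Section 5, so the ``main obstacle'' you flag is already settled there. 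What the paper's version buys is an explicit exhibition of the physical source term $T\boldsymbol{\Omega}{\bf\cdot}\nabla S$ and of the role of $I_e$ and $\beta$; what yours buys is a cleaner structural statement --- the nonlocal conservation law is the assertion that the Hopf $3$-form of the shifted velocity $1$-form has exact advective derivative --- which sits more naturally alongside the Lie-dragging machinery of Section 4.
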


\begin{proof}
Equation (\ref{eq:nl6}) states that 
the one-form $\alpha={\bf w}{\bf\cdot}d{\bf x}$ is Lie dragged by the 
flow. This is proved, by calculating the evolution of the forms 
${\bf u}{\bf\cdot}d{\bf x}$, $-\nabla\phi{\bf\cdot}d{\bf x}$ and 
$r\nabla S{\bf\cdot}d {\bf x}$ moving with the flow (see (\ref{eq:wt8}) 
et seq. for a proof the $d/dt({\bf w}{\bf\cdot}d{\bf x})=0$). Alternatively
one can prove (\ref{eq:nl6}) by noting ${\bf w}
\equiv -\tilde{\lambda}\nabla\mu$ and that $\tilde{\lambda}$ and $\mu$ 
are advected with the flow. It is straightforward to show:
\beqn
\deriv{\bf w}{t}-{\bf u}\times(\nabla\times{\bf w}) 
+\nabla({\bf u}{\bf\cdot}{\bf w})
=-\nabla\mu\left(\frac{\tilde{d\lambda}}{dt}\right)
-\tilde{\lambda} \nabla\left(\frac{d\mu}{dt}\right)=0, 
\label{eq:nl7d}
\eeqn
which verifies (\ref{eq:nl6}).  

Because $\alpha={\bf w}{\bf\cdot}d{\bf x}$ is an advected invariant 
one form, then $\beta=d\alpha=\boldsymbol{\Omega}{\bf\cdot}d{\bf S}$ 
is an advected invariant 2-form and hence by (\ref{eq:liedr3}) 
$\boldsymbol{\Omega}$
satisfies the equation:
\beqn
\deriv{\boldsymbol{\Omega}}{t}
-\nabla\times\left({\bf u}\times\boldsymbol{\Omega}\right)
+(\nabla{\bf\cdot}\boldsymbol{\Omega}){\bf u}=0, \label{eq:nl10b}
\eeqn
where $\nabla{\bf\cdot}\boldsymbol{\Omega}=0$. 

From (\ref{eq:hf5}), the momentum equation for the system may be written 
in the form:
\beqn
\deriv{\bf u}{t}-{\bf u}\times\boldsymbol{\omega}
+\nabla\left(\frac{1}{2}|{\bf u}|^2\right)=T\nabla S-\nabla h. 
\label{eq:nl10c}
\eeqn 
Combining (\ref{eq:nl10b}) and (\ref{eq:nl10c}) gives the equation:
\beqn
\boldsymbol{\Omega}{\bf\cdot}\left[\deriv{\bf u}{t}
-{\bf u}\times\boldsymbol{\omega}
+\nabla\left(h+\frac{1}{2}|{\bf u}|^2\right)-T\nabla S\right]
+{\bf u}{\bf\cdot}\left[\deriv{\boldsymbol{\Omega}}{t}
-\nabla\times({\bf u}\times\boldsymbol{\Omega})\right]=0. \label{eq:nl10d}
\eeqn
Equation (\ref{eq:nl10d}) reduces to: 
\beqn
\derv{t}({\bf u\cdot}\boldsymbol{\Omega})
+\nabla{\bf\cdot}\left[{\bf u}\times({\bf u}\times\boldsymbol{\Omega})\right]
+\nabla{\bf\cdot}
\left[\left(h+\frac{1}{2}|{\bf u}|^2\right)\boldsymbol{\Omega}
\right]=T\boldsymbol{\Omega}{\bf\cdot}\nabla S. \label{eq:nl10e}
\eeqn
Since $\boldsymbol{\Omega}=\boldsymbol{\omega}+\nabla r\times\nabla S$, 
 the right handside of (\ref{eq:nl10e}) may be written as:
\beqn
T\boldsymbol{\Omega}{\bf\cdot}\nabla S
=T\boldsymbol{\omega}{\bf\cdot}\nabla S=\rho T I_e, \label{eq:nl10g}
\eeqn
where $I_e=\boldsymbol{\omega}{\bf\cdot}\nabla S/\rho$ is the Ertel invariant.
Using (\ref{eq:nl7c}) in (\ref{eq:nl10g}) we obtain:
\beqn
T\boldsymbol{\Omega}{\bf\cdot}\nabla S
=-\left[\deriv{\beta}{t}+\nabla{\bf\cdot}(\beta {\bf u})\right] I_e
=-\left[\deriv{(\beta I_e)}{t}+\nabla{\bf\cdot}(\beta I_e {\bf u})\right] 
+\beta\frac{dI_e}{dt}. \label{eq:nl10h}
\eeqn
However $dI_e/dt=0$. Thus, using (\ref{eq:nl10h}) in (\ref{eq:nl10e}) 
we obtain the conservation law:
\beqn
\derv{t}\left({\bf u\bf\cdot}\boldsymbol{\Omega}+\beta I_e\right)
+\nabla{\bf\cdot}
\bigl[({\bf u\cdot}\boldsymbol{\Omega}){\bf u}
+(\beta I_e){\bf u}
+\boldsymbol{\Omega}\left(h-\frac{1}{2}|{\bf u}|^2\right)
\bigr]=0,
\label{eq:nl10i}
\eeqn
which is  (\ref{eq:nl7a}). By noting:
\beqn
\beta I_e=r\rho(\boldsymbol{\omega}{\bf\cdot}\nabla S)/\rho
=r\boldsymbol{\omega}{\bf\cdot}\nabla S
\equiv r\boldsymbol{\Omega}{\bf\cdot}\nabla S , \label{eq:nl10j} 
\eeqn
 (\ref{eq:nl10i}) reduces to (\ref{eq:nl1}). This completes the proof
\end{proof}

\begin{remark}{\bf 1}
Since $\alpha$ and $\beta$ are advected invariants, then the three form
\beqn
\gamma=\alpha\wedge\beta=({\bf w}{\bf\cdot}\boldsymbol{\Omega})\ d^3x, 
\label{eq:nl8}
\eeqn
is also an advected invariant. However 
\beqn
{\bf w}{\bf\cdot}\boldsymbol{\Omega}
=(-\tilde{\lambda}\nabla\mu){\bf\cdot}
(-\nabla\tilde{\lambda}\times\nabla\mu)=0. \label{eq:nl9}
\eeqn
Taking into account (\ref{eq:nl9}) the conservation law (\ref{eq:nl1}) 
can also be written in the form:
\beqn
\derv{t}\left(\boldsymbol{\Omega}{\bf\cdot}\nabla\phi\right) 
+\nabla{\bf\cdot}\left[{\bf u}(\boldsymbol{\Omega}{\bf\cdot}\nabla\phi) 
+\boldsymbol{\Omega}\left(h-\frac{1}{2}|{\bf u}|^2\right)\right]=0. 
\label{eq:nl10}
\eeqn
\end{remark}
\begin{remark}{\bf 2}
The conservation laws (\ref{eq:nl1}), or equivalently (\ref{eq:nl10}) 
is a nonlocal conservation law that involves the nonlocal 
potentials:
\begin{align}
&r({\bf x},t)=-\int_0^t T_0({\bf x}_0,t')\ dt'+r_0({\bf x}_0), 
\label{eq:nl11}\\
&\phi({\bf x},t)=\int_0^t \left(\frac{1}{2}|{\bf u}|^2
-h\right)({\bf x}_0,t')\ dt' 
+\phi_0({\bf x}_0), \label{eq:nl12}
\end{align}
where ${\bf x}={\bf f}({\bf x}_0,t)$ and ${\bf x}_0={\bf f}^{-1}({\bf x},t)$ 
are the Lagrangian map and the inverse Lagrangian map. The temperature 
$T({\bf x},t)=T_0({\bf x}_0,t)$ and $r_0({\bf x}_0)$ and $\phi_0({\bf x}_0)$
are 'integration constants'.
\end{remark}

\begin{remark}{\bf 3}
The conservation law (\ref{eq:nl1}) can also be written in the form:
\beqn
\deriv{D}{t}+\nabla{\bf\cdot}{\bf F}=0, \label{eq:nl13}
\eeqn
where
\begin{align}
D=&\boldsymbol{\omega}{\bf\cdot}{\bf u}
+{\bf u}{\bf\cdot}\nabla r\times\nabla S, \label{eq:nl14}\\
{\bf F}=&{\bf u}(\boldsymbol{\omega}{\bf\cdot}{\bf u})
+\boldsymbol{\omega}
\left(h-\frac{1}{2}
|{\bf u}|^2\right)\nonumber\\
&+{\bf u}\left({\bf u}{\bf\cdot}\nabla r\times\nabla S\right) 
+\left(\nabla r\times\nabla S\right)
\left(h-\frac{1}{2}
|{\bf u}|^2\right). \label{eq:nl15}
\end{align}
For barotropic or constant entropy flows $\nabla S=0$ and the conservation 
law  (\ref{eq:nl1}) reduces to the usual fluid helicity conservation form:
\beqn
\derv{t}\left(\boldsymbol{\omega}{\bf\cdot}{\bf u}\right)
+\nabla{\bf\cdot}
\bigl[{\bf u}(\boldsymbol{\omega}{\bf\cdot}{\bf u})
+\boldsymbol{\omega}
\left(h-\frac{1}{2}
|{\bf u}|^2\right)\bigr]=0 \label{eq:nl16}
\eeqn
where $h=(\varepsilon+p)/\rho$ is the entropy of the gas. 
\end{remark}

\begin{proposition}\label{propnl2}
The generalized cross helicity conservation law in MHD can be written in the 
form:
\beqn
\derv{t}\left[{\bf B}{\bf\cdot}\left({\bf u}+r\nabla S\right)\right]
+\nabla{\bf\cdot}\bigl\{ {\bf u}
\left[{\bf B}{\bf\cdot}
\left({\bf u}+r\nabla S\right)
\right]
+\left(h-\frac{1}{2}|{\bf u}|^2\right) {\bf B}\bigr\}=0, 
\label{eq:gch1}
\eeqn
where
\beqn
{\bf u}=\nabla\phi-r\nabla S-\tilde{\lambda}\nabla\mu 
-\frac{(\nabla\times\boldsymbol{\Gamma})\times{\bf B}}{\rho} 
-\boldsymbol{\Gamma}\frac{\nabla{\bf\cdot}{\bf B}}{\rho}, \label{eq:gch2}
\eeqn
is the Clebsch variable representation for the fluid velocity ${\bf u}$, 
and $r({\bf x},t)$ is the Lagrangian temperature integral (\ref{eq:nl11}) 
moving with the flow. 

In the special cases of either (\romannumeral1)\ ${\bf B}{\bf\cdot}\nabla S=0$ 
or (\romannumeral2)\ the case of a barotropic gas with $p=p(\rho)$, the conservation law (\ref{eq:gch1}) reduces to the usual cross helicity conservation law:
\beqn
\derv{t} ({\bf u}{\bf\cdot}{\bf B}) 
+\nabla{\bf\cdot}\left[{\bf u}({\bf u\cdot}{\bf B}) 
+\left(h-\frac{1}{2}|{\bf u}|^2\right) {\bf B}\right]=0, 
\label{eq:gch3}
\eeqn
In general the cross helicity conservation equation (\ref{eq:gch1}) is a 
nonlocal conservation law, in which the variable $r({\bf x},t)$ is a nonlocal 
potential given by (\ref{eq:nl11}).
\end{proposition}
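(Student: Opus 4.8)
The plan is to follow the same route as the proof of Proposition~\ref{propnl1}, with the magnetic field two-form ${\bf B}{\bf\cdot}d{\bf S}$, advected by Faraday's equation (cf.~(\ref{eq:liedr3})), playing the role taken there by the generalized vorticity two-form $\boldsymbol{\Omega}{\bf\cdot}d{\bf S}$, and the advected scalar $I_b={\bf B}{\bf\cdot}\nabla S/\rho$ playing the role of the Ertel invariant $I_e$. The starting point is the cross helicity density equation (\ref{eq:crossh7}), which already carries the source term $T{\bf B}{\bf\cdot}\nabla S$ responsible for the breakdown of local conservation when $p=p(\rho,S)$ and ${\bf B}{\bf\cdot}\nabla S\neq 0$; the entire task is to rewrite this source as a pure space-time divergence by means of the Clebsch potential $r$.

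First I would put the flux of (\ref{eq:crossh7}) into a convenient shape. Writing ${\bf E}=-{\bf u}\times{\bf B}$ and using the triple-product identity $({\bf u}\times{\bf B})\times{\bf u}={\bf B}|{\bf u}|^2-{\bf u}({\bf u}{\bf\cdot}{\bf B})$, one finds ${\bf E}\times{\bf u}+(h+\tfrac12|{\bf u}|^2){\bf B}={\bf u}({\bf u}{\bf\cdot}{\bf B})+(h-\tfrac12|{\bf u}|^2){\bf B}$, so that (\ref{eq:crossh7}) reads $\partial_t({\bf u}{\bf\cdot}{\bf B})+\nabla{\bf\cdot}[{\bf u}({\bf u}{\bf\cdot}{\bf B})+(h-\tfrac12|{\bf u}|^2){\bf B}]=T{\bf B}{\bf\cdot}\nabla S$, i.e.\ exactly (\ref{eq:gch3}) except for the source. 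Next I would record the two facts that drive the absorption: that $I_b={\bf b}{\bf\cdot}\nabla S$ with ${\bf b}={\bf B}/\rho$ is an advected scalar (since ${\bf b}$ is Lie dragged, cf.~(\ref{eq:31}), and $\nabla S{\bf\cdot}d{\bf x}$ is an advected one-form, so $dI_b/dt=0$ by Theorem~\ref{thm3}), and that the Clebsch variable $\beta=r\rho$ obeys $\partial_t\beta+\nabla{\bf\cdot}(\beta{\bf u})=-\rho T$ (eq.~(\ref{eq:Clebsch7}), equivalently $dr/dt=-T$).

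The core computation then rewrites the source. Since ${\bf B}{\bf\cdot}\nabla S=\rho I_b$, I would substitute $\rho T=-[\partial_t\beta+\nabla{\bf\cdot}(\beta{\bf u})]$ to get $T{\bf B}{\bf\cdot}\nabla S=-[\partial_t\beta+\nabla{\bf\cdot}(\beta{\bf u})]I_b$, and then move $I_b$ inside the derivatives at the cost of the term $\beta\,dI_b/dt$, which vanishes. This gives $T{\bf B}{\bf\cdot}\nabla S=-\partial_t(\beta I_b)-\nabla{\bf\cdot}(\beta I_b{\bf u})=-\partial_t(r{\bf B}{\bf\cdot}\nabla S)-\nabla{\bf\cdot}(r{\bf B}{\bf\cdot}\nabla S\,{\bf u})$, using $\beta I_b=r{\bf B}{\bf\cdot}\nabla S$. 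Carrying this to the left-hand side of the rewritten (\ref{eq:crossh7}) and grouping via ${\bf u}{\bf\cdot}{\bf B}+r{\bf B}{\bf\cdot}\nabla S={\bf B}{\bf\cdot}({\bf u}+r\nabla S)$ and ${\bf u}({\bf u}{\bf\cdot}{\bf B})+r{\bf B}{\bf\cdot}\nabla S\,{\bf u}={\bf u}[{\bf B}{\bf\cdot}({\bf u}+r\nabla S)]$ yields precisely (\ref{eq:gch1}). The two special cases are then immediate: if ${\bf B}{\bf\cdot}\nabla S=0$, or if the gas is barotropic so that the $T\nabla S$ term is absent from the momentum equation, both the source and the extra $r{\bf B}{\bf\cdot}\nabla S$ contribution drop out, returning the local law (\ref{eq:gch3}).

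The main obstacle, such as it is, lies in the source-term absorption rather than in any deep structural point: one must verify $dI_b/dt=0$ carefully (the cross-helicity analogue of $dI_e/dt=0$ used in Proposition~\ref{propnl1}) and keep the signs straight through the flux simplification and the substitution of the $\beta$ evolution equation, since a sign slip in either the triple-product identity or the Clebsch relation $\partial_t\beta+\nabla{\bf\cdot}(\beta{\bf u})=-\rho T$ would spoil the exact cancellation that produces the conservation form.
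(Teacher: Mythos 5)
Your proposal is correct and follows essentially the same route as the paper: the paper likewise starts from the cross helicity equation (\ref{eq:crossh7}) with source $T{\bf B}{\bf\cdot}\nabla S$, writes it as $\rho T\psi$ with $\psi={\bf B}{\bf\cdot}\nabla S/\rho$ advected (your $I_b$), and absorbs the source into a space-time divergence using $\partial_t(\rho r)+\nabla{\bf\cdot}({\bf u}\rho r)=-\rho T$ (your $\beta=r\rho$ relation), exactly as you do. Your explicit verification of the flux simplification via the triple-product identity and of $dI_b/dt=0$ via Theorem~\ref{thm3} merely fills in steps the paper takes for granted.
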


\begin{proof}
The simplest approach is to start from the cross helicity equation 
(\ref{eq:crossh7}) with source term $T{\bf B}{\bf\cdot}\nabla S$, i.e., 
\beqn
\derv{t}\left({\bf u}{\bf\cdot}{\bf B}\right)+\nabla{\bf\cdot}\left[({\bf u}{\bf\cdot}{\bf B}){\bf u}+\left(h-\frac{1}{2} |{\bf u}|^2\right) {\bf B}\right]
=T{\bf B}{\bf\cdot}\nabla S, \label{eq:gch3a}
\eeqn
 and then 
show that the source term can be written as a pure space and time 
divergence term. First note that 
\beqn
T{\bf B}{\bf\cdot}\nabla S=\rho T\frac{{\bf B\cdot}\nabla S}{\rho}
=\rho T\psi \quad\hbox{where}\quad 
\psi=\frac{{\bf B}{\bf\cdot}\nabla S}{\rho}. 
\label{eq:gch4}
\eeqn
Note that $d\psi/dt=0$ as $\psi$ is an advected invariant. 
Using the Eulerian mass continuity equation and (\ref{eq:nl3}) 
we obtain:
\beqn
\deriv{(\rho r)}{t}+\nabla{\bf\cdot}({\bf u}\rho r)=\rho \frac{dr}{dt}
=-\rho T. \label{eq:gch5}
\eeqn
 It follows from (\ref{eq:gch4}) and (\ref{eq:gch5}) that 
$T{\bf B}{\bf\cdot}\nabla S=\rho T\psi$ and hence:
\begin{align}
T{\bf B}{\bf\cdot}\nabla S=&-\psi
\bigl\{\derv{t}(\rho r)
+\nabla{\bf\cdot}({\bf u}\rho r)\bigr\}\nonumber\\
\equiv&-\bigl\{ \derv{t}(\rho r\psi)
+\nabla{\bf\cdot}({\bf u}\rho r\psi)\big\}\nonumber\\
\equiv&-\bigl\{\derv{t}(r{\bf B}{\bf\cdot}\nabla S)
+\nabla{\bf\cdot}[{\bf u}(r{\bf B}{\bf\cdot}\nabla S)]\bigr\}. 
\label{eq:gch6}
\end{align}
Use of (\ref{eq:gch6}) in (\ref{eq:gch3a}) then gives the conservation law 
(\ref{eq:gch1}). This completes the proof.
\end{proof}

\section{Concluding Remarks}

The main aim of the present paper is to provide an overview of the 
Lie dragging  and conservation laws associated with 
fluid relabelling symmetries in MHD and fluid dynamics. Two notable new 
results in the paper are the generalization of the helicity 
conservation equation in ideal fluid mechanics and the generalization of the 
cross helicity conservation law in ideal MHD. In most derivations of 
these conservation laws it is assumed either that (\romannumeral1)\ the gas
is isentropic and the gas pressure is isobaric or (\romannumeral2)\ in the case of cross helicity conservation law in MHD $p=p(\rho,S)$ and 
${\bf B}{\bf\cdot}\nabla S=0$, meaning that the magnetic field lies in 
the $S=const.$ surfaces; in MHD the assumption $p=p(\rho)$ 
also leads to the cross-helicity conservation law. The 
assumptions (\romannumeral1) and (\romannumeral2) ensure that source terms 
dependent on $\nabla S$ 
vanish. The resultant conservation laws are local, meaning that the 
conserved density $D$ and flux ${\bf F}$ in the conservation law depend 
only on the local variables $(\rho,{\bf u}, {\bf B}, S,T)$.
A preliminary account of advected invariants in MHD using the ideas of 
Lie dragging is given in Webb et al. (2013). 
It turns out that one can obtain a nonlocal version of the fluid helicity 
conservation equation by using Clebsch variables to describe the fluid. 
In this formulation the fluid velocity is represented in terms of the 
Clebsch potentials by the formula:
\beqn
{\bf u}=\nabla\phi-r\nabla S-\tilde{\lambda} \nabla \mu, \label{eq:concl1}
\eeqn
(the usual Clebsch variables used in the variational principle are 
$\phi$, $\beta=r\rho$ and $\lambda=\tilde{\lambda}\rho$ (e.g. Zakharov 
and Kuznetsov (1997)). The resultant helicity conservation 
equation (\ref{eq:nl1}) applies for a general non-isobaric 
equation of state for the gas (i.e. 
$p=p(\rho,S)$, but also involves the nonlocal Clebsch potentials 
\begin{align}
r({\bf x},t)=&-\int_0^t T_0({\bf x}_0,t')\ dt' +r_0({\bf x}_0), \nonumber\\
\phi({\bf x},t)=&\int_0^t \ \left(\frac{1}{2} |{\bf u}|^2-h\right) 
({\bf x}_0,t')\ dt' +\phi_0({\bf x}_0) \label{eq:concl2}
\end{align}
where $T_0({\bf x}_0,t)=T({\bf x},t)$ is the temperature of the gas 
and ${\bf x}={\bf x}({\bf x}_0,t)$ is the solution of the differential 
equation system $d{\bf x}/dt={\bf u}({\bf x},t)$ with ${\bf x}={\bf x}_0$ 
at time $t=0$ (this is technically referred to as the Lagrangian map). 
Here $r_0({\bf x}_0)$ and $\phi_0({\bf x}_0)$ are integration `constants'.  
A similar non-local conservation law (\ref{eq:gch1}) for the cross helicity 
is obtained by using the Clebsch variables appropriate for MHD.

An alternative account of MHD conservation laws, Lie symmetries 
and variational methods is 
to use the Euler Poincar\'e approach 
to Noether's theorems  adopted by Cotter and Holm (2012) 
(see also Holm et al. (1998)). The Euler-Poincar\'e variational 
approach takes into account known symmetries of the Lagrangian
and uses Eulerian variations of the action.  
In the case of Lagrangian fluid mechanics the Lagrangian map 
${\bf x}={\bf x}({\bf x}_0,t)\equiv g{\bf x}_0$ can be thought of 
as a group of diffeomorphisms that map the Lagrange labels ${\bf x}_0$ 
onto the Eulerian position of the fluid element ${\bf x}$. Note 
that the group element $g$ has inverse element $g^{-1}$ 
where ${\bf x}_0=g^{-1}{\bf x}$, 
provided the Jacobian of the map is non-zero and bounded, 
and that the identity element $e$ corresponds to the transformation
${\bf x}=e{\bf x}_0={\bf x}_0$.
The use of Lie symmetries for differential equations and 
Noether's theorems are described in standard texts (e.g. Olver (1993)). 

The relationship between the helicity and cross helicity conservation  
laws for barotropic and non-barotropic equations of state for the gas, 
will be investigated in a companion paper (paper II) using Noether's theorems, 
fluid relabelling symmetries and gauge transformations. The relationship 
between the fluid relabelling symmetries and the Casimir invariants 
(e.g. Padhye and Morrison (1996a,b), Padhye (1998)) will also be  
investigated in paper II.  
 
Other approaches to conservation laws and Noether's theorems 
may be useful in future analyses. Anco and Bluman (2002) 
have developed a method to determine conservation laws 
of a system of partial differential equations that 
does not invoke Noether's theorem and a variational 
formulation of the equations 
(see also Bluman, Cheviakov and Anco (2010)). 
Noether's theorems and conservation laws using the method 
of moving frames has been developed by Goncalves and Mansfield (2012). 
This approach investigates the mathematical 
structure behind the Euler Lagrange 
equations. They give examples of variational problems that are invariant 
under semi-simple Lie algebras. The method of moving 
frames and its relation 
to Lie pseudo algebras was developed by Fels and Olver (1998).  

\section*{Aknowledgements}
GMW acknowledges stimulating discussions of MHD conservation laws with
Darryl Holm. {\bf We also acknowledge discussion of the MHD Poisson bracket 
with Phil. Morrison.}  We acknowledge a thorough report by one of the referees, 
which resulted in a much improved presentation.  
QH was supported in
part by NASA grants NNX07AO73G and NNX08AH46G. B. Dasgupta was
supported in part by an internal UAH grant. GPZ was supported
in part by NASA grants
NN05GG83G and NSF grant
nos. ATM-03-17509 and ATM-04-28880.
JFMcK acknowledges support by the NRF of South Africa.

\section*{References}
\begin{harvard}



\item[]
Akhmetiev, P. and Ruzmaikin, A. 1995, A fourth order topological 
invariant of magnetic or vortex lines, {\it J. Geom.  Phys.}, 
{\bf 15}, 95-101. 

\item[]
Anco, S. and Bluman G 2002, Direct construction method for conservation laws
of partial differential equations. Part I: examples of conservation laws 
classifications. Part II: general treatment, {\it Eur. J. Appl. Math.},
 {\bf 13}, 545-66, 567-85.

\item[]
Arnold, V. I. and Khesin, B. A. 1998, Topological Methods in Hydrodynamics, 
Springer, New York.


\item[]
\bibitem[{\it Berger}(1990)]{Berger90}
Berger, M. A. 1990, Third -order link integrals, {\it J. Phys. A: Math. Gen.}, 
{\bf 23}, 2787-2793.

\item[] 
Berger, M. A. and Field, G. B., 1984,
The topological properties of magnetic helicity, {\it J. Fluid. Mech.},
 {\bf 147}, 133-148.


\item[]
Berger, M. A. and Ruzmaikin, A. 2000, Rate of helicity production by
solar rotation, {\it J. Geophys. Res.}, {\bf 105}, (A5), p. 10481-10490.


\item[]
Bieber, J. W., Evenson, P. A. and Matthaeus, W.H., 1987, Magnetic helicity of
the Parker field, {\it Astrophys. J.}, 315, 700.


\item[]
 Bluman, G. W., Cheviakov, A. F. and Anco, S. 2010, Applications 
of Symmetry Methods to Partial Differential Equations, Springer: New York. 

\item[]
Bott, R. and Tu, L. W.
1982, \textit{Differential Forms in Algebraic Topology}, Springer: New York.







\item[]
Chandre, C., de Guillebon, L., Back, A. Tassi, E. and Morrison, P. J. 2013, On the use of projectors for Hamiltonian systems and their relationship with
Dirac brackets, {\it J. Phys. A, Math. and theoret.}, {\bf 46}, 125203 (14pp),
doi:10.1088/1751-8113/46/12/125203.

\item[]
Cotter, C. J., Holm, D. D., and Hydon, P. E., 2007, Multi-symplectic 
formulation of fluid dynamics using the inverse map, 
{\it Proc. Roy. Soc. Lond.} A, {\bf 463}, 2617-2687 (2007).

\item[]
Cotter, C. J. and Holm, D.D. 2012, On Noether's theorem for 
Euler Poincar\'e equation on the diffeomorphism group 
with advected quantities, {\it Foundations 
of Comput. Math.}, doi:10.1007/S10208-012-9126-8




\item[]
Els\"asser, W. M. 1956, Hydrodynamic dynamo theory,
{\it Rev. Mod. Phys.},{\bf 28}, 135.

\item[]
Fecko, M. 2006, Differential Geometry and Lie Groups for Physicists,
 Cambridge Univ. Press, U.K.

\item[]
Fels, M. and Olver, P. J. 1998, Moving frames I, 
{\it Acta Appl. Math.}, {\bf 51}, 161-312. 


\item[]
Finn, J. H. and Antonsen, T. M. 1985,  Magnetic helicity: what is it and what is it good for?, {\it Comments on Plasma Phys. and Contr.
Fusion}, {\bf 9}(3), 111.

\item[]
Finn, J. M. and Antonsen, T. M. 1988, Magnetic helicity injection for
configurations with field errors, {\it Phys. Fluids}, {\bf 31} (10), 3012-3017.

\item[]
Flanders, H. 1963,
\textit{Differential Forms with Applications to the Physical Sciences},
Academic: New York.

\item[]
Frankel, T. 1997, The Geometry of Physics, An Introduction, 
Cambridge University Press, U.K.



\item[]
Goncalves, T.M.N. and Mansfield, E.L. 2012, On moving frames and Noether's 
conservation laws, {\it SIAM}, {\bf 128},Issue 1, 
pp. 1-29, doi:10.1111/j.1467-9590.2011.00522.x





\item[]
Harrison, B. K. and Estabrook, F. B., 1971, Geometric approach to invariance 
groups and solution of partial differential systems, {\it J. Math. Phys.},
 {\bf 12}, 653-66.






\item[]
Holm, D. D. 2008a, Geometric Mechanics, Part I: Dynamics and Symmetry,
 Imperial College Press, London, U.K.,
distributed by World Scientific.

\item[]
Holm, D. D. 2008b, Geometric Mechanics, Part II: Rotating, Translating
and Rolling, Imperial College Press, London, U.K.,
distributed by World Scientific.

\item[] 
Holm, D. D. and Kupershmidt, B. A. 1983a, Poisson brackets and Clebsch 
representations for magnetohydrodynamics, multi-fluid plasmas and 
elasticity, {\it Physica D}, {\bf 6D}, 347-363.

\item[]
Holm, D. D. and Kupershmidt, B. A. 1983b, noncanonical Hamiltonian formulation of ideal magnetohydrodynamics, {\it Physica D}, {\bf 7D}, 330-333.

\item[]
Holm, D.D., Marsden, J.E. and Ratiu, T.S. 1998, The Euler-Lagrange equations and semi-products with application to continuum theories, {\it Adv. Math.},
{\bf 137}, 1-81.

\item[]
Hollmann, G. H. 1964, {\it Arch. Met., Geofys. Bioklimatol.}, A14, 1.

\item[]
Hydon, P. E. and Mansfield, E. L. 2011, Extensions of Noether's second theorem: from continuous to discrete systems, {\it Proc. Roy. Soc. A}, 
{\bf 467}, pp. 3206-3221, doi:10.1098/rspa.2011.0158.




\item[]
Kamchatnov, 1982, Topological soliton in magnetohydrodynamics, 
{\it Sov. Phys.}, {\bf 55}, (1), 69-73.

\item[]
Kats, A. V. 2003, Variational principle in canonical variables, Weber 
transformation and complete set of local integrals of motion 
for dissipation-free magnetohydrodynamics, {\it JETP Lett.}, {\bf 77}, 
No. 12, 657-661.


\item[]
Kruskal, M. D. and Kulsrud, R. M. 1958, Equilibrium of a magnetically confined 
plasma in a toroid, {\it Phys. fluids}, {\bf 1}, 265.

\item[]
Kuznetsov, E. A. 2006, Vortex line representation for hydrodynamic type 
equations, {\it J. Nonl. Math. Phys.}, {\bf 13}, No. 1, 64-80. 

\item[]
Kuznetsov, E. A. and Ruban, V.P. 1998, Hamiltonian dynamics of vortex lines 
in hydrodynamic type systems, {\it JETP Lett.}, {\bf 67}, No. 12, 1076-1081.

\item[]
Kuznetsov, E.A. and Ruban, V.P. 2000, Hamiltonian dynamics of vortex and 
magnetic lines in hydrodynamic type systems, {\it Phys. Rev. E}, {\bf 61},
No. 1, 831-841. 

\item[]
Longcope, D. W. and Malanushenko, A. 2008, Defining and calculating
self-helicity in coronal magnetic fields, {\it Astrophys. J.}, {\bf 674},
1130-1143.

\item[]
Low, B. C. 2006, Magnetic helicity in a two-flux partitioning of an ideal hydromagnetic fluid, {\it Astrophys. J.}, {\bf 646}, 1288-1302.

\bibitem[{\it Marsden and Ratiu} (1994)]{Marsden94}
Marsden J. E. and Ratiu T.S. 1994, Introduction to Mechanics and Symmetry,
New York,: Springer Verlag.

\item[]
Matthaeus, W.H. and Goldstein, M.L. 1982,
Measurement of the Rugged Invariants of Magnetohydrodynamic Turbulence
in the Solar Wind,
{\it J. Geophys. Res.}, {\bf 87}, A8, 6011-6028.

\item[]
Misner, C.W., Thorne, K.S. and Wheeler, J.A. 1973, {\it Gravitation},
San Francisco: W.H. Freeman.

\item[]
Moffatt, H. K. 1969, The degree of knottedness of tangled vortex lines,
{\it J. Fluid. Mech.}, {\bf 35}, 117.

\item[]
Moffatt,  H. K.  1978, \textit{Magnetic field Generation in Electrically
Conducting Fluids}, Cambridge Univ. Press, Cambridge U.K.

\item[]
Moffatt, H.K. and Ricca, R.L. 1992, Helicity and the Calugareanu invariant,
{\it Proc. Roy. Soc. London, Ser. A}, {\bf 439}, 411.

\item[]
Moiseev, S. S., Sagdeev, R. Z., Tur, A. V. and Yanovsky, V. V. 1982,
On the freezing-in integrals and Lagrange invariants in hydrodynamic models,
{\it Sov. Phys. JETP}, {\bf 56} (1), 117-123.

\item[]
Morrison, P. J. 1982, Poisson brackets for fluids and plasmas, in Mathematical 
Methods in Hydrodynamics and Integrability of Dynamical Systems,
 {\it AIP Proc. Conf.}, {\bf 88}, ed M. Tabor and Y. M. Treve, pp 13-46.

\item[]
Morrison, P.J. and Greene, J.M. 1980, Noncanonical Hamiltonian density
formulation of hydrodynamics and ideal magnetohydrodynamics,
 {\it Phys. Rev. Lett.}, {\bf 45}, 790-794.

\item[]
Morrison, P.J. and Greene, J.M. 1982, Noncanonical Hamiltonian density
formulation of hydrodynamics and ideal magnetohydrodynamics, (Errata),
 {\it Phys. Rev. Lett.}, {\bf 48}, 569.

\item[]
Newcomb, W. A. 1962, Lagrangian and Hamiltonian methods 
in magnetohydrodynamics, {\it Nucl. Fusion Suppl.}, Part 2, 451-463. 


\item[]
Olver, P. J. 1993,
\textit{Applications of Lie groups to Differential Equations}, 2nd Edition
(New York: Springer).

\item[]
Padhye, N.S. 1998, Topics in Lagrangian and Hamiltonian Fluid Dynamics: Relabeling Symmetry and Ion Acoustic Wave Stability, {\it Ph. D. Dissertation},
University of Texas at Austin.

\item[]
Padhye, N. S. and Morrison, P. J. 1996a, Fluid relabeling symmetry,
 {\it Phys. Lett. A}, {\bf 219}, 287-292.

\item[]
Padhye, N. S. and Morrison, P. J. 1996b, Relabeling symmetries in hydrodynamics 
and magnetohydrodynamics, {\it Plasma Phys. Reports}, {\bf 22}, 869-877.

\item[]
Parker, E. N. 1979, Cosmic Magnetic Fields, Oxford Univ. Press, New York.



\item[]
Rosner, R., Low, B. C., Tsinganos, K., and Berger, M. A., 1989,
On the relationship between the topology of magnetic field lines and
flux surfaces, {\it Geophys. Astrophys. Fluid Dynamics}, {\bf 48}, 251-271.

\item[]
Ruzmaikin, A., and Akhmetiev, P. 1994, Toplogical invariants of magnetic fields and the effect of reconnections, {\it Phys. Plasmas}, {\bf 1}, 331-336..

\item[]
Salmon, R. 1982, Hamilton's principle and Ertel's theorem, 
{\it AIP Conf. Proc.}, {\bf 88}, 127-135.

\item[]
Salmon, R., 1988, Hamiltonian fluid mechanics, {\it Annu. Rev. Fluid Mech.},
 {\bf 20}, 225-256.

Schutz, B. 1980, Geometrical Methods in Mathematical Physics, Cambridge 
University Press. 

\item[]
Sneddon, I. N. 1957, Elements of Partial Differential Equations, Internat. 
Student Edition, McGraw Hill, New York.

\item[]
Semenov, V. S., Korvinski, D. B., and Biernat, H.K. 2002, Euler potentials for 
the MHD Kamchatnov-Hopf soliton solution, {\it Nonlinear Processes in 
Geophysics}, {\bf 9}, 347-354. 


\item[]
Tur, A. V. and Yanovsky, V. V. 1993, Invariants in disspationless 
hydrodynamic media, {\it J. Fluid Mech.}, {\bf 248}, 
Cambridge Univ. Press, p67-106.



\item[]
Volkov, D. V., Tur, A. V. and Janovsky, V.V. 1995, Hidden supersymmetry of 
classical systems (hydrodynamics and conservation laws), {\it Phys. Lett. A}, 
{\bf 203}, 357-361.







\item[]
Webb, G. M., Hu, Q., Dasgupta, B., and Zank, G.P. 2010a,
Homotopy
formulas for the magnetic vector potential and magnetic helicity:
The Parker spiral interplanetary magnetic field and magnetic flux ropes,
{\it J. Geophys. Res.}, (Space Physics), {\bf 115}, A10112,
doi:10.1029/2010JA015513; Corrections: {\it J. Geophys. Res.}, {\bf 116}, 
A11102, doi:10.1029/2011JA017286, 22nd November 2011.

\item[]
Webb, G. M., Hu, Q., Dasgupta, B., Roberts, D.A., and Zank, G.P. 2010b,
Alfven simple waves: Euler potentials and magnetic helicity,
 {\it Astrophys. J.}, {\bf 725}, 2128-2151,
doi:10.1088/0004-637X/725/2/2128.

\item[]
Webb, G. M., Hu, Q., McKenzie, J.F., Dasgupta, B. and Zank, G.P. 2013, 
Advected invariants in MHD and gas dynamics, {\it 12th Internat. Annual 
Astrophysics Conf.}, Myrtle Beach SC, April 15-19, 2013, (submitted). 




\item[]
Woltjer, L. 1958, A theorem on force-free magnetic fields,
{\it Proc. Nat. Acad. Sci.}, {\bf 44}, 489.



\item[]
Yahalom, A. 2013, Ahronov-Bohm effect in magnetohydrodynamics, 
{\it Phys. Lett. A}, {\bf 377}, 1898-1904. 

\item[]
Yahalom, A. and Lynden-Bell, D. 2008, Simplified variational principles 
for barotropic magnetohydrodynamics, {\it J. Fluid Mech.}, {\bf 607}, 235-265.

\item[] 
Zakharov, V. E. and Kuznetsov, E.A. 1997, Hamiltonian formalism for 
nonlinear waves, {\it Physics-Uspekhi}, {\bf 40}, (11), 1087-1116.
\end{harvard}

\end{document}